\newcommand{\D}{\mathcal{D}}
\newcommand{\R}{\mathbb{R}}
\renewcommand{\S}{\mathcal{S}}
\newcommand{\E}{\mathbb{E}}
\renewcommand{\P}{\mathrm{P}}
\newcommand{\Var}{\mathrm{Var}}
\newcommand{\x}{\mathbf{x}}
\newcommand{\y}{\mathbf{y}}
\newcommand{\z}{\mathbf{z}}
\newtheorem{theorem}{Theorem}
\newtheorem{lemma}{Lemma}
\title{Supplementary Materials}
\author{Liudmila Prokhorenkova\footnotemark[1] \\
	Yandex\\
	Moscow Institute of Physics and Technology\\
	\texttt{ostroumova-la@yandex-team.ru} \\
	% examples of more authors
	\And
	Aleksandr Shekhovtsov\footnotemark[1] \\
	Yandex \\
	\texttt{shekhovtsovalex@gmail.com} \\
}
\begin{document}

\maketitle

\section{Analysis of spherical caps}\label{app::sec:caps}

In this section, we formulate some technical results on the volumes of spherical caps and their intersections, which we extensively use in the proofs. Although they are similar to those formulated in~\cite{becker2016new}, it is crucial for our problem that parameters defining spherical caps depend on $d$ and may tend to zero (both in dense and sparse regimes), while the results in~\cite{becker2016new} hold only for fixed parameters. Also, in Lemma~\ref{app::lem:W}, we extend the corresponding result from~\cite{becker2016new}, as discussed further in this section.
	
\subsection{Volumes of spherical caps}\label{app::sec:lem1}

Let us denote by $\mu$ the Lebesgue measure over $\R^{d+1}$. By $C_\x(\gamma)$ we denote a spherical cap of height $\gamma$ centered at $\x \in \S^d$, i.e., $\{\y \in \S^d: \langle \x, \y \rangle \ge \gamma\}$; $C(\gamma) = \mu\left(C_\x(\gamma)\right)$ denotes the volume of a spherical cap of height $\gamma$. Recall that throughout the paper for any variable $\gamma$, $0 \le \gamma \le 1$, we let $\hat \gamma := \sqrt{1 - \gamma^2}$. 
We prove the following lemma.

\begin{lemma}\label{app::lem:C}
Let $\gamma = \gamma(d)$ be such that $0 \le \gamma \le 1$. Then
\[
\Theta\left( d^{-1/2} \right)  \hat\gamma^{d} \le C\left(\gamma\right) \le \Theta\left( d^{-1/2} \right)  \hat\gamma^{d} \min \left\{d^{1/2}, \frac{1}{\gamma} \right\}\,.
\]
\end{lemma}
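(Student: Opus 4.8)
The plan is to reduce $C(\gamma)$ to a one–dimensional integral and then estimate that integral by a single well‑chosen substitution. Fix $\x\in\S^d$ and parametrise a point $\y\in\S^d$ by its ``height'' $t=\langle\x,\y\rangle\in[-1,1]$ together with a residual direction in $\S^{d-1}$: the slice at height $t$ is a $(d-1)$–sphere of radius $\sqrt{1-t^2}$, so its $(d-1)$–volume is proportional to $(1-t^2)^{(d-1)/2}$, and accounting for the arc–length factor $(1-t^2)^{-1/2}$ in the height direction one gets $d\mu\propto(1-t^2)^{(d-2)/2}\,dt$ along $t$. Integrating out the residual direction gives
\[
C(\gamma)=\frac{\int_\gamma^1 (1-t^2)^{(d-2)/2}\,dt}{\int_{-1}^1 (1-t^2)^{(d-2)/2}\,dt},
\]
and the normalising denominator equals $B(\tfrac12,\tfrac d2)=\sqrt\pi\,\Gamma(\tfrac d2)/\Gamma(\tfrac{d+1}{2})=\Theta(d^{-1/2})$ by Stirling's formula (it is a positive constant for each small $d$, and $\sim\sqrt{2\pi/d}$ as $d\to\infty$). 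So it suffices to show $d^{-1}\hat\gamma^{\,d}\le\int_\gamma^1 (1-t^2)^{(d-2)/2}\,dt\le\Theta(d^{-1})\,\hat\gamma^{\,d}\min\{d^{1/2},1/\gamma\}$.

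The key move is the substitution $t=\sqrt{1-\hat\gamma^{2}u^{2}}$: then $t=\gamma\leftrightarrow u=1$, $t=1\leftrightarrow u=0$, $1-t^{2}=\hat\gamma^{2}u^{2}$, and $dt=-\hat\gamma^{2}u\,(1-\hat\gamma^{2}u^{2})^{-1/2}\,du$, so the dominant factor is pulled out exactly:
\[
\int_\gamma^1 (1-t^2)^{(d-2)/2}\,dt=\hat\gamma^{\,d}\int_0^1\frac{u^{d-1}}{\sqrt{1-\hat\gamma^{2}u^{2}}}\,du=:\hat\gamma^{\,d}\,K.
\]
It then remains to sandwich $K$ by elementary bounds on its denominator. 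Since $\sqrt{1-\hat\gamma^{2}u^{2}}\le 1$ we get $K\ge\int_0^1 u^{d-1}\,du=1/d$. For the upper bound I will use two estimates of the same denominator: $u\le 1$ forces $1-\hat\gamma^{2}u^{2}\ge 1-\hat\gamma^{2}=\gamma^{2}$, hence $K\le\tfrac1\gamma\int_0^1 u^{d-1}\,du=1/(\gamma d)$; and $\hat\gamma\le 1$ forces $1-\hat\gamma^{2}u^{2}\ge 1-u^{2}$, hence $K\le\int_0^1 u^{d-1}(1-u^{2})^{-1/2}\,du=\tfrac12 B(\tfrac d2,\tfrac12)=\Theta(d^{-1/2})$. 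Combining, $1/d\le K\le\min\{\Theta(d^{-1/2}),\,1/(\gamma d)\}=\Theta(d^{-1})\min\{d^{1/2},1/\gamma\}$.

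Assembling everything, $C(\gamma)=\Theta(d^{1/2})\,\hat\gamma^{\,d}K$, so the lower bound on $K$ yields $C(\gamma)\ge\Theta(d^{-1/2})\hat\gamma^{\,d}$ and the upper bound yields $C(\gamma)\le\Theta(d^{-1/2})\hat\gamma^{\,d}\min\{d^{1/2},1/\gamma\}$, which is exactly the claim; the boundary cases $\gamma=0$ (use only the bound $K\le\Theta(d^{-1/2})$, consistent with $\min\{d^{1/2},1/\gamma\}=d^{1/2}$) and $\gamma=1$ ($\hat\gamma=0$, both sides zero) are immediate, and the mild endpoint singularity of the substitution when $\gamma=0$ is harmless since the resulting integral converges. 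I do not expect a genuine obstacle: the only thing requiring care — and it is precisely what separates this from the fixed‑parameter statements in~\cite{becker2016new} — is that every estimate must be uniform over $\gamma\in[0,1]$ and over $d$, but each of the three bounds on $K$ above, as well as the Stirling estimate of $B(\tfrac12,\tfrac d2)$, is manifestly uniform, and only finitely many $\Theta(\cdot)$'s are composed, so uniformity is preserved.
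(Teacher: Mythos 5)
Your proof is correct, and it takes a genuinely different (and more elementary) route than the paper's. You reduce $C(\gamma)$ to the one-dimensional latitude integral $\int_\gamma^1(1-t^2)^{(d-2)/2}\,dt$ normalised by $B(\tfrac12,\tfrac d2)=\Theta(d^{-1/2})$, and the substitution $t=\sqrt{1-\hat\gamma^2u^2}$ cleanly extracts the factor $\hat\gamma^{\,d}$, after which all three bounds on $K=\int_0^1 u^{d-1}(1-\hat\gamma^2u^2)^{-1/2}\,du$ are immediate and manifestly uniform in $\gamma$ and $d$. The paper instead projects the cap onto a two-dimensional plane containing $\x$, writes $C(\gamma)$ via the angular extent $g(r)=2\arccos(\gamma/r)$, substitutes $t=(1-r^2)/\hat\gamma^2$ (your $u^2$), and then needs the sandwich $x\le\arcsin x\le\tfrac\pi2 x$ plus bounds on $\sqrt{(1-t)/(1-\hat\gamma^2t)}$ to land on the same Beta-function estimates. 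The two computations are essentially reparametrisations of one another, but your version avoids the $\arccos$/$\arcsin$ step entirely; what the paper's heavier two-dimensional setup buys is reuse: the same projection and the same integral representation are the backbone of Lemma~\ref{app::lem:W} on intersections of caps, where a genuinely two-dimensional region must be described by $g(r)=g_\alpha(r)+g_\beta(r)$ and a one-dimensional latitude decomposition would no longer suffice. As a standalone proof of Lemma~\ref{app::lem:C}, yours is complete and arguably cleaner; the only point worth stating explicitly is that the bound $K\le 1/(\gamma d)$ is used only for $\gamma>0$, which you already handle by noting that for $\gamma=0$ the minimum is attained by $d^{1/2}$.
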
 

\begin{proof}
	
In order to have similar reasoning with the proof of Lemma~\ref{app::lem:W}, we consider any two-dimensional plane containing the vector $\x$ defining the cap $C_{\x}(\gamma)$ and let $p$ denote the orthogonal projection from $\S^{d}$ to this two-dimensional plane. 
	
The first steps of the proof are similar to those in~\cite{becker2016new} (but note that we analyze $\S^d$ instead of $S^{d-1}$, which leads to slightly simpler expressions). Consider any measurable subset $U$ of the two-dimensional unit ball, then the volume of the preimage $p^{-1}(U)$ (relative to the volume of $\S^{d}$) is:

\begin{equation*}
I(U) 
= \int_{r,\phi \in U} \frac{\mu(S^{d-2})}{\mu(S^{d})} \left(\sqrt{1-r^2}\right)^{d-3}r \, dr \, d\phi \,.
\end{equation*}

We define a function $g(r) = \int_{\phi: (r,\phi) \in U} d\phi$, then we can rewrite the integral as
\begin{equation*}
I(U) 
= \frac{(d-1)}{4\,\pi} \int_{0}^1 \left(1-r^2\right)^{(d-3)/2} g(r)\, dr^2\,.  
\end{equation*}

Let $U = p\left(C_{\x}(\gamma)\right)$, then, using $t = \left(1 - r^2\right)/\hat\gamma^2$, where $\hat\gamma = \sqrt{1 - \gamma^2}$, we get 
\begin{equation}\label{app::eq:int_general}
C(\gamma) = \frac{(d-1)}{4\,\pi} \int_{\gamma}^1 \left(1-r^2\right)^{(d-3)/2} g(r)\, dr^2  \\
=  \frac{(d-1)\,\hat\gamma^{d-1}}{4\,\pi} \int_{0}^{1}  g\left(\sqrt{1-\hat\gamma^2t}\right) t^{(d-3)/2}\, dt\,.
\end{equation}
	
Note that from Equation~\eqref{app::eq:int_general} we get thvolume of a hemisphere is $C(0) = 1/2$, since $g(r) = \pi$ for all $r$ in this case and $\hat \gamma = 1$. 

Now we consider an arbitrary $\gamma \ge 0$ and note that $g(r) = 2\arccos(\gamma/r)$ (see Figure~\ref{app::fig:lem1}). So, we obtain

\begin{figure}
\centering
\includegraphics[width = 0.4\textwidth]{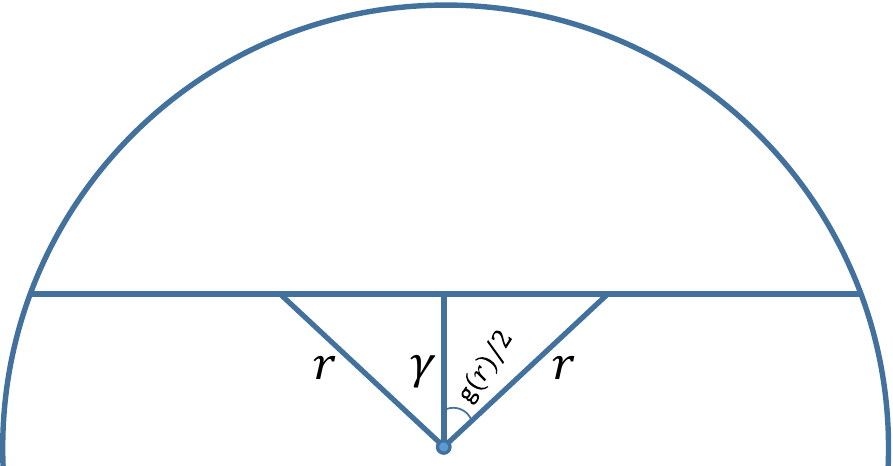}
\caption{$g(r)$}
\label{app::fig:lem1}
\end{figure}

\begin{multline*}
C(\gamma) = 
\frac{(d-1)\,\hat\gamma^{d-1}}{2\,\pi} \int_{0}^{1}  \arccos\left(\frac{\gamma}{\sqrt{1-\hat\gamma^2t}}\right) t^{\frac{d-3}{2}}\, dt 
\\
= \frac{(d-1)\,\hat\gamma^{d-1}}{2\,\pi} \int_{0}^{1}  \arcsin\left(\hat\gamma\sqrt{\frac{1-t}{1-\hat\gamma^2t}}\right) t^{(d-3)/2}\, dt \,.
\end{multline*}

Now we note that $x \le \arcsin(x) \le x\cdot\pi/2$ for $0 \le x \le 1$, so
\[
C(\gamma) = \Theta\left( d \right) \hat\gamma^{d} \int_{0}^{1}  \sqrt{\frac{1-t}{1-\hat\gamma^2 t}}\cdot  t^{(d-3)/2}\,dt\,.
\]
	
Finally, we estimate
\begin{equation}\label{app::eq:bounds}
\sqrt{1-t} \le \sqrt{\frac{1-t}{1-\hat\gamma^2 t}} \le \min\left\{1,\sqrt{\frac{1-t}{1-\hat\gamma^2}}\right\}.
\end{equation}
	
So, the lower bound is 
\begin{equation*}
C(\gamma) \ge  \Theta\left( d \right) \hat\gamma^{d}\, \mathrm{B}\left( \frac 3 2 , \frac {d-1}{2} \right) \\ = 
\Theta\left( d \right) \hat\gamma^{d} \left(\frac {d-1}{2} \right)^{-3/2} = 
\Theta\left( d^{-1/2} \right) \hat\gamma^{d} \,.
\end{equation*}
	
The upper bounds are
\[
C(\gamma) \le \Theta\left( d \right) \hat\gamma^{d} \int_{0}^{1}   t^{(d-3)/2}\,dt = \Theta\left( 1 \right) \hat\gamma^{d}\,,
\]
\begin{equation*}
C(\gamma) \le \Theta\left( d \right) \hat\gamma^{d} \int_{0}^{1} \sqrt{\frac{1-t}{1-\hat\gamma^2}} \cdot t^{(d-3)/2}\,dt \\ = \Theta\left( d^{-1/2} \right) \frac{\hat\gamma^{d}}{\gamma} \,.
\end{equation*}
This completes the proof.
\end{proof}

\subsection{Volumes of intersections of spherical caps}\label{app::sec:lem2}

By $W_{\x,\y}(\alpha,\beta)$ we denote the intersection of two spherical caps centered at $\x \in \S^d$ and $\y \in \S^d$ with heights $\alpha$ and $\beta$, respectively, i.e., $W_{\x,\y}(\alpha,\beta) = \{\z \in \S^d: \langle \z,\x \rangle \ge \alpha, \langle \z,\y \rangle \ge \beta \}$. As for spherical caps, by $W(\alpha,\beta,\theta)$ we denote the volume of such intersection given that the angle between $\x$ and $\y$ is $\theta$. 

We analyze the volume of the intersection of two spherical caps $C_{\x}(\alpha)$ and $C_{\y}(\beta)$. In the lemma below we assume $\gamma \le 1$. However, it is clear that if $\gamma > 1$, then either the caps do not intersect (if $\alpha > \beta \cos \theta$ and $\beta > \alpha \cos \theta$) or the larger cap contains the smaller one.

\begin{figure*}
\centering
\begin{subfigure}{0.45\textwidth}
\includegraphics[width=0.95\textwidth]{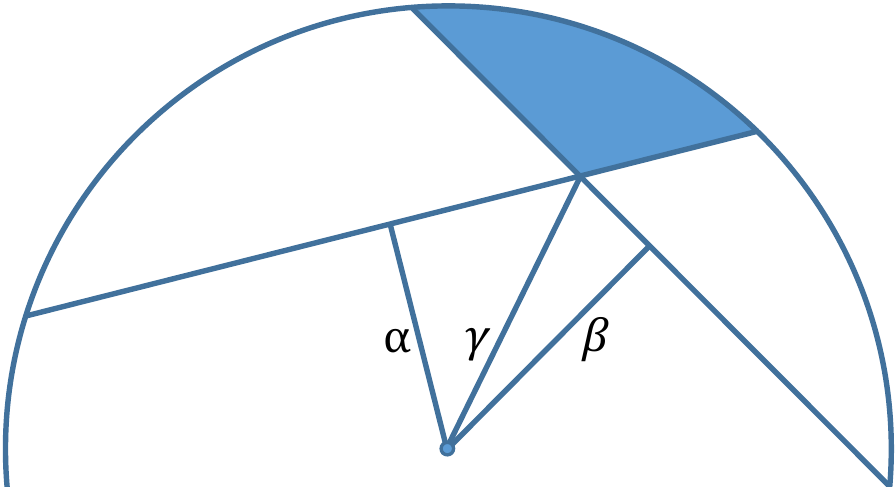}{\subcaption{$\beta > \alpha \cos \theta$, $\alpha > \beta \cos \theta$}\label{fig:1}}
\end{subfigure}
\begin{subfigure}{0.45\textwidth}
\includegraphics[width=0.93\textwidth]{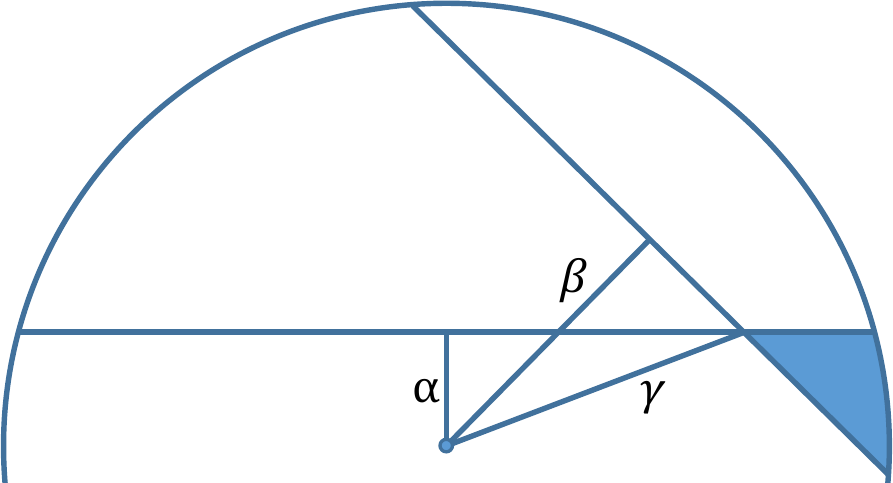}{\subcaption{$\alpha < \beta \cos \theta$}\label{fig:2}}
\end{subfigure}
\caption{Intersection of spherical caps}
\label{fig:intersection}
\end{figure*}	

\begin{lemma}\label{app::lem:W}
Let 
$\gamma = \frac{\sqrt{\alpha^2 + \beta^2 - 2 \alpha\beta\cos\theta}}{\sin\theta}$ and assume that $\gamma \le 1$, then:
\begin{enumerate}
\item[(1)] If $\alpha \le \beta \cos \theta$, then 
$
C(\beta)/2 < W(\alpha,\beta,\theta) \le C(\beta)$ and
\begin{equation*}
 C(\beta) - W(\alpha,\beta,\theta)
 \\ \le 
C_{u,\beta} \,
 \Theta\left( d^{-1} \right)  \hat\gamma^{d} \min \left\{d^{1/2}, \frac{1}{\gamma} \right\}\,;
\end{equation*}
\item[(2)] If $\beta \le \alpha \cos \theta$, then 
$C(\alpha)/2 < W(\alpha,\beta,\theta) \le C(\alpha)$ and
\begin{equation*}
C(\alpha) - W(\alpha,\beta,\theta) 
\\ \le C_{u,\alpha} \,
 \Theta\left( d^{-1} \right)  \hat\gamma^{d} \min \left\{d^{1/2}, \frac{1}{\gamma} \right\}\,;
\end{equation*}
\item[(3)] Otherwise,
\begin{equation*}
(C_{l,\alpha}+C_{l,\beta}) \, \Theta\left( d^{-1} \right)  \hat\gamma^{d} \le W(\alpha,\beta,\theta) 
\\ \le (C_{u,\alpha}+C_{u,\beta}) \, \Theta\left( d^{-1} \right)  \hat\gamma^{d} \min \left\{d^{1/2}, \frac{1}{\gamma} \right\}\,;
\end{equation*}
\end{enumerate}
where
\begin{equation*}
C_{l,\alpha} = \frac {\alpha\left(\hat\alpha \, \sin \theta - | \beta - \alpha \, \cos \theta |  \right)} {\gamma \,\hat\gamma \, \sin \theta}\,, \hspace{5pt}
C_{l,\beta} = \frac { \beta\left(\hat\beta \, \sin \theta - | \alpha - \beta \, \cos \theta |  \right)} {\gamma \,\hat\gamma \, \sin \theta}\,,
\end{equation*}
\begin{equation*}
C_{u,\alpha} = \frac{\hat\gamma\,\alpha\,\sin \theta}{\gamma |\beta - \alpha\,\cos\theta |}\,,   \hspace{5pt}
C_{u,\beta} = \frac{\hat\gamma\,\beta\,\sin \theta}{\gamma |\alpha - \beta\,\cos\theta |} \,.
\end{equation*}
\end{lemma}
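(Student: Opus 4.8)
The plan is to mimic the projection argument of Lemma~\ref{app::lem:C}: fix the two-dimensional plane spanned by $\x$ and $\y$, let $p$ be the orthogonal projection $\S^d \to$ this plane, and write the volume of $W_{\x,\y}(\alpha,\beta)$ as $I(U)$ with $U = p(W_{\x,\y}(\alpha,\beta))$. The region $U$ is the intersection, inside the unit disk, of two half-planes: $\{\langle\cdot,\x\rangle\ge\alpha\}$ and $\{\langle\cdot,\y\rangle\ge\beta\}$. Its boundary away from the unit circle is the single corner point $\z_0$ where the two chords meet; a direct computation (law of cosines in the plane) shows $|\z_0|^2 = \alpha^2 + \beta^2 - 2\alpha\beta\cos\theta)/\sin^2\theta = \gamma^2$, which is exactly why $\gamma$ is the natural parameter and why $\gamma\le 1$ is assumed (so the corner lies inside the disk). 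As in Lemma~\ref{app::lem:C}, after substituting $t = (1-r^2)/\hat\gamma^2$ one is left with
\[
W(\alpha,\beta,\theta) = \Theta(d)\,\hat\gamma^{\,d-1}\int_0^1 g\!\left(\sqrt{1-\hat\gamma^2 t}\,\right) t^{(d-3)/2}\,dt,
\]
where now $g(r)$ is the angular measure of $\{\phi:(r,\phi)\in U\}$. The whole problem reduces to bounding this $g$ and the resulting integral in each of the three geometric regimes.

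For case (3) ($\alpha>\beta\cos\theta$ and $\beta>\alpha\cos\theta$, Figure~\ref{fig:1}), the region $U$ is a ``lune-like'' wedge with vertex at $\z_0$ (at radius $\gamma$), so $g(r)=0$ for $r<\gamma$ and for $r\ge\gamma$ it is the sum of two arc-angles, one cut by the $\x$-chord and one by the $\y$-chord. Each of these behaves near $r=\gamma$ like the single-cap function from Lemma~\ref{app::lem:C} but rescaled: the relevant heights, measured from $\z_0$, are governed by the distances from $\z_0$ to the two chords, which work out (elementary trigonometry) to be proportional to $\alpha\sin\theta/|\beta-\alpha\cos\theta|$ and $\beta\sin\theta/|\alpha-\beta\cos\theta|$ — these are exactly the constants $C_{u,\alpha},C_{u,\beta}$ up to the $\hat\gamma/\gamma$ factor. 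Linearizing $g$ near $t=0$ gives the lower bound with $C_{l,\alpha}+C_{l,\beta}$ and a $\mathrm{B}(2,(d-1)/2)=\Theta(d^{-2})$-type integral, i.e. the $\Theta(d^{-1})\hat\gamma^d$ scale; bounding $g$ from above by its two linear pieces and using the $\min\{d^{1/2},1/\gamma\}$ estimate exactly as in the last display of the proof of Lemma~\ref{app::lem:C} gives the matching upper bound with $C_{u,\alpha}+C_{u,\beta}$.

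For cases (1) and (2) ($\alpha\le\beta\cos\theta$, Figure~\ref{fig:2}, and symmetrically), one chord is now ``inactive'' over the cap $C_\y(\beta)$: the $\x$-constraint is automatically satisfied on more than half of $C_\y(\beta)$, which gives $W(\alpha,\beta,\theta)>C(\beta)/2$ immediately, and $W\le C(\beta)$ is trivial. The content is the bound on the deficit $C(\beta)-W(\alpha,\beta,\theta)$, which equals $I\big(p(C_\y(\beta))\setminus U\big)$, i.e. the $I$-integral of the part of the $\beta$-cap cut off by the $\x$-chord. That sliver is itself essentially a spherical cap whose ``effective height parameter'' is again $\gamma$ (its far corner sits at radius $\gamma$), scaled by $C_{u,\beta}=\hat\gamma\beta\sin\theta/(\gamma|\alpha-\beta\cos\theta|)$; applying Lemma~\ref{app::lem:C}-style estimates to it yields $C_{u,\beta}\,\Theta(d^{-1})\hat\gamma^d\min\{d^{1/2},1/\gamma\}$. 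Case (2) is identical with $\alpha\leftrightarrow\beta$.

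The main obstacle I expect is the bookkeeping of the geometric constants: one must carefully identify, in the projected picture, the precise chord-to-corner distances and show they match $C_{l,\alpha},C_{l,\beta},C_{u,\alpha},C_{u,\beta}$, and one must handle the two regimes of the $\min$ (the $\gamma$ bounded away from $0$ vs. $\gamma\to 0$ cases) uniformly in $d$ — this is exactly the point the authors stress is \emph{not} covered by~\cite{becker2016new}. The extension beyond~\cite{becker2016new} is precisely the two-sided, $d$-uniform control of the sliver integral in cases (1)–(2) and the explicit constants; everything else is a careful but routine rerun of the Beta-integral asymptotics from Lemma~\ref{app::lem:C}.
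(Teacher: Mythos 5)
Your proposal follows essentially the same route as the paper's proof: project onto the plane spanned by $\x$ and $\y$, locate the chord-intersection corner at radius $\gamma$, split the angular function $g(r)$ into the two chord contributions $g_\alpha+g_\beta$ (respectively $g_\beta-g_\alpha$ with $g\le g_\beta$ for the sliver in cases (1)--(2)), linearize near $t=0$ to extract the constants $C_{l,\cdot}$ and $C_{u,\cdot}$, and finish with the Beta-integral asymptotics and the bound $\frac{1-t}{\sqrt{1-\hat\gamma^2 t}}\le\min\{\sqrt{1-t},\frac{1-t}{\gamma}\}$ for the $\min\{d^{1/2},1/\gamma\}$ factor. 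The only thing left implicit in your write-up is the explicit $\arccos\to\arcsin$ manipulation that turns the chord-to-corner geometry into the stated constants, which you correctly flag as bookkeeping; the paper carries this out directly.
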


The cases considered in this lemma are illustrated in Figure~\ref{fig:intersection}.

This lemma differs from Lemma 2.2 in~\cite{becker2016new} by, first, allowing the parameters $\alpha,\beta,\theta$ depend on $d$ and, second, considering the cases (1) and (2), which are essential for the proofs. Namely, we use the lower bound in (3) to show that with high probability we can make a step of the algorithm since the intersection of some spherical caps is large enough (Figure~\ref{fig:1}); we use the upper bounds in (1) and (2) to show that at the final step of the algorithm we can find the nearest neighbor with high probability, since the volume of the intersection of some spherical caps is very close to the volume of one of them (Figure~\ref{fig:2}), see the details further in the proof. 

\begin{proof}
	
Consider the plane formed by the the vectors $\x$ and $\y$ defining the caps and let $p$ denote the orthogonal projection to this plane. Let $U = p(W_{\x,\y}(\alpha,\beta))$. 
	
Denote by $\gamma$ the distance between the origin and the intersection of chords bounding the projections of spherical caps. One can show that
\[
\gamma = \sqrt{ \frac{\alpha^2 + \beta^2 - 2\alpha \beta \cos{\theta}}{\sin^2 \theta}}\,.
\]

If $\alpha \le \beta \cos \theta$, it is easy to see that $W(\alpha,\beta,\theta) > \frac 1 2 C(\beta)$, since more than a half of $C_{\y}(\beta)$ is covered by the intersection (see Figure~\ref{fig:2}). Similarly, if $\beta \le \alpha \cos \theta$, then $W(\alpha,\beta,\theta) > \frac 1 2 C(\alpha)$. Now we move to the proof of (3) and will return to (1) and (2) after that.

If $\cos \theta < \frac \alpha \beta$ and $\cos \theta < \frac \beta \alpha$, then we are in the situation shown on Figure~\ref{fig:1} and the distance between the intersection of spherical caps and the origin is $\gamma$. As in the proof of Lemma~\ref{app::lem:C}, denote $g(r) = \int_{\phi: (r,\phi) \in U} d\phi$, then the relative volume of $p^{-1}(U)$ is (see Equation~\eqref{app::eq:int_general}):
\begin{equation*}
W(\alpha,\beta,\theta) = \frac{(d-1)\,\hat\gamma^{d-1}}{4\,\pi} \int_{0}^{1}  g\left(\sqrt{1-\hat\gamma^2t}\right) t^{\frac{d-3}{2}}\, dt\,.
\end{equation*}
	
\begin{figure*}
\centering
\begin{subfigure}{0.45\textwidth}
	\includegraphics[width=0.95\textwidth]{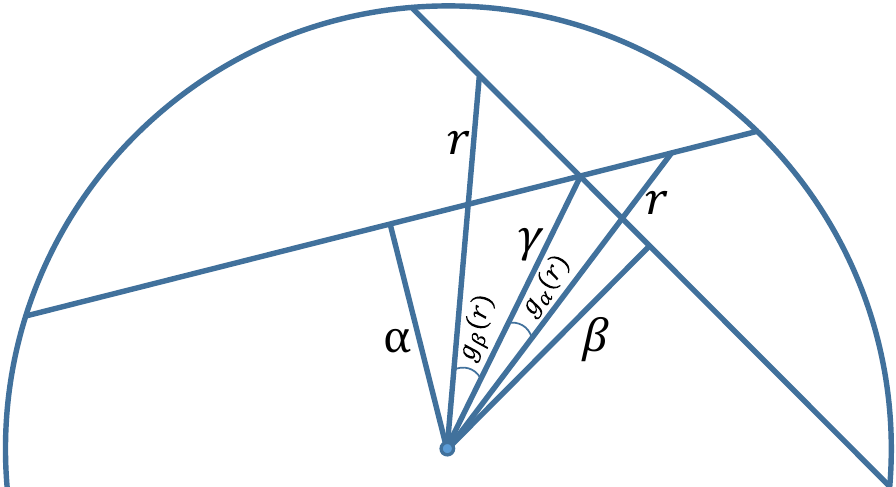}{\subcaption{$\beta > \alpha \cos \theta$, $\alpha > \beta \cos \theta$}\label{app::fig:lem2_1}}
\end{subfigure}
	\begin{subfigure}{0.45\textwidth}
		\includegraphics[width=0.95\textwidth]{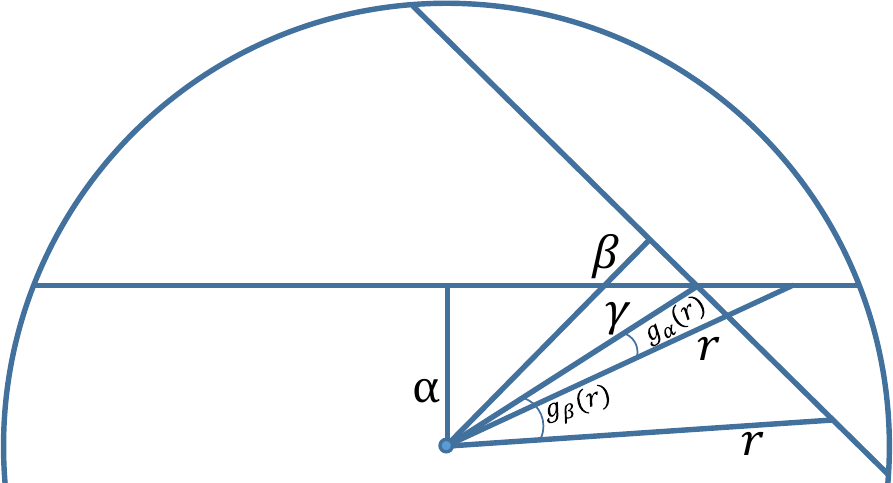}{\subcaption{$\alpha < \beta \cos \theta$}\label{app::fig:lem2_2}}
	\end{subfigure}
	\caption{$g_{\alpha}(r)$ and $g_{\beta}(r)$}
	\label{app::fig:lem2}
\end{figure*}
	
The function $g(r)$ can be written as $g_{\alpha}(r)+g_{\beta}(r)$, where (see Figure~\ref{app::fig:lem2_1})
\begin{align*}
g_{\alpha}(r) &= \arccos\left(\frac{\alpha}{r}\right)-\arccos\left(\frac{\alpha}{\gamma}\right),\\
g_{\beta}(r) &= \arccos\left(\frac{\beta}{r}\right)-\arccos\left(\frac{\beta}{\gamma}\right).
\end{align*}
Accordingly, we can write $W(\alpha,\beta,\theta) = W_{\alpha}(\alpha,\beta,\theta) + W_{\beta}(\alpha,\beta,\theta)$.
	
Let us estimate $g_\alpha\left(\sqrt{1-\hat\gamma^2t}\right)$:
\begin{multline*}
g_\alpha\left(\sqrt{1-\hat\gamma^2t}\right)
 = \arcsin\left(\sqrt{1 - \frac{\alpha^2}{1 -\hat\gamma^2t}}\right) - \arcsin\left(\sqrt{1 - \frac{\alpha^2}{\gamma^2}}\right)
\\
= \arcsin \left(\sqrt{\left(1 - \frac{\alpha^2}{1 - \hat\gamma^2t}\right) \frac{\alpha^2}{\gamma^2}} \sqrt{\left(1 - \frac{\alpha^2}{\gamma^2}\right) \frac{\alpha^2}{1 - \hat\gamma^2t}} \right) 
\\
= \Theta \left( \frac{\alpha\sqrt{\gamma^2-\alpha^2}}{\gamma\sqrt{1-\hat\gamma^2t}} \right) \left(\sqrt{1+\frac{\hat\gamma^2(1-t)}{\gamma^2-\alpha^2}}- 1\right).
\end{multline*}
	
Note that 
\begin{equation*}
\left(\sqrt{1+\frac{\hat\gamma^2}{\gamma^2-\alpha^2}}- 1\right)(1-t) \\
\le \sqrt{1+\frac{\hat\gamma^2(1-t)}{\gamma^2-\alpha^2}}- 1 \le \frac{\hat\gamma^2}{2\left(\gamma^2-\alpha^2\right)}(1-t) \,.
\end{equation*}
	
Now, we can write the lower bound for $W_{\alpha}(\alpha,\beta,\theta)$. Let 
\begin{equation*}
C_{l,\alpha} =  \left(\sqrt{1+\frac{\hat\gamma^2}{\gamma^2-\alpha^2}}- 1\right)\frac{\alpha\sqrt{\gamma^2-\alpha^2}}{\gamma\,\hat\gamma} \\
= \frac {\alpha\left(\hat\alpha \, \sin \theta - | \beta - \alpha \, \cos \theta | \right)} {\gamma \,\hat\gamma \, \sin \theta}\,,
\end{equation*}
$C_{l,\beta}$ can be obtained by swapping $\alpha$ and $\beta$. 

Then the lower bound is
\begin{multline*}
W(\alpha,\beta,\theta) 
\ge \Theta(d)\,\hat\gamma^{d} \left(C_{l,\alpha} + C_{l,\beta}\right) \int_{0}^{1} \frac{1-t}{\sqrt{1-\hat\gamma^2t}}   t^{(d-3)/2}\, dt 
\\
\ge
\Theta(d)\,\hat\gamma^{d} \left(C_{l,\alpha} + C_{l,\beta}\right) \int_{0}^{1} (1-t)\, t^{(d-3)/2}\, dt
= \Theta(d^{-1})\,\hat\gamma^{d} \left(C_{l,\alpha} + C_{l,\beta}\right).
\end{multline*}
	
Now we define $C_{u,\alpha}$ (and, similarly, $C_{u,\beta}$) as
\[
C_{u,\alpha} = \frac{\hat\gamma}{\left(\gamma^2-\alpha^2\right)} \cdot \frac{\alpha\sqrt{\gamma^2-\alpha^2}}{\gamma} 
= \frac{\hat\gamma\,\alpha\,\sin \theta}{\gamma\,|\beta - \alpha\,\cos\theta |} \,.
\]
Then
\begin{equation*}
W(\alpha,\beta,\theta) \le \Theta(d)\,\hat\gamma^{d} \left(C_{u,\alpha} + C_{u,\beta}\right) \\ \cdot\int_{0}^{1} \frac{1-t}{\sqrt{1-\hat\gamma^2t}}   t^{(d-3)/2}\, dt\,.
\end{equation*}
	
We use the upper bound
\[
\frac{1-t}{\sqrt{1-\hat\gamma^2t}} \le \min \left\{\sqrt{1-t}, \frac{1-t}{\gamma}\right\}
\]
and obtain
\[
W(\alpha,\beta,\theta) \le
\Theta(d^{-1})\,\hat\gamma^{d} \left(C_{u,\alpha} + C_{u,\beta}\right) \min
\left\{d^{\frac 1 2}, \frac{1}{\gamma}\right\},
\]
which completes the proof of (3).
	
Now, let us finish the proof for (1) and (2). If $\alpha \le \beta \cos \theta$, then we are in a situation shown on Figure~\ref{fig:2}. In this case, the bounds on $W(\alpha, \beta, \theta)$ are obvious. To estimate $C(\beta) - W(\alpha, \beta, \theta)$, we can directly follow the above proof for (3) and the only difference would be that $g(r) = g_{\beta}(r) - g_{\alpha}(r)$ instead of $g(r) = g_{\alpha}(r) + g_{\beta}(r)$. 
Note that we need only the upper bound and we simply say that $g(r) \le g_{\beta}(r)$.
The proof for (2) is similar with $g(r) = g_{\alpha}(r) - g_{\beta}(r)$. 
\end{proof}

\section{Greedy search on plain NN graphs}

\subsection{Proof overview}\label{sec:general_idea}
	
Let $\alpha_M$ denote the height of a spherical cap defining $G(M)$. By $f = f(n) = (n-1)C(\alpha_M)$ we denote the expected number of neighbors of a given node in $G(M)$. Then, it is clear that the complexity of one step of graph-based search is $\Theta\left( f \cdot d\right)$ (with high probability), so for making $k$ steps we need $\Theta\left(k \cdot f \cdot d\right)$ computations (see Section~\ref{app::sec:time}). The number of edges in the graph is $\Theta\left(f\cdot n\right)$, so the space complexity is $\Theta\left(f\cdot n \cdot \log n\right)$ (see Section~\ref{app::sec:space}).  
	
To prove that the algorithm succeeds, we have to show that it does not get stuck in a local optimum until we are sufficiently close to $q$. If we take some point $\x$ with $\langle \x,q \rangle = \alpha_s$, then the probability of making a step towards $q$ is determined by $W(\alpha_M, \alpha_s, \arccos{\alpha_s})$. In all further proofs we obtain lower bounds for this value of the form $\frac{1}{n} g(n)$ with  $1 \ll g(n) \ll n$. From this, we easily get that the probability of making a step is at least $1 - \left(1 - g(n)/n\right)^{n-1} = 1 - e^{-g(n)(1+o(1))}$.
	
A fact that will be useful in the proofs is that the value $W(\alpha_M,\alpha_s,\arccos\alpha_s)$ is a monotone function of $\alpha_s$ (see Section~\ref{app::sec:monotone}). I.e., if we have a lower bound for some $\alpha_s$, then for all smaller values we have this bound automatically. 
	
By estimating the value $W(\alpha_M, \alpha_s, \arccos{\alpha_s})$, we obtain (in further sections) that with probability $1-o(1)$ we reach some point at distance at most $\arccos{\alpha_s}$ from $q$. Then, to achieve success, we may either jump directly to $\bar \x$ at the next step or to already have $\arccos{\alpha_s} \le c R$ if we are solving $c,R$-ANN.
	
To limit the number of steps, we additionally show that with a sufficiently large probability at each step we become ``$\varepsilon$ closer'' to $q$. In the dense regime, it means that the sine of the angle between the current position and $q$ becomes smaller by at least some fixed value.
	
Let us emphasize that several consecutive steps of the algorithm cannot be analyzed independently. Indeed, if at some step we moved from $\x$ to $\y$, then there were no points in $C_{\x}(\alpha_M)$ closer to $q$ than $\y$ by the definition of the algorithm. Consequently, the intersection of $C_{\x}(\alpha_M)$, $C_{\y}(\alpha_M)$ and $C_{q}\left(\langle q,\y \rangle \right)$ contains no elements of the dataset. The closer $\y$ to $\x$ the larger this intersection. However, the fact that at each step we become at least ``$\varepsilon$ closer'' to $q$ allows us to bound the volume of this intersection and to prove that it can be neglected.

This is worth noting that in the proofs below we assume that the elements are distributed according to the Poisson point process on $\S^{d}$ with $n$ being the \textit{expected} number of elements. This makes the proofs more concise without changing the results since the distributions are asymptotically equivalent. Indeed, conditioning on the number of nodes in the Poisson process, we get the uniform distribution, and the number of nodes in the Poisson process is $\Theta(n)$ with high probability.

\subsection{Auxiliary results}

\subsubsection{Time complexity}\label{app::sec:time}

Let $v$ be an arbitrary node of $G$ and let $N(v)$ denote the number of its neighbors in $G$. Recall that $f = (n-1)C(\alpha_M)$.

\begin{lemma}\label{app::lem:one-step}
With probability at lest $1-\frac{4}{f}$ we have $\frac 1 2 f \le N(v) \le \frac{3}{2}f$.
\end{lemma}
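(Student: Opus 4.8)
The plan is to control $N(v)$ by a concentration inequality for the number of points of the Poisson process that land in a single spherical cap. Since we are working with the Poisson point process on $\S^d$ with expected total population $n$ (as stated in the proof overview), the number of neighbors of a fixed node $v$ is the number of other points falling into the cap $C_v(\alpha_M)$; by the restriction/thinning property of Poisson processes this is itself a Poisson random variable with mean $\lambda := (n-1)C(\alpha_M) = f$ (the node $v$ itself is excluded, hence $n-1$ rather than $n$). So the whole statement reduces to: if $X \sim \mathrm{Pois}(f)$, then $\P\!\left(X < \tfrac12 f \text{ or } X > \tfrac32 f\right) \le \tfrac{4}{f}$.

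First I would invoke Chebyshev's inequality rather than a Chernoff bound, because the claimed failure probability $4/f$ is only polynomial in $1/f$, which is exactly what the second moment gives. For $X \sim \mathrm{Pois}(f)$ we have $\E X = \Var X = f$, so
\[
\P\!\left(|X - f| \ge \tfrac12 f\right) \le \frac{\Var X}{(f/2)^2} = \frac{f}{f^2/4} = \frac{4}{f}\,.
\]
Since $\{N(v) < \tfrac12 f\} \cup \{N(v) > \tfrac32 f\} \subseteq \{|X-f| \ge \tfrac12 f\}$, the complementary event $\tfrac12 f \le N(v) \le \tfrac32 f$ has probability at least $1 - \tfrac{4}{f}$, which is the claim. (One should note the inequalities in the event are non-strict versus strict, but $\{|X-f|\ge f/2\}$ already dominates $\{N(v)<f/2\}\cup\{N(v)>3f/2\}$, so the bound is fine; if one insisted on the exact event one could replace $f/2$ by anything slightly smaller at no cost since $f\to\infty$.)

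The only genuine modeling point to make carefully — and the place where a reader might object — is the justification that $N(v)$ is Poisson with the stated mean. This follows because conditioning a Poisson process on containing a point at $v$ leaves the rest of the process distributed as the original Poisson process (Slivnyak–Mecke), and then the count in the cap $C_v(\alpha_M)$ minus the atom at $v$ is Poisson with intensity integrated over the cap, i.e. mean $(n-1)C(\alpha_M) = f$. Everything else is the one-line Chebyshev computation above. I do not anticipate a real obstacle; the main thing to get right is simply phrasing the reduction to a single Poisson count cleanly, and not accidentally using a Chernoff-type exponential bound that would prove something stronger than (but not matching the simple form of) the stated $1 - 4/f$.
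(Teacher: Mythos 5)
Your proof is correct and follows essentially the same route as the paper: the paper models $N(v)$ as $\mathrm{Bin}(n-1,C(\alpha_M))$ (rather than Poisson) and applies Chebyshev's inequality with $\Var(N(v)) \le f$ to get the same $4/f$ bound. The Binomial-versus-Poisson distinction is immaterial here, since the paper explicitly treats the two models as interchangeable, and both give variance at most $f$.
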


\begin{proof}
The number of neighbors $N(v)$ of a node $v$ follows Binomial distribution  $\mathrm{Bin}(n-1,C(\alpha_M))$, so $\E N(v) = f$. From Chebyshev's inequality we get
\[
\P\left(|N(v) - f| > \frac{f}{2}\right) \le \frac{4 \, \Var (N(v))}{f^2} \le \frac{4}{f},
\]
which completes the proof.
\end{proof}

To obtain the final time complexity of graph-based NN search, we have to sum up the complexities of all steps of the algorithm. We obtain the following result.

\begin{lemma}\label{app::lem:k-steps}
If we made $k$ steps of the graph-based NNS, then with probability $1 - O\left(\frac{1}{k\, f}\right)$ the obtained time complexity is $\Theta\left( k f d \right)$.
\end{lemma}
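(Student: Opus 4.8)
The plan is to sum the per-step complexities controlled by Lemma~\ref{app::lem:one-step} and handle the failure probabilities via a union bound. First I would observe that, conditioned on the event $\frac12 f \le N(v) \le \frac32 f$ for the node $v$ visited at a given step, the cost of that step is $\Theta(N(v)\cdot d) = \Theta(f d)$, since evaluating the inner products with all neighbors of $v$ costs $\Theta(d)$ each; hence on the intersection of these events over all $k$ visited nodes the total cost is $\sum_{i=1}^k \Theta(f d) = \Theta(k f d)$.

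Next I would control the probability that this intersection holds. By Lemma~\ref{app::lem:one-step}, each visited node fails the two-sided bound with probability at most $4/f$. A naive union bound over the $k$ steps gives failure probability at most $4k/f$, which is \emph{not} the claimed $O(1/(kf))$; so the argument must be sharper. The natural fix is to apply Chebyshev's inequality to the \emph{sum} $S = \sum_{i=1}^k N(v_i)$ rather than to each term separately. Since $\E S = k f$ and (using that the relevant degree counts are sums of indicators with variance at most the mean, and that the $k$ visited nodes contribute variance $O(k f)$) we have $\Var(S) = O(k f)$, Chebyshev gives
\[
\P\!\left(|S - k f| > \tfrac12 k f\right) \le \frac{O(k f)}{(k f)^2} = O\!\left(\frac{1}{k f}\right),
\]
and on the complementary event $S = \Theta(k f)$, so the total cost $\Theta(S d) = \Theta(k f d)$.

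The main obstacle is making the variance bound on $S$ rigorous, because the nodes $v_1,\dots,v_k$ visited by the greedy search are not fixed in advance — they are chosen adaptively depending on the realized graph, so the $N(v_i)$ are neither independent nor identically distributed in an obvious way, and $S$ is a sum over a data-dependent index set. The clean way around this is to note that the adjacency structure is a collection of independent indicators (each pair of points is joined independently with probability $C(\alpha_M)$, in the Poisson model even more cleanly), and that $S$ counts, with multiplicities at most one per visited node, a subset of these indicators; one can dominate $\Var(S)$ by summing conditional variances along the revealed sequence, or simply bound $S$ above by the degree sum over any fixed superset of $k$ nodes plus a lower-order correction. A second, more delicate point is the double counting when the search revisits previously seen neighbors: this only ever \emph{reduces} the work, so the upper bound $O(k f d)$ is unaffected, and the lower bound $\Omega(k f d)$ follows from the per-step lower bound $N(v_i) \ge \frac12 f$ on the good event. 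I would spell out the Poisson-model independence (as flagged in Section~\ref{sec:general_idea}) to justify treating the degrees as well-behaved, and otherwise the calculation is routine.
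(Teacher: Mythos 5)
Your main device --- aggregating the per-step costs into a single sum $S$ with mean $kf$ and applying Chebyshev to get the $O\bigl(\frac{1}{kf}\bigr)$ failure probability --- is exactly the paper's route: the paper upper-bounds the total work by a $\mathrm{Bin}(k(n-1), C(\alpha_M))$ variable (justified, as you note, by the fact that no distance to $q$ is computed twice) and then repeats the Chebyshev calculation of Lemma~\ref{app::lem:one-step}. You are also right that the naive per-step union bound gives only $4k/f$ and must be abandoned. So the upper bound and the probability estimate are fine and match the paper.

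The gap is in your lower bound. You claim $\Omega(kfd)$ ``follows from the per-step lower bound $N(v_i) \ge \frac12 f$ on the good event,'' but the cost of step $i$ is not $\Theta(N(v_i)\,d)$: it is $\Theta(d)$ times the number of neighbors of $v_i$ whose distance to $q$ has \emph{not already been computed} at an earlier step. The very mechanism you invoke to save the upper bound (``revisiting only ever reduces the work'') is precisely what threatens the lower bound --- if the caps $C_{v_i}(\alpha_M)$ and $C_{v_{i+1}}(\alpha_M)$ overlapped heavily, almost all of $v_{i+1}$'s neighbors would already have been processed and the new work at step $i+1$ could be $o(fd)$. To close this you must show that the intersection of consecutive caps contains a negligible fraction of each cap's volume; the paper does this by noting that more than a constant number of steps occur only in the dense regime and then reusing the geometric estimate behind Lemma~\ref{app::lem:dependence} (each step moves the center by at least $\varepsilon\delta$, which makes the overlap volume smaller by a factor exponential in $d$, and each cap meets only a constant number of others). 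Without some such argument the $\Omega(kfd)$ direction is unproven.
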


\begin{proof}
Although the nodes encountered in one iteration are not independent, the fact that we do not need to measure the distance from any point to $q$ more than once allows us to upper bound the complexity by the random variable distributed according to $\mathrm{Bin}(k(n-1), C(\alpha_M))$. Then, we can follow the proof of Lemma~\ref{app::lem:one-step} and note that one distance computation takes $\Theta(d)$.

To see that the lower bound is also $\Theta\left( k f d \right)$, we note that more than a constant number of steps are needed only for the dense regime. For this regime, we may follow the reasoning of Lemma~\ref{app::lem:dependence} to show that the volume of the intersection of two consecutive balls is negligible compared to the volume of each of them.
\end{proof}

\subsubsection{Space complexity}\label{app::sec:space}

\begin{lemma}\label{app::lem:space}
With probability $1 - O\left(\frac{1}{f\,n}\right)$ we have $\frac 1 4 \, f \, n \le E(G) \le \frac 3 4 \, f \, n$.
\end{lemma}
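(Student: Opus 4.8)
The plan is to mirror the argument of Lemma~\ref{app::lem:one-step}, but now applied to the total edge count rather than the degree of a single vertex. The number of edges $E(G)$ counts unordered pairs $\{u,v\}$ with $\langle u, v \rangle \ge \alpha_M$; there are $\binom{n}{2}$ candidate pairs, and each pair is an edge independently (under the uniform model on $\S^d$, the events "$\{u,v\}$ is an edge" for disjoint pairs are \emph{not} fully independent, but pairwise covariances are easy to control, which is all Chebyshev needs). Each pair is an edge with probability $C(\alpha_M)$ by the definition of $G(M)$, so $\E E(G) = \binom{n}{2} C(\alpha_M) = \frac{n}{2} f \,(1+o(1))$, where $f=(n-1)C(\alpha_M)$. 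Thus $\E E(G)$ sits comfortably inside the window $[\frac14 fn, \frac34 fn]$ for large $n$, and it remains to bound the fluctuation.

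First I would compute $\Var(E(G))$. Write $E(G)=\sum_{\{u,v\}} X_{uv}$ with $X_{uv}$ the indicator that $\{u,v\}$ is an edge. Then $\Var(E(G)) = \sum \Var(X_{uv}) + \sum_{\text{distinct pairs}} \operatorname{Cov}(X_{uv}, X_{u'v'})$. Disjoint pairs (four distinct points) are independent, so their covariance vanishes; only pairs sharing exactly one vertex contribute, and there are $\Theta(n^3)$ such ordered configurations, each contributing a covariance bounded by $\E[X_{uv}X_{uw}] \le C(\alpha_M)$ (crudely, $\le$ the probability that $w$ lies in the cap around $u$). Hence $\Var(E(G)) \le \Theta(n^2) C(\alpha_M) + \Theta(n^3) C(\alpha_M) = \Theta(n^3 C(\alpha_M)) = \Theta(n^2 f)$. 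Alternatively, and more cleanly, one can invoke the Poissonization remark from Section~\ref{sec:general_idea}: under the Poisson point process the edge count is a Poisson-type functional and $\Var(E(G)) = \Theta(\E E(G)) + \Theta(n^2 f/n) = \Theta(n^2 f)$ by the same shared-vertex bookkeeping.

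Then I would apply Chebyshev's inequality exactly as in Lemma~\ref{app::lem:one-step}:
\[
\P\!\left(\bigl| E(G) - \tfrac{n}{2}f \bigr| > \tfrac{n}{4}f\right) \le \frac{\Var(E(G))}{(nf/4)^2} = \frac{\Theta(n^2 f)}{\Theta(n^2 f^2)} = O\!\left(\frac{1}{f}\right),
\]
which would give failure probability $O(1/f)$ rather than the claimed $O(1/(fn))$. To get the sharper bound one must use a concentration inequality with exponential tails — e.g. a bounded-differences (McDiarmid / Azuma) argument on the $n$ point locations, noting that moving one point changes $E(G)$ by at most its degree, which is $\Theta(f)$ with high probability by Lemma~\ref{app::lem:one-step} — yielding a tail like $\exp(-\Theta(nf / f^2)) = \exp(-\Theta(n/f))$, or a Poisson/Chernoff bound on the edge functional directly. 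Either route drives the failure probability below $1/(fn)$ since $n/f \gg \log(fn)$ in all regimes of interest.

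The main obstacle is precisely this last point: Chebyshev alone is too weak for the stated $O(1/(fn))$ bound, so the proof needs either a martingale/bounded-differences concentration step (handling the non-independence of the point locations and the random degree bound) or a direct Chernoff estimate for the Poissonized edge count. Everything else — the mean computation and the variance bookkeeping over shared-vertex pairs — is routine.
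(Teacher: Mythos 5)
There is a genuine gap, and it is in the variance computation. You bound the covariance of two potential edges sharing a vertex by $\E[X_{uv}X_{uw}] \le C(\alpha_M)$, which inflates the variance to $\Theta(n^3 C(\alpha_M)) = \Theta(n^2 f)$ and leads you to conclude that Chebyshev can only give failure probability $O(1/f)$. In fact, for points drawn uniformly on $\S^d$ this covariance is exactly zero: conditioned on the position of the shared vertex $u$, the events $\{v \in C_u(\alpha_M)\}$ and $\{w \in C_u(\alpha_M)\}$ are independent, and each has probability $C(\alpha_M)$ \emph{regardless of where $u$ is} (the cap volume does not depend on the center). Hence
\[
\P(X_{uv}=1,\ X_{uw}=1) \;=\; \E_u\bigl[\P(v\in C_u\mid u)\,\P(w\in C_u\mid u)\bigr] \;=\; C(\alpha_M)^2,
\]
so all distinct pairs of potential edges are uncorrelated, whether or not they share a vertex. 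This is precisely the observation the paper makes; it yields $\Var(E(G)) = \binom{n}{2}C(\alpha_M)\bigl(1-C(\alpha_M)\bigr) = \Theta(fn)$, and Chebyshev then gives failure probability $\Theta(fn)/\Theta(f^2n^2) = O\bigl(\tfrac{1}{fn}\bigr)$ directly. Your assertion that ``Chebyshev alone is too weak for the stated bound'' is therefore incorrect.

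Your proposed repair via bounded differences does not straightforwardly close the gap either: McDiarmid requires a \emph{worst-case} bound on the change of $E(G)$ when one point moves, and that is $n-1$ (the degree can in principle be anything), not the typical degree $\Theta(f)$. Plugging in the worst-case bound gives a tail of order $\exp\bigl(-\Theta(f^2/n)\bigr)$, which is vacuous when $f \ll \sqrt{n}$; making the ``degree is $\Theta(f)$ w.h.p.'' substitution rigorous would require a truncation argument that the lemma does not need. The mean computation in your proposal is fine; the missing idea is simply the exact pairwise independence of edge indicators coming from the spherical symmetry of the model.
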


\begin{proof}
	
The proof is straightforward.\footnote{Similar proof appeared in, e.g.,~\cite{laarhoven2018graph}.} For each pair of nodes, the probability that there is an edge between them equals $C\left(\alpha_M\right)$. Therefore, the expected number of edges is 
\[
\E(E(G)) = \binom{n}{2} C\left(\alpha_M\right)\,.
\]
	
It remains to prove that $E(G)$ is tightly concentrated near its expectation. For this, we apply Chebyshev's inequality, so we have to estimate the variance $\Var(E(G))$. One can easily see that if we are given two pairs of nodes $e_1$ and $e_2$, then, if they are not the same (while one coincident node is allowed), then $\P(e_1,e_2 \in E(G)) = C(\alpha_M)^2$. Therefore, 
\begin{multline*}
\Var(E(G)) = \hspace{-10pt} \sum_{\hspace{7pt}e_1, e_2 \in \binom{\D}{2}}\hspace{-5pt}\P(e_1,e_2 \in E(G)) \ - \left(\E E(G)\right)^2  
\\
= \sum_{\substack{e_1, e_2 \in \binom{\D}{2}\\ e_1 \neq e_2}}\P(e_1,e_2 \in E(G))  + \E E(G) - \left(\E E(G)\right)^2 
 = \binom{n}{2}C\left(\alpha_M\right)\left(1-C\left(\alpha_M\right)\right)\,.
\end{multline*}
Applying Chebyshev's inequality, we get
\begin{equation}\label{app::eq:space}
\P\left(|E(G) - \E(E(G))| > \frac{E(G)}{2} \right) \le \frac{4\,\Var(E(G))}{E(G)^2} \\ = \frac{4\,(1 - C(\alpha))}{E(G)}\,.
\end{equation}
	
From this, the lemma follows. 
\end{proof}

It remains to note that if we store a graph as the adjacency lists, then the space complexity is $\Theta\left(E(G) \cdot \log n \right)$.

\subsubsection{Monotonicity of $W(\alpha_M,\alpha_s,\arccos\alpha_s)$}\label{app::sec:monotone}

\begin{lemma}
$W(\alpha_M,\alpha_s,\arccos\alpha_s)$ is a non-increasing function of $\alpha_s$.
\end{lemma}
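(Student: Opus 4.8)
The plan is to reduce the claim to an elementary set inclusion after choosing convenient coordinates. Since the Lebesgue measure $\mu$ on $\S^d$ is rotation invariant, $W(\alpha_M,\alpha_s,\arccos\alpha_s)$ depends only on $\alpha_M$ and $\alpha_s$ (which also justifies the notation), and I may place the center $\x$ of the graph cap at $(1,0,\dots,0)$, so that $C_{\x}(\alpha_M) = \{\z \in \S^d : z_1 \ge \alpha_M\}$ stays fixed as $\alpha_s$ varies. The query $q$ must satisfy $\langle \x, q\rangle = \cos(\arccos\alpha_s) = \alpha_s$, so $q = (\alpha_s, \hat\alpha_s\, u)$ for some unit vector $u$; applying a rotation that fixes the first coordinate axis, I may take $q = (\alpha_s,\hat\alpha_s,0,\dots,0)$. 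Then
\[
W(\alpha_M,\alpha_s,\arccos\alpha_s) \;=\; \mu\bigl(\{\z \in \S^d : z_1 \ge \alpha_M,\ \alpha_s z_1 + \hat\alpha_s z_2 \ge \alpha_s\}\bigr).
\]

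Next I would rewrite the second constraint. For $\alpha_s \in (-1,1)$ we have $\hat\alpha_s > 0$, so $\alpha_s z_1 + \hat\alpha_s z_2 \ge \alpha_s$ is equivalent to $z_2 \ge m\,(1 - z_1)$, where $m := \alpha_s/\hat\alpha_s = \cot(\arccos\alpha_s)$. Hence $W(\alpha_M,\alpha_s,\arccos\alpha_s) = \widetilde W(m)$ with
\[
\widetilde W(m) := \mu\bigl(\{\z \in \S^d : z_1 \ge \alpha_M,\ z_2 \ge m\,(1 - z_1)\}\bigr).
\]
The key point is that $1 - z_1 \ge 0$ for every $\z \in \S^d$. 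Consequently, if $m_1 \le m_2$ and $\z$ satisfies $z_2 \ge m_2(1-z_1)$, then $z_2 \ge m_2(1-z_1) \ge m_1(1-z_1)$; so the set defining $\widetilde W(m_2)$ is contained in the set defining $\widetilde W(m_1)$ (the constraint $z_1\ge\alpha_M$ being common to both), which gives $\widetilde W(m_2) \le \widetilde W(m_1)$, i.e.\ $\widetilde W$ is non-increasing.

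Finally, $m(\alpha_s) = \alpha_s/\sqrt{1-\alpha_s^2}$ is strictly increasing on $(-1,1)$, since its derivative equals $(1-\alpha_s^2)^{-3/2} > 0$; composing, $W(\alpha_M,\alpha_s,\arccos\alpha_s) = \widetilde W\bigl(m(\alpha_s)\bigr)$ is non-increasing in $\alpha_s$, as claimed. I do not anticipate a real obstacle here: the only points needing care are that the coordinate reduction is legitimate (rotation invariance of $\mu$) and the trivial boundary behaviour --- as $\alpha_s \uparrow 1$ we have $\hat\alpha_s \to 0$, the region shrinks to a single point, $W \to 0$, consistent with $m \to +\infty$, and similarly $\alpha_s \downarrow -1$ gives $m \to -\infty$ and $\widetilde W(m) \to C(\alpha_M)$, so the function stays non-increasing on the closed interval as well.
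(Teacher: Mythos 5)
Your proof is correct, and it takes a genuinely different---and more self-contained---route than the paper's. The paper projects onto the two-dimensional plane spanned by the two cap centers and argues by comparing two ``curved triangles'' in a figure, checking that one angle is smaller than the other and then invoking the spherical symmetry of $\mu(p^{-1}(\cdot))$; the decisive comparison is left at the level of the picture. You instead fix coordinates so that the graph cap $\{z_1 \ge \alpha_M\}$ does not move, and observe that the query-cap constraint collapses to the single inequality $z_2 \ge m\,(1-z_1)$ with $m = \alpha_s/\hat\alpha_s$; since $1 - z_1 \ge 0$ everywhere on $\S^d$, the admissible regions are literally nested as $m$ increases, and $m$ is an increasing function of $\alpha_s$. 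This turns the monotonicity of the volume into a set inclusion, needs no projection formula, no figure, and no symmetry argument beyond the initial rotation that places $q$ in the $(z_1,z_2)$-plane (which you correctly justify by rotation invariance of $\mu$), and it also disposes of the endpoints $\alpha_s \to \pm 1$ cleanly. Both arguments prove the same statement; yours is the more elementary and the easier to verify line by line.
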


\begin{proof}
	
\begin{figure}
\centering
\includegraphics[width = 0.4 \textwidth]{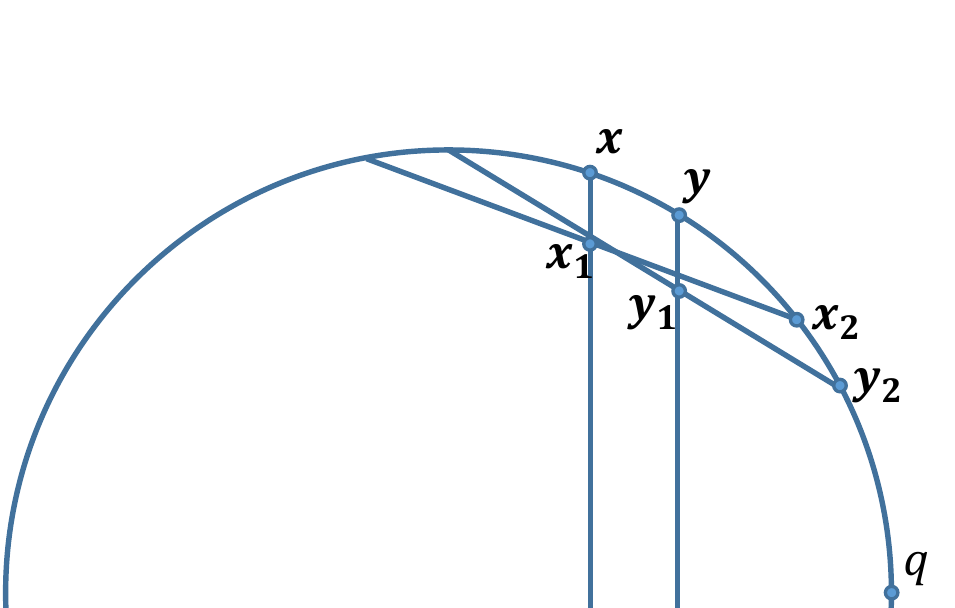}
\caption{Monotonicity of $W(\alpha_M,\alpha_s,\arccos\alpha_s)$}
\label{app::fig:monotone}
\end{figure}
	
We refer to Figure~\ref{app::fig:monotone}, where two spherical caps of height $\alpha_M$ are centered at $\x$ and $\y$, respectively, and note that we have to compare ``curved triangles'' $\triangle \,\x\, \x_1\, \x_2$ and $\triangle \,\y\,\y_1\,\y_2$. Obviously, $\rho(\x,\x_2) = \rho(\y,\y_2)$, $\angle \, \x\,\x_2\,\x_1 = \angle \, \y\,\y_2\,\y_1$, but $\angle \, \x\,\x_1\,\x_2 < \angle \, \y\,\y_1\,\y_2$. From this and the spherical symmetry of $\mu(p^{-1}(\cdot))$ ($p$ was defined in the proof of Lemma~\ref{app::lem:W}) the result follows.
\end{proof}

\subsection{Proof of Theorem~1 (greedy search in dense regime)}
%Theorem~\ref{thm:dense_main}}
	
Recall that for dense datasets ($d = \log n / \omega$), it is convenient to operate with radii of spherical caps (if $\alpha$ is a height of a spherical cap, then we say that $\hat \alpha$ is its radius). Let $\hat\alpha_1$ be the radius of a cap centered at a given point and covering its nearest neighbor, then we have $C(\alpha_1) \sim \frac 1 n$, i.e., $\hat\alpha_1 \sim n^{-\frac{1}{d}} = 2^{-\omega}$. We further let $\delta:= 2^{-\omega}$. 

We construct a graph $G(M)$ using spherical caps with radius $\hat\alpha_M = M \, \delta$.  Then, from Lemma~\ref{app::lem:C}, we get $f = \Theta\left( n\,d^{-1/2} M^d \delta^{d} \right) = \Theta\left(d^{-1/2} M^d\right)$. So, the number of edges in $G(M)$ is $\Theta\left(d^{-1/2}\cdot M^d\cdot n \right)$ and the space complexity is $\Theta\left(d^{-1/2} \cdot M^d\cdot n \cdot \log n\right)$ (see Section~\ref{app::sec:space}).
	
Let us now analyze the distance $\arccos \alpha_s$ up to which we can make steps towards the query $q$ (with sufficiently large probability). 
This is stated in Lemma~\ref{app::lem:varepsilon}, but before that let us prove some auxiliary results.

First, let us analyze the behavior of the main term $\hat \gamma^d$ in $W(x,y,\arccos z)$ when $\hat x,\hat y,\hat z = o(1)$, which is the case for the considered situations in the dense regime. 

\begin{lemma}\label{app::lem:dense_gen}
If $\hat x,\hat y,\hat z = o(1)$, then $\hat \gamma$ defined in Lemma~\ref{app::lem:W} for $W(x,y,\arccos z)$ is
\[
\hat\gamma \sim \frac{\sqrt{2\left(\hat x^2\hat y^2 + \hat y^2 \hat z^2 + \hat x^2 \hat z^2\right) - \left(\hat x^4 + \hat y^4 + \hat z^4 \right)}}{2 \hat z}.
\]
\end{lemma}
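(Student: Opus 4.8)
The plan is to start from the exact expression for $\gamma$ given in Lemma~\ref{app::lem:W}, namely $\gamma = \sqrt{\alpha^2+\beta^2-2\alpha\beta\cos\theta}/\sin\theta$ with $(\alpha,\beta,\theta) = (x,y,\arccos z)$, so that $\cos\theta = z$, $\sin\theta = \hat z$, $\sin^2\theta = 1-z^2 = \hat z^2$. Thus
\[
\gamma^2 = \frac{x^2 + y^2 - 2xyz}{\hat z^2}, \qquad
\hat\gamma^2 = 1 - \gamma^2 = \frac{\hat z^2 - x^2 - y^2 + 2xyz}{\hat z^2}.
\]
The whole content of the lemma is then an asymptotic simplification of the numerator $\hat z^2 - x^2 - y^2 + 2xyz$ under the hypothesis $\hat x,\hat y,\hat z = o(1)$, i.e.\ $x,y,z = 1 - o(1)$.

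The key algebraic step is to substitute $x^2 = 1-\hat x^2$, $y^2 = 1-\hat y^2$, $z^2 = 1-\hat z^2$ into the numerator and expand. The term $2xyz$ is the awkward one: writing $xyz = \sqrt{(1-\hat x^2)(1-\hat y^2)(1-\hat z^2)}$ and Taylor-expanding the square root as $1 - \tfrac12(\hat x^2+\hat y^2+\hat z^2) - \tfrac18\big[(\hat x^2+\hat y^2+\hat z^2)^2\big] + \tfrac14(\hat x^2\hat y^2 + \hat y^2\hat z^2 + \hat x^2\hat z^2) + O(\hat\cdot^6)$ — more carefully, $\sqrt{1-u} = 1 - u/2 - u^2/8 + \dots$ with $u = \hat x^2 + \hat y^2 + \hat z^2 - (\hat x^2\hat y^2 + \hat y^2\hat z^2 + \hat x^2\hat z^2) + O(\hat\cdot^6)$ coming from the product $(1-\hat x^2)(1-\hat y^2)(1-\hat z^2)$. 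Plugging in and collecting, the zeroth-order terms ($1 - 1 - 1 + 2 = 0$... actually $\hat z^2$ contributes; let me be careful: numerator $= (1-z^2)\cdot\hat z^2/\hat z^2$ scaling aside, numerator $= \hat z^2 - (1-\hat x^2) - (1-\hat y^2) + 2xyz = \hat z^2 - 2 + \hat x^2 + \hat y^2 + 2xyz$) and the first-order quadratic terms cancel against the $-u/2$ expansion of $2xyz$, leaving the numerator equal to $\tfrac12\big[2(\hat x^2\hat y^2 + \hat y^2\hat z^2 + \hat x^2\hat z^2) - (\hat x^4 + \hat y^4 + \hat z^4)\big] + O(\hat\cdot^6)$ (using $-u^2/8 = -\tfrac18(\hat x^2+\hat y^2+\hat z^2)^2 + O(\hat\cdot^6)$ and combining with the $\tfrac14\sum\hat x^2\hat y^2$ term). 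Dividing by $\hat z^2$ and taking the square root gives $\hat\gamma \sim \sqrt{2\sum\hat x^2\hat y^2 - \sum\hat x^4}/(2\hat z)$, which is the claimed expression.

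The step I expect to be the main obstacle — or at least the one requiring the most care — is controlling the error terms in the Taylor expansion to make the ``$\sim$'' rigorous: one must verify that the leading surviving term, which is a homogeneous degree-4 polynomial in the small quantities $\hat x,\hat y,\hat z$, genuinely dominates the $O(\hat\cdot^6)$ remainder. This requires knowing that $2\sum\hat x^2\hat y^2 - \sum\hat x^4$ is not itself $o$ of the individual degree-4 monomials — in fact it factors, up to sign, as $-(\hat x+\hat y+\hat z)(\hat x+\hat y-\hat z)(\hat y+\hat z-\hat x)(\hat z+\hat x-\hat y)$ (a Heron-type identity), and positivity follows from $\gamma \le 1 \iff \hat\gamma^2 \ge 0$ together with the triangle-type inequalities that hold whenever the caps intersect; so the degree-4 term is $\Theta$ of $\hat z^4$ up to constants and the ratio of the remainder to it is $O(\hat\cdot^2) = o(1)$. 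Once that is in place the conclusion is immediate. I would write this up as: (i) record $\hat\gamma^2 = (\hat z^2 - x^2 - y^2 + 2xyz)/\hat z^2$; (ii) expand $2xyz$ to fourth order in $\hat x,\hat y,\hat z$; (iii) observe first-order cancellation and identify the degree-4 leading term; (iv) note the remainder is relatively $o(1)$; (v) take the square root.
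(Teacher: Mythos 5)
Your proposal follows essentially the same route as the paper: both write $\hat\gamma^2 = 1-\gamma^2 = \bigl(\hat x^2+\hat y^2+\hat z^2-2+2\sqrt{(1-\hat x^2)(1-\hat y^2)(1-\hat z^2)}\bigr)/\hat z^2$ and Taylor-expand the square root so that the constant and quadratic terms cancel, leaving the stated degree-4 numerator over $4\hat z^2$. The paper simply asserts the resulting ``$\sim$'' without discussing the remainder, so your extra care about why the degree-4 term dominates the $O(\hat\cdot^6)$ error is a refinement of, not a departure from, its argument.
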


\begin{proof}
	
By the definition, $\gamma = \sqrt{\frac{x^2 + y^2 - 2 xyz}{1-z^2}}$. Then 
	
\begin{multline*}
\hat\gamma^2 = 1 - \gamma^2  = \frac{\hat x^2 + \hat y^2 + \hat z^2 - 2 + 2 \sqrt{\left(1-\hat x^2\right)\left(1-\hat y^2\right)\left(1-\hat z^2\right)}}{\hat z^2} 
\\
\sim \frac{2\left(\hat x^2\hat y^2 + \hat y^2 \hat z^2 + \hat x^2 \hat z^2\right) - \left(\hat x^4 + \hat y^4 + \hat z^4 \right)}{4 \hat z^2}\,.
\end{multline*}
\end{proof}

Now, we analyze $W(\alpha_M, \alpha_s, \arccos \alpha_s)$ and we need only the lower bound. Recall that we use the notation $\delta = 2^{-\omega}$.

\begin{lemma}\label{app::lem:dense_s}
Assume that $\hat\alpha_s = s \,\delta$ and $\hat\alpha_M = M \, \delta$.  
\begin{itemize}
\item If $M \ge \sqrt{2} s$, then $W(\alpha_M, \alpha_s, \arccos \alpha_s) \ge \frac{1}{n}\cdot s^{d+o(d)}$.
\item If $M < \sqrt{2} s$, then $W(\alpha_M, \alpha_s, \arccos \alpha_s) \ge \frac{1}{n} \cdot \left(M^2 - \frac{M^4}{4 s^2} \right)^{d/2 + o(d)}$.
\end{itemize}
\end{lemma}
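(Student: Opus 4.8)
The plan is to apply Lemma~\ref{app::lem:W} to $W(\alpha_M,\alpha_s,\arccos\alpha_s)$ and then estimate the resulting bound using Lemma~\ref{app::lem:dense_gen}. First I would set $x = \alpha_M$, $y = \alpha_s$, $z = \alpha_s$ (so $\theta = \arccos\alpha_s$, $\sin\theta = \hat\alpha_s$), and determine which of the three cases of Lemma~\ref{app::lem:W} applies. The relevant comparison is between $\alpha_M$ and $\alpha_s\cos\theta = \alpha_s^2$, and between $\alpha_s$ and $\alpha_M\cos\theta = \alpha_M\alpha_s$; since $\alpha_s < 1$ we always have $\alpha_s > \alpha_M\alpha_s$, so case (2) never occurs, and we are in case (1) iff $\alpha_M \le \alpha_s^2$. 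In the dense regime $\hat\alpha_M,\hat\alpha_s \to 0$, so $\alpha_M,\alpha_s \to 1$ and $\alpha_M \le \alpha_s^2$ fails for large $d$; hence we are in case (3). In all cases Lemma~\ref{app::lem:W} gives a lower bound of the form $(C_{l,\alpha}+C_{l,\beta})\,\Theta(d^{-1})\,\hat\gamma^d$ (in case (1), $W \ge C(\beta)/2 \ge \Theta(d^{-1/2})\hat\alpha_s^d$ by Lemma~\ref{app::lem:C}, which is even easier to handle), so the heart of the matter is to show $\hat\gamma^d$ times a polynomially-bounded prefactor has the claimed value modulo $n^{-1}$ and $2^{o(d)}$ factors.

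The key computation is to plug $\hat x = \hat\alpha_M = M\delta$, $\hat y = \hat z = \hat\alpha_s = s\delta$ into Lemma~\ref{app::lem:dense_gen}. The numerator under the square root becomes $2(\hat x^2\hat y^2 + \hat y^2\hat z^2 + \hat x^2\hat z^2) - (\hat x^4 + \hat y^4 + \hat z^4) = \delta^4\bigl(2(M^2 s^2 + s^4 + M^2 s^2) - (M^4 + 2s^4)\bigr) = \delta^4(4M^2 s^2 - M^4)$, and dividing by $4\hat z^2 = 4s^2\delta^2$ gives $\hat\gamma^2 \sim \delta^2\bigl(M^2 - \tfrac{M^4}{4s^2}\bigr)$. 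Thus $\hat\gamma^d \sim \delta^d\bigl(M^2 - \tfrac{M^4}{4s^2}\bigr)^{d/2}$. Recalling $\delta = 2^{-\omega}$ and $d = \log n/\omega$, we get $\delta^d = 2^{-\omega d} = 2^{-\log n} = 1/n$. This immediately yields the second bullet: $W \ge \Theta(d^{-1})\hat\gamma^d \cdot (\text{prefactor}) = \tfrac{1}{n}\bigl(M^2 - \tfrac{M^4}{4s^2}\bigr)^{d/2+o(d)}$, where the polynomial-in-$d$ prefactors $\Theta(d^{-1})$ and $C_{l,\alpha}+C_{l,\beta}$ get absorbed into the $2^{o(d)} = d^{O(1)}$ slack in the exponent. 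For the first bullet, observe that when $M \ge \sqrt{2}s$ the quantity $M^2 - \tfrac{M^4}{4s^2}$ is maximized over such $M$, but more to the point one checks $M^2 - \tfrac{M^4}{4s^2} \ge s^2$ precisely when $(M^2 - 2s^2)^2 \ge 0$, i.e. always — hence $\hat\gamma^2 \ge s^2\delta^2(1+o(1))$ and $\hat\gamma^d \ge s^{d}\delta^d 2^{o(d)} = \tfrac{1}{n}s^{d+o(d)}$, giving the first bullet with the same absorption of prefactors.

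One subtlety I would address carefully: in case (3) the prefactor $C_{l,\alpha}+C_{l,\beta}$ must not be super-polynomially small in $d$, otherwise it could not be absorbed into $2^{o(d)}$. From the formula $C_{l,\alpha} = \tfrac{\alpha(\hat\alpha\sin\theta - |\beta - \alpha\cos\theta|)}{\gamma\hat\gamma\sin\theta}$ with the present substitutions, one needs to verify that this is at least $d^{-O(1)}$ (indeed, a more careful expansion shows $C_{l,\alpha}+C_{l,\beta} = \Theta(1)$ generically, but a polynomial lower bound suffices). Similarly the $o(d)$ in the exponent has to swallow the $1+o(1)$ multiplicative errors coming from the asymptotic in Lemma~\ref{app::lem:dense_gen}: since $(1+o(1))^d$ need not be $2^{o(d)}$ in general, I would note that here the $o(1)$ error is actually $O(\delta^2) = O(2^{-2\omega})$ and $\omega \to \infty$ slower than any power of $d$ only in edge cases — more safely, one tracks that the $o(1)$ is $o(1/d)$ when $\omega = \omega(\log d)$ and otherwise handles the regime $\omega = O(\log d)$ separately where $d = \Theta(\log n/\log\log n)$ or larger and constant-factor errors in $\hat\gamma$ still give $2^{o(d)}$. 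I expect this bookkeeping of the $o(d)$ term — making sure all polynomial and lower-order multiplicative errors genuinely fit inside it across the full range of $\omega = \omega(d)$ — to be the main obstacle; the algebraic identity $4M^2s^2 - M^4 = 4s^2(M^2 - \tfrac{M^4}{4s^2})$ and the case analysis are routine.
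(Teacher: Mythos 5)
Your treatment of the second bullet matches the paper's proof: case (3) of Lemma~\ref{app::lem:W}, the computation $\hat\gamma^2\sim\delta^2\bigl(M^2-\frac{M^4}{4s^2}\bigr)$ via Lemma~\ref{app::lem:dense_gen}, the identity $\delta^d=1/n$, and a $\Theta(1)$ (or at worst polynomially small) bound on $C_{l,\alpha}+C_{l,\beta}$ absorbed into the $o(d)$ in the exponent. (Your worry about the multiplicative errors is unnecessary: $(1+o(1))^{d}=e^{o(d)}=2^{o(d)}$ always, so no case analysis on $\omega$ is needed.)

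The first bullet, however, is derived incorrectly, for two reasons. First, the case identification: $\alpha_M\le\alpha_s^2$ is equivalent to $1-M^2\delta^2\le(1-s^2\delta^2)^2$, i.e.\ $M^2\ge 2s^2-s^4\delta^2$, so for $M\ge\sqrt{2}\,s$ you are in case (1) of Lemma~\ref{app::lem:W} for all large $d$ --- it is not true that case (1) ``fails for large $d$''; the dichotomy between the two bullets of the lemma is precisely the dichotomy between cases (1) and (3). Second, your algebraic claim is backwards: $M^2-\frac{M^4}{4s^2}-s^2=-\frac{(M^2-2s^2)^2}{4s^2}\le 0$, so $M^2-\frac{M^4}{4s^2}\le s^2$ always (this quantity is maximized at $M=\sqrt{2}\,s$, where it equals $s^2$), and the conclusion $\hat\gamma^2\ge s^2\delta^2(1+o(1))$ is false; indeed, for $M>2s$ the expression $M^2-\frac{M^4}{4s^2}$ is negative and the case-(3) formula cannot be invoked at all. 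The repair is exactly the parenthetical remark you made and then abandoned, and it is what the paper does: for $M\ge\sqrt{2}\,s$, case (1) gives $W(\alpha_M,\alpha_s,\arccos\alpha_s)>\frac12 C(\alpha_s)\ge\Theta(d^{-1/2})\hat\alpha_s^d=\Theta(d^{-1/2})\,s^d\delta^d=\frac1n\, s^{d+o(d)}$, with no reference to $\gamma$ whatsoever.
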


\begin{proof}
First, assume that $M \ge \sqrt{2} s$. In this case we have $\alpha_M < \alpha_s^2$, so we are under the conditions (1)-(2) of Lemma~\ref{app::lem:W} (see Figure~\ref{fig:2}) and, using Lemma~\ref{app::lem:C}, we get $W(\alpha_M, \alpha_s, \arccos \alpha_s) > \frac{1}{2} C(\alpha_s) = \Theta\left(d^{-1/2} s^d \delta^d\right) = \frac{1}{n}s^{d+o(d)}$.
	
If $M < \sqrt{2}s$, then, asymptotically, we have $\alpha_M > \alpha_s^2$, so the case (3) of Lemma~\ref{app::lem:W} can be applied. Let us use Lemma~\ref{app::lem:dense_gen} to estimate $\hat \gamma$:
\[
\hat \gamma^2 = \delta^2 \left( M^2 -  \frac{M^4}{4 s^2}\right)\left(1+o(1)\right) \,.
\]
And now from Lemma~\ref{app::lem:W} we get
\[
W(\alpha_M, \alpha_s, \arccos \alpha_s) \ge C_l \,  \Theta\left(d^{-1}\right)\hat\gamma^d\,,
\]
where $C_l$ corresponds to the sum of $C_{l,\alpha}$ and $C_{l,\beta}$ in Lemma~\ref{app::lem:W}. So, it remains to estimate $C_{l}$:
\begin{multline*}
C_{l} = \frac {\alpha_M\left(\hat\alpha_M \, \hat\alpha_s + \alpha_M \alpha_s - \alpha_s  \right) + \alpha_s\left(\hat\alpha_s^2  + \alpha_s^2 - \alpha_M  \right)} {\gamma \,\hat\gamma \, \hat \alpha_s}  \\
= \Theta  \left(\frac{Ms\delta^2 + \sqrt{(1-M^2\delta^2)(1-s^2\delta^2)} - \sqrt{1-s^2\delta^2} + s^2\delta^2 + 1 - s^2\delta^2 - \sqrt{1 - M^2\delta^2}}{\delta^2}\right) \\
= \Theta(1) \cdot \frac{M(s-M/2)\delta^2 + \frac{1}{2}M^2\delta^2}{\delta^2} = \Theta(1)\,.
\end{multline*}
Therefore,
\[
W(\alpha_M, \alpha_s, \arccos \alpha_s) \ge \frac{1}{n} \cdot \left(M^2 - \frac{M^4}{4 s^2} \right)^{d/2 + o(d)}.
\]
\end{proof}

From Lemma~\ref{app::lem:dense_s}, we can find the conditions for $M$ and $s$ to guarantee (with sufficiently large probability) making steps until we are in the cap of radius $s \delta$ centered at $q$. The following lemma gives such conditions and also guarantees that at each step we can reach a cap of a radius at least $\varepsilon \delta$ smaller for some constant $\varepsilon > 0$.

\begin{lemma}\label{app::lem:varepsilon}
Assume that $s>1$. If $M > \sqrt{2}s$ or $M^2 - \frac{M^4}{4 s^2} > 1$, then there exists such constant $\varepsilon > 0$ that $W(\alpha_M, \alpha_s, \arcsin\left(\hat\alpha_s + \varepsilon\delta\right)) \ge \frac{1}{n} S^{d(1+o(1))}$ for some constant $S > 1$.
\end{lemma}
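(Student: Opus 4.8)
The plan is to reduce the statement to an application of Lemma~\ref{app::lem:dense_s} with a slightly enlarged target radius. We want to make a step from a point $\x$ with $\langle \x, q\rangle = \alpha_s$ (so $\hat\alpha_s = s\delta$) to some point whose distance to $q$ is at most $\arcsin(\hat\alpha_s + \varepsilon\delta)$, i.e.\ into the cap centered at $q$ of radius $\hat\alpha_s + \varepsilon\delta = (s+\varepsilon)\delta$. The relevant volume is $W(\alpha_M, \alpha', \arccos\alpha')$ where $\hat\alpha' = (s+\varepsilon)\delta$, i.e.\ exactly the quantity bounded in Lemma~\ref{app::lem:dense_s} with $s$ replaced by $s' = s + \varepsilon$. (Here I use that $W(\alpha_M,\alpha_s,\arccos\alpha_s)$ is monotone in $\alpha_s$, Section~\ref{app::sec:monotone}, so bounding $W$ at the shifted radius suffices.) So it remains to check that, for $\varepsilon$ small enough, the relevant base in Lemma~\ref{app::lem:dense_s} — either $s'$ in the first case or $M^2 - M^4/(4s'^2)$ in the second — stays strictly above $1$, which then yields the bound $\frac1n S^{d(1+o(1))}$ with a constant $S>1$.

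First I would split into the two regimes of Lemma~\ref{app::lem:dense_s}, but applied at $s' = s+\varepsilon$. If $M > \sqrt2 s$, then by continuity $M > \sqrt2 s'$ for $\varepsilon$ small, so we are in the first case and get $W \ge \frac1n (s')^{d+o(d)} \ge \frac1n (s+\varepsilon)^{d+o(d)}$; since $s > 1$ we may take $S = s + \varepsilon > 1$ (indeed even $S = s$ works, but the $\varepsilon$ shift costs nothing). If instead $M^2 - \frac{M^4}{4s^2} > 1$, set $h(s) := M^2 - \frac{M^4}{4s^2}$; this is continuous and increasing in $s$, so $h(s') > h(s) > 1$, and for $\varepsilon$ small we are still in the regime $M < \sqrt2 s'$ (if $M \ge \sqrt2 s'$ we are back in the easier first case and done a fortiori). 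Lemma~\ref{app::lem:dense_s} then gives $W \ge \frac1n h(s')^{d/2 + o(d)}$, and we take $S = \sqrt{h(s')} > 1$.

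The one genuinely new computation is translating ``cap of radius $(s+\varepsilon)\delta$ centered at $q$'' into the parameters $(\alpha, \beta, \theta)$ that Lemma~\ref{app::lem:dense_s} expects: there the query cap has height equal to the \emph{angle} between current point and $q$, i.e.\ the configuration $W(\alpha_M, \alpha_s, \arccos\alpha_s)$ uses $\beta = \alpha_s$ and $\theta = \arccos\alpha_s$. For the $\varepsilon$-closer step we instead want the height-$\alpha_M$ cap around $\x$ (angle to $q$ still $\arccos\alpha_s$) intersected with the height-$\alpha'$ cap around $q$, where $\hat\alpha' = \hat\alpha_s + \varepsilon\delta$; so the triple is $(\alpha_M, \alpha', \arccos\alpha_s)$, i.e.\ $\theta$ is fixed by the current position, not by $\alpha'$. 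I should recompute $\hat\gamma$ for this triple via Lemma~\ref{app::lem:dense_gen} with $\hat x = M\delta$, $\hat y = (s+\varepsilon)\delta$, $\hat z = s\delta$, obtaining $\hat\gamma^2 = \delta^2\big(\tfrac14(2M^2(s+\varepsilon)^2 + 2M^2 s^2 + 2(s+\varepsilon)^2 s^2 - M^4 - (s+\varepsilon)^4 - s^4)/s^2\big)(1+o(1))$; as $\varepsilon \to 0$ this tends to $\delta^2(M^2 - M^4/(4s^2))$, so the same continuity argument applies and the constant $C_l$ (resp.\ the lower bound $C(\alpha')/2$ in the $M\ge\sqrt2 s$ branch) is $\Theta(1)$ by the same estimate as in Lemma~\ref{app::lem:dense_s}.

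The main obstacle is bookkeeping rather than mathematical depth: one must be careful that the $o(1)$ terms in Lemma~\ref{app::lem:dense_gen} and in the $d^{\pm1/2}$, $\Theta(\cdot)$ factors of Lemma~\ref{app::lem:W} are uniform enough that, after fixing a constant $\varepsilon$, the base of the exponential is still a constant $S>1$ and the $o(d)$ in the exponent does not swamp it — this is fine because $S > 1$ strictly and $S^{o(d)} = S^{d \cdot o(1)}$ is absorbed into the $d(1+o(1))$ in the exponent, but it should be stated explicitly. A secondary point is verifying that $\gamma \le 1$ (so that Lemma~\ref{app::lem:W} applies at all) for the shifted triple; since $\hat\gamma^2 = \delta^2 \cdot \Theta(1) = o(1)$, we have $\gamma \to 1^-$, so this holds for $n$ large.
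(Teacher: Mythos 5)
Your overall strategy --- reduce to Lemma~\ref{app::lem:dense_s} via Lemma~\ref{app::lem:dense_gen} and argue by continuity in $\varepsilon$ that the base of the exponential stays strictly above $1$ --- is exactly the paper's argument, and your handling of the case split and of the $o(d)$ bookkeeping is fine. However, you have misread which parameter is perturbed. In $W(\alpha_M, \alpha_s, \arcsin(\hat\alpha_s + \varepsilon\delta))$ the third argument is the \emph{angle between the cap centers}: the current point sits at angular distance $\arcsin((s+\varepsilon)\delta)$ from $q$, and the target cap around $q$ still has radius $s\delta$, so each step shrinks the radius by $\varepsilon\delta$. You instead place the current point at radius $s\delta$ and enlarge the target cap to radius $(s+\varepsilon)\delta$, i.e.\ you bound the triple $(\alpha_M,\alpha',\arccos\alpha_s)$ with $\hat\alpha'=(s+\varepsilon)\delta$ rather than $(\alpha_M,\alpha_s,\arccos\alpha_{s+\varepsilon})$. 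That is a different quantity (and one that is useless for the algorithm, since it counts neighbors \emph{farther} from $q$), and your appeal to the monotonicity result of Section~\ref{app::sec:monotone} does not bridge the two: that lemma concerns the diagonal $W(\alpha_M,\alpha_s,\arccos\alpha_s)$ where the second and third arguments move together, not independent perturbations of one of them.

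The damage is limited because the numerator in Lemma~\ref{app::lem:dense_gen} is symmetric in $\hat x,\hat y,\hat z$; only the denominator changes, so with the correct assignment $\hat x = M\delta$, $\hat y = s\delta$, $\hat z = (s+\varepsilon)\delta$ you get the paper's condition
\[
\frac{2\left(M^2 s^2 + s^2 (s+\varepsilon)^2 + M^2 (s+\varepsilon)^2\right) - \left(M^4 + s^4 + (s+\varepsilon)^4 \right)}{4 (s+\varepsilon)^2} > 1\,,
\]
which again reduces to $M^2 - M^4/(4s^2) > 1$ at $\varepsilon = 0$, and the identical continuity argument closes the proof. Likewise, in the first branch the relevant boundary is $M^2 \ge s^2 + (s+\varepsilon)^2$ (which your stronger condition $M>\sqrt2(s+\varepsilon)$ implies), and cases (1)--(2) of Lemma~\ref{app::lem:W} give $W \ge C(\alpha_s)/2 = \frac1n s^{d+o(d)}$ with $s>1$. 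So the proof is salvageable by swapping the roles of $\hat y$ and $\hat z$ throughout; as written, it proves a statement other than the one asked.
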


\begin{proof}
First, let us take $\varepsilon = 0$. Then, the result directly follows from Lemma~\ref{app::lem:dense_s}. We note that a value $M$ satisfying $M^2 - \frac{M^4}{4 s^2} > 1$ exists only if $s>1$. 
	
Now, let us demonstrate that we can take some $\varepsilon > 0$. The two cases discussed in Lemma~\ref{app::lem:dense_s} now correspond to $M^2 \ge s^2 + (s+\varepsilon)^2$ and  $M^2 < s^2 + (s+\varepsilon)^2$, respectively. If $M > \sqrt{2} s$, then we can choose a sufficiently small $\varepsilon$ that $M^2 \ge s^2 + (s+\varepsilon)^2$. Then, the result follows from Lemma~\ref{app::lem:dense_s} and the fact that $s>1$. Otherwise, we have $M^2 < s^2 + (s+\varepsilon)^2$ and instead of the condition $M^2 - \frac{M^4}{4 s^2} > 1$ we get (using Lemma~\ref{app::lem:dense_gen})
\[
\frac{2\left(M^2 s^2 + s^2 (s+\varepsilon)^2 + M^2 (s+\varepsilon)^2\right) - \left(M^4 + s^4 + (s+\varepsilon)^4 \right)}{4 (s+\varepsilon)^2} > 1.
\]
As this holds for $\varepsilon = 0$, we can choose a small enough $\varepsilon > 0$ that the condition is still satisfied. 
\end{proof}
	
This lemma implies that if we are given $M$ and $s$ satisfying the above conditions, then we can make a step towards $q$ since the expected number of nodes in the intersection of spherical caps is much larger than 1. Formally, we can estimate from below the values $g(n)$ for all steps of the algorithm by $g_{min}(n) = S^{d(1+o(1))}$. So, according to Section~\ref{sec:general_idea}, we can make each step with probability $1 - O\left(e^{-{S^{d(1+o(1))}}}\right)$. Moreover, each step reduces the radius of a spherical cap centered at $q$ and containing the current position by at least $\varepsilon \delta$. As a result, the number of steps (until we reach some distance $\arccos \alpha_s$) is $O\left(\delta^{-1}\right) = O\left(2^{\omega}\right)$. 

To estimate the overall success probability, we have to take into account that the consecutive steps of the algorithm are dependent. 
In Section~\ref{sec:general_idea}, it is explained how the previous steps of the algorithm may affect the current one: the fact that at some step we moved from $\y$ to $\x$ implies that there were no elements closer to $q$ than $\x$ in a spherical cap centered at $\y$. However, we can show that this dependence can be neglected.

\begin{lemma}\label{app::lem:dependence}
The dependence of the consecutive steps can be neglected and does not affect the analysis.
\end{lemma}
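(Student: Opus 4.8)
The plan is to formalize the intuition already sketched in Section~\ref{sec:general_idea}: after a step from $\y$ to $\x$, the only new information is that the region $C_{\y}(\alpha_M) \cap C_q(\langle q,\x\rangle)$ is empty of dataset points, and this region is asymptotically negligible compared with the intersection volume $W(\alpha_M,\alpha_s,\arcsin(\hat\alpha_s+\varepsilon\delta))$ that Lemma~\ref{app::lem:varepsilon} lower-bounds. I would first set up notation: suppose the algorithm has made steps $\x_0,\x_1,\dots,\x_k$ with $\hat\alpha_{s,i} := \langle q,\x_i\rangle$-radius decreasing by at least $\varepsilon\delta$ each time, and at the current node $\x_k$ we wish to find a point in $C_{\x_k}(\alpha_M)\cap C_q(\alpha_{s,k+1})$ with $\hat\alpha_{s,k+1} = \hat\alpha_{s,k} - \varepsilon\delta$ (the ``$\varepsilon$-closer'' target). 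The forbidden region contributed by step $i\to i{+}1$ is contained in $C_{\x_i}(\alpha_M)\cap C_q(\alpha_{s,i+1})$, i.e. an intersection of two caps; so the total conditioning is that the union $\bigcup_{i<k} \bigl(C_{\x_i}(\alpha_M)\cap C_q(\alpha_{s,i+1})\bigr)$ is empty.

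The key estimate is to bound the volume of each forbidden intersection using Lemma~\ref{app::lem:W} (case (3), or cases (1)/(2) near the end), exactly as in Lemma~\ref{app::lem:dense_s}, and show it is of the form $\frac1n T^{d(1+o(1))}$ with $T<1$ strictly (or more simply $o$ of the ``good'' volume $\frac1n S^{d(1+o(1))}$, $S>1$). The geometric input is that $\rho(\x_i,\x_k)$ is bounded away from $0$ because each step moves us at least $\varepsilon\delta$ closer in radius but the radii themselves are $\Theta(\delta)$ — actually the relevant point is that the angle $\theta$ between $\x_i$ and $q$ differs from that between $\x_k$ and $q$, so the cap $C_{\x_i}(\alpha_M)$, pushed to be ``below'' $C_q(\alpha_{s,i+1})$, contributes an $\hat\gamma$ that is strictly smaller than the one governing the target intersection. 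Concretely, I would plug $\hat x = M\delta$, $\hat y = $ (the radius reaching $\x_i$ from $q$), $\hat z = $ (radius of $C_q(\alpha_{s,i+1})$) into Lemma~\ref{app::lem:dense_gen} and show the resulting $\hat\gamma^2/\delta^2$ is uniformly bounded above by a constant $<$ the $S$-constant from Lemma~\ref{app::lem:varepsilon}; then $\mu$ of the forbidden intersection is $\frac1n T^{d(1+o(1))}$ with $T/S<1$. Summing over the $O(\delta^{-1}) = O(2^\omega)$ prior steps, the total forbidden volume is still $\frac1n\,\mathrm{poly}(2^\omega)\,T^{d(1+o(1))} = \frac1n\, o\!\bigl(S^{d(1+o(1))}\bigr)$, since $T^d$ beats any polynomial in $2^\omega = 2^{d}$ when $T<1$... more carefully, $(T/S)^d$ times $\mathrm{poly}(2^\omega)$ still $\to 0$ because $d=\log n/\omega$ and the exponential decay dominates. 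Hence conditioning on emptiness of the union changes the effective volume of $C_{\x_k}(\alpha_M)\cap C_q(\alpha_{s,k+1})$ only by a $1-o(1)$ factor, so the per-step success probability is still $1-O(e^{-S^{d(1+o(1))}})$.

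Finally I would assemble a union bound: the probability that some step fails is at most $O(2^\omega)\cdot O\bigl(e^{-S^{d(1+o(1))}}\bigr)$, and since $S>1$ gives $S^d = S^{\log n/\omega}$ growing faster than any power of $n$ while $2^\omega$ is subpolynomial in $n$ in the dense regime, this is $o(1)$. Therefore with probability $1-o(1)$ every step succeeds and the dependence genuinely does not affect the analysis, which is the claim. The step I expect to be the main obstacle is making the geometric comparison rigorous — i.e. verifying uniformly over all admissible configurations of prior nodes $\x_i$ (which the adversary/algorithm may place anywhere consistent with the $\varepsilon$-closer constraint) that the forbidden-region $\hat\gamma$ stays strictly below the target-region $\hat\gamma$, rather than just at a single convenient configuration; this requires checking the worst case in Lemma~\ref{app::lem:dense_gen}, presumably the one where $\x_i$ is as close to $\x_k$ as the constraints allow, and confirming the constant gap survives. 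The bookkeeping of $o(d)$ versus $o(1)$ exponents and the interaction with the Poisson-process assumption is routine by comparison.
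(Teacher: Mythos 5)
Your core comparison is the same as the paper's: identify the worst admissible configuration of the previous point $\y$ as the one where $\y$ is as close to the current point $\x$ as the ``$\varepsilon$-closer'' constraint allows and $\x$ lies on the geodesic from $\y$ to $q$, so that the forbidden part of the target region is contained in $W(\alpha_M,\alpha_s,\arcsin(\hat\alpha_s+2\varepsilon\delta))$, and then use the $\hat\gamma$ computation (Lemma~\ref{app::lem:dense_gen}) to show $\hat\gamma_1^2<\hat\gamma^2$ when $M<\sqrt2 s$, whence the forbidden volume is $o$ of the target volume $W(\alpha_M,\alpha_s,\arcsin(\hat\alpha_s+\varepsilon\delta))$, with the case $M\ge\sqrt2 s$ dispatched separately (the paper notes that there the nearest neighbor is found in one step). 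Up to this point your argument and the paper's coincide, including the caveat you raise about verifying the worst case uniformly, which the paper also treats only at the level of the extremal configuration.

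The genuine gap is in how you accumulate the contributions of \emph{all} prior steps. You bound every one of the $O(\delta^{-1})=O(2^{\omega})$ earlier forbidden regions by the worst-case (immediately preceding step) volume $\tfrac1n T^{d(1+o(1))}$ and claim that $(T/S)^d\cdot\mathrm{poly}(2^{\omega})\to 0$ because ``the exponential decay dominates.'' With $d=\log n/\omega$ this is $2^{c\omega-ad}$ for constants $a,c>0$, which tends to infinity whenever $\omega\gg d$, i.e.\ whenever $d\ll\sqrt{\log n}$ --- a range explicitly allowed by the theorem, which only assumes $d\gg\log\log n$. (Your parenthetical ``$2^{\omega}=2^{d}$'' is the symptom: these are equal only when $d=\sqrt{\log n}$.) The paper avoids this by a locality argument rather than a uniform worst-case sum: because each step shrinks the radius of the cap around $q$ by at least $\varepsilon\delta$ while every cap $C_{\x_i}(\alpha_M)$ has radius $M\delta$, only a \emph{constant} number of earlier caps can intersect the current target region at all; the rest contribute zero forbidden volume. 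With only $O(1)$ contributing terms, each $o(1)$ relative to the target, the conclusion follows without any race between $2^{\omega}$ and $T^d$. Your argument needs this (or an $i$-dependent decay of the forbidden volume) to close; as written the summation step fails in part of the stated regime.
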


\begin{proof}

\begin{figure}
	\centering
	\includegraphics[width = 0.45 \textwidth]{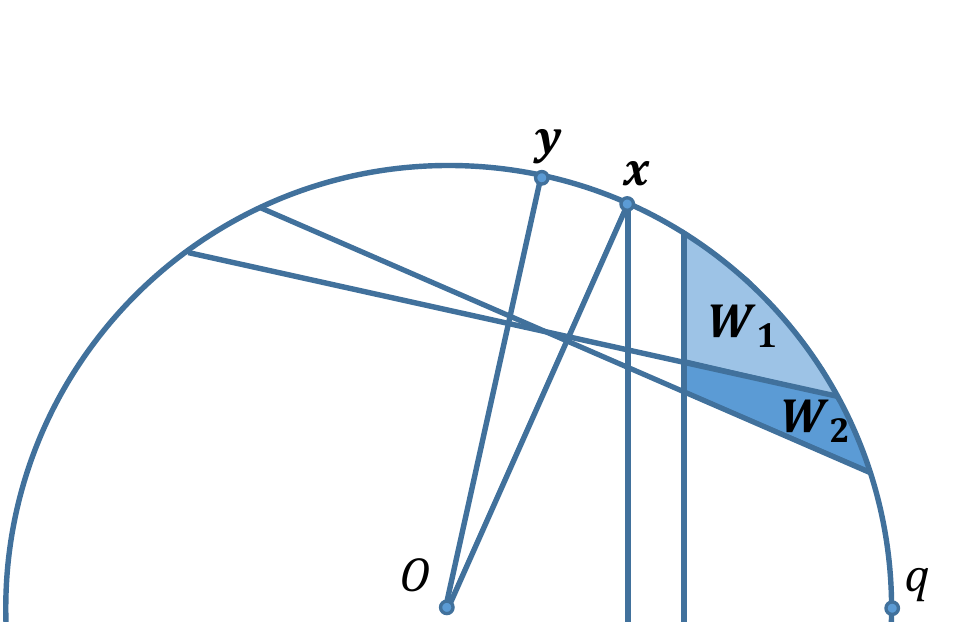}
	\caption{Dependence of consecutive steps}
	\label{app::fig:dependence}
\end{figure}

The main idea is illustrated on Figure~\ref{app::fig:dependence}. Assume that we are currently at a point $\x$ with $\rho(\x,q) = \arcsin\left(\hat\alpha_s + \varepsilon\delta\right)$. Then, as in the proof of Lemma~\ref{app::lem:varepsilon}, we are interested in the volume $W(\alpha_M, \alpha_s, \arcsin\left(\hat\alpha_s + \varepsilon\delta\right))$, which corresponds to $W_1 + W_2$ on Figure~\ref{app::fig:dependence}. Assume that at the previous step we were at some point $\y$. Given that all steps are ``longer than $\varepsilon$'', the largest effect of the previous step is reached when $\y$ is as close to $\x$ as possible and $\x$ lies on the geodesic between $\y$ and $q$. Therefore, the largest possible volume is $W(\alpha_M, \alpha_s, \arcsin(\hat\alpha_s + 2\varepsilon\delta))$, which corresponds to $W_1$ on Figure~\ref{app::fig:dependence}. It remains to show that $W_1$ is negligible compared with $W_1 + W_2$.

If $M < \sqrt{2}s$, then the main term of $W(\alpha_M, \alpha_s, \arcsin\left(\hat\alpha_s + \varepsilon\delta\right))$ is $\hat\gamma^d$ with
\begin{multline*}
\hat\gamma^2 = \frac{2\left(M^2 s^2 + s^2 (s+\varepsilon)^2 + M^2 (s+\varepsilon)^2\right) - \left(M^4 + s^4 + (s+\varepsilon)^4 \right)}{4 (s+\varepsilon)^2} \\
= \frac {M^2 + s^2} 2  
- \frac{\left(M^2 - s^2\right)^2}{4 (s+\varepsilon)^2} - \frac{(s+\varepsilon)^2}{4}\,.
\end{multline*}

The main term of $W(\alpha_M, \alpha_s, \arcsin(\hat\alpha_s + 2\varepsilon\delta))$ is $\hat\gamma_1^d$ with 
\[
\hat\gamma_1^2 = \frac {M^2 + s^2} 2  
- \frac{\left(M^2 - s^2\right)^2}{4 (s+2\varepsilon)^2} - \frac{(s+2\varepsilon)^2}{4}\,.
\]
It is easy to see that $M < \sqrt{2}s$ implies that $\hat\gamma_1^2 < \hat\gamma^2$. As a result, $W(\alpha_M, \alpha_s, \arcsin(\hat\alpha_s + 2\varepsilon\delta)) = o\left(W(\alpha_M, \alpha_s, \arcsin(\hat\alpha_s + \varepsilon\delta))\right)$ (similarly to the other proofs, it is easy to show that the effect of the other terms in Lemma~\ref{app::lem:W} is negligible compared to $\left(\hat\gamma_1/\hat \gamma\right)^d$). Moreover, since at each step we reduce the radius of a spherical cap centered at $q$ by at least $\varepsilon\delta$, any cap encountered in one iteration intersects with only a constant number of other caps, so their overall effect is negligible, which completes the proof.

Finally, let us note that as soon as we have $M > \sqrt{2}s$ (with $s > 1$), with probability $1-o(1)$ we find the nearest neighbor in one step.
\end{proof}

Having this result on  ``almost independence'' of the consecutive steps, we can say that the overall success probability is $1 - O\left(2^{\omega}e^{-{S^{d+o(1)}}}\right)$. Assuming $d \gg \log \log n$, we get $O\left(2^{\omega}e^{-{S^{d(1+o(1))}}}\right) =  O\left(2^{\frac{\log n}{d}}e^{-\log n}\right) = o(1)$. This concludes the proof for the success probability $1-o(1)$ up to choosing suitable values for $s$ and $M$.
	
Let us discuss the time complexity. With probability $1-o(1)$ the number of steps is $\Theta\left(\delta^{-1}\right)$: the upper bound was already discussed; the lower bound follows from the fact that $M\delta$ is the radius of a spherical cap, so we cannot make steps longer than $\arcsin(M\delta)$, and with probability $1-o(1)$ we start from a constant distance from $q$. The complexity of each step is $\Theta\left(f\cdot d \right) = \Theta\left(d^{1/2} \cdot M^d \right)$, so overall we get $\Theta\left(d^{1/2} \cdot 2^{\omega} \cdot M^d \right)$. 
	
It remains to find suitable values for $s$ and $M$.
Before we continue, let us analyze the conditions under which we find exactly the nearest neighbor at the next step of the algorithm. 
Assume that the radius of a cap centered at $q$ and covering the currently considered element is $\hat\alpha_s$ and $\hat\alpha_s = s\,\delta$, $\hat\alpha_M = M \delta$. Further assume that the radius of a spherical cap covering points at distance at most $R$ from $q$ is $\hat\alpha_r = r\delta = \sin R$ for some $r < 1$. The following lemma gives the conditions for $M$ and $s$ such that at the next step of the algorithm we find the nearest neighbor $\bar \x$ with probability $1 - o(1)$ given that $\bar \x$ is uniformly distributed within a distance $R$ from $q$.

\begin{lemma}\label{lem:dense_exact}
If for constant $M,s,r$ we have $M^2 > s^2 + r^2$, then 
\[
C(\alpha_r) - W(\alpha_M, \alpha_r, \arccos \alpha_s) \le C(\alpha_r) \beta^d
\]
with some $\beta < 1$.
\end{lemma}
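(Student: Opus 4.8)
The plan is to recognize that the hypothesis $M^2 > s^2 + r^2$ is exactly the condition putting the pair of caps $C_{\x}(\alpha_M)$, $C_{q}(\alpha_r)$ (with angle $\arccos\alpha_s$ between their centers) into case~(1) of Lemma~\ref{app::lem:W}, and then to divide the error term that case provides by the lower bound on $C(\alpha_r)$ coming from Lemma~\ref{app::lem:C}, using Lemma~\ref{app::lem:dense_gen} to pin down the exponential base.

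First I would translate the hypothesis into heights. With $\hat\alpha_M = M\delta$, $\hat\alpha_r = r\delta$, $\hat\alpha_s = s\delta$ and $\alpha_\bullet^2 = 1-\bullet^2\delta^2$, a short computation gives $(\alpha_r\alpha_s)^2 - \alpha_M^2 = (M^2 - r^2 - s^2)\delta^2 + r^2s^2\delta^4 > 0$, i.e.\ $\alpha_M < \alpha_r\alpha_s = \alpha_r\cos(\arccos\alpha_s)$ for every $d$. In the notation $\alpha=\alpha_M$, $\beta=\alpha_r$, $\theta=\arccos\alpha_s$ of Lemma~\ref{app::lem:W} this is precisely the hypothesis $\alpha\le\beta\cos\theta$ of case~(1), provided the quantity $\gamma$ of that lemma satisfies $\gamma\le1$. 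If instead $\gamma>1$, then by the remark preceding Lemma~\ref{app::lem:W} the caps are not in the disjoint configuration (since $\alpha_M<\alpha_r\alpha_s$ rules out $\alpha>\beta\cos\theta$), so the larger cap contains the smaller one; as $M^2>r^2$ forces $\alpha_M<\alpha_r$, this means $C_{q}(\alpha_r)\subseteq C_{\x}(\alpha_M)$, hence $W(\alpha_M,\alpha_r,\arccos\alpha_s)=C(\alpha_r)$ and the claimed inequality is trivially true. So I may assume $\gamma\le1$.

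In that case, case~(1) of Lemma~\ref{app::lem:W} yields $C(\alpha_r)-W(\alpha_M,\alpha_r,\arccos\alpha_s)\le C_{u,\alpha_r}\,\Theta(d^{-1})\,\hat\gamma^{d}\min\{d^{1/2},1/\gamma\}$. Next I would check that every non-exponential factor is $\Theta(1)$: since $M,s,r$ are constants and $\delta\to0$, the previous computation gives $|\alpha_M-\alpha_r\alpha_s|=\Theta(\delta^2)$, while $\hat\gamma=\Theta(\delta)$ and $\hat\alpha_s=\Theta(\delta)$ and $\gamma\to1$, so $C_{u,\alpha_r}=\frac{\hat\gamma\,\alpha_r\,\hat\alpha_s}{\gamma\,|\alpha_M-\alpha_r\alpha_s|}=\Theta(1)$ and $\min\{d^{1/2},1/\gamma\}=\Theta(1)$. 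Combining with the lower bound $C(\alpha_r)\ge\Theta(d^{-1/2})\hat\alpha_r^{d}=\Theta(d^{-1/2})(r\delta)^{d}$ from Lemma~\ref{app::lem:C}, this gives
\[
\frac{C(\alpha_r)-W(\alpha_M,\alpha_r,\arccos\alpha_s)}{C(\alpha_r)}\ \le\ \Theta(d^{-1/2})\left(\frac{\hat\gamma}{r\delta}\right)^{d}.
\]
It then remains to see that $\hat\gamma/(r\delta)$ tends to a constant strictly below $1$. Applying Lemma~\ref{app::lem:dense_gen} with $\hat x=M\delta$, $\hat y=r\delta$, $\hat z=s\delta$ gives $\hat\gamma\sim\frac{\delta}{2s}\sqrt{2(M^2r^2+r^2s^2+M^2s^2)-(M^4+r^4+s^4)}$, hence $(\hat\gamma/(r\delta))^2\to\beta_0^2:=1-\frac{(M^2-r^2-s^2)^2}{4r^2s^2}$; here $\beta_0^2\ge0$ since we are in the case $\gamma\le1$, and $\beta_0<1$ holds precisely because $M^2\ne r^2+s^2$. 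Choosing any constant $\beta$ with $\beta_0<\beta<1$ lets the polynomial factor $\Theta(d^{-1/2})$ be absorbed for all sufficiently large $d$ (small $d$ being covered by the cruder bound $W>C(\alpha_r)/2$, also from case~(1)), which is exactly the assertion.

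I do not expect a genuine obstacle: the content is entirely in selecting the correct branch of Lemma~\ref{app::lem:W} and in the elementary identity $4r^2s^2-(M^2-r^2-s^2)^2=2(M^2r^2+r^2s^2+M^2s^2)-(M^4+r^4+s^4)$, which makes ``$\beta_0<1$'' equivalent to the hypothesis $M^2>s^2+r^2$. The one thing one must not overlook is to dispatch the regime $\gamma>1$ (large $M$) separately, where the smaller cap is simply engulfed by the larger one and the left-hand side is $0$.
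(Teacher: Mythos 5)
Your proof is correct and follows essentially the same route as the paper's: identify case (1) of Lemma~\ref{app::lem:W} via $\alpha_M<\alpha_r\alpha_s$, use Lemma~\ref{app::lem:dense_gen} to get $\hat\gamma^2\sim r^2\delta^2\bigl(1-\tfrac{(M^2-r^2-s^2)^2}{4r^2s^2}\bigr)$ so that $\hat\gamma^d\le (r\delta)^d\beta^d$ with $\beta<1$, and check that the remaining factors ($C_{u,\alpha_r}$, the $\Theta(d^{-1})$, and the $\min$) only contribute a polynomial in $d$ against the lower bound $C(\alpha_r)\ge\Theta(d^{-1/2})(r\delta)^d$. Your explicit dispatch of the degenerate regime $\gamma>1$ (where $C_q(\alpha_r)$ is engulfed by $C_{\x}(\alpha_M)$ and the left-hand side vanishes) is a small rigor point that the paper leaves implicit.
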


\begin{proof}
	
First, recall the lower bound for $C(\alpha_r)$: $C(\alpha_r) \ge \Theta\left(d^{-1/2}\right) \delta^{d}r^d$. 
	
Since we have $M^2 > s^2 + r^2$, then asymptotically we have $\alpha_M < \alpha_s \alpha_r$, so the cases (1)-(2) of Lemma~\ref{app::lem:W} should be applied (see Figure~\ref{fig:2}). Let us estimate $\hat\gamma^2$ (Lemma~\ref{app::lem:dense_gen}):
\begin{multline*}
\hat\gamma^2 \sim \delta^2\cdot \frac{2\left(M^2s^2 + M^2 r^2 + s^2 r^2\right) - \left(M^4 + r^4 + s^4 \right)}{4\,s^2} \\
=  \delta^2\cdot \frac{-\left(M^2 - s^2 - r^2\right)^2 + 4\,s^2r^2}{4\,s^2} = \delta^2 r^2 \left(1 - \Theta(1)\right).
\end{multline*}
	
Therefore $\hat \gamma^d \le \delta^d r^d \beta^d$ with some $\beta < 1$.
	
It only remains to estimate the other terms in the upper bound from Lemma~\ref{app::lem:W}:
\[ 
\frac{\hat\gamma\,\alpha_r\,\hat\alpha_s}{\gamma |\alpha_M - \alpha_r\,\alpha_s |}
\Theta\left( d^{-1} \right) \min \left\{d^{1/2}, \frac{1}{\gamma} \right\} = O\left(d^{-1}\right)\,,
\]
from which the lemma follows.
\end{proof}

Now we are ready to finalize the proof.
We solve $c,R$-ANN if either we have $\arcsin(s\delta) \le c R$ or we are sufficiently close to $q$ to find the exact nearest neighbor $\bar \x$ in the next step of the algorithm.
Let us analyze the first possibility. Let $\sin R = r\delta$, then we need $s < c\,r$. According to Lemma~\ref{app::lem:varepsilon}, it is sufficient to have $r\,c > 1$ and either $M \ge \sqrt{2} \, rc$ or $M^2 - \frac{M^4}{4r^2c^2} > 1$.
Alternatively, according to Lemma~\ref{lem:dense_exact}, to find the exact nearest neighbor with probability $1-o(1)$, it is sufficient to reach such $s$ that $M^2 > s^2+r^2$. For this, according to Lemma~\ref{app::lem:varepsilon}, it is sufficient to have $s>1$, $M^2 > s^2 + r^2$, and either $M \ge \sqrt{2} s$ or $M^2 - \frac{M^4}{4s^2} > 1$. 

One can show that if the following conditions on $M$ and $r$ are satisfied, then we can choose an appropriate $s$ for the two cases discussed above:
\begin{itemize}[nolistsep]
\item[(a)] $r\,c > 1$ and $M^2 > 2\,r^2\,c^2 \left(1 - \sqrt{1 -  \frac{1}{r^2c^2}}\right)$;
\item[(b)]
$M^2 > \frac{2}{3} \left(r^2 + 1 + \sqrt{r^4 - r^2 + 1}\right)$.
\end{itemize}
To succeed, we need either (a) or (b) to be satisfied. The bound in (a) decreases with $r$ ($r > 1/c$) and for $r = \frac{1}{c}$ it equals $\sqrt{2}$. 
The bound in (b) increases with $r$ and for $r = 1$ it equals $\sqrt{2}$. 
To find a general bound holding for all $r$, we take the ``hardest'' $r \in (\frac 1 c, 1)$, where the bounds in (a) and (b) are equal to each other. This value is $r = \sqrt{\frac{4c^2}{(c^2+1)(3c^2-1)}}$ and it gives the bound $M > \sqrt{\frac{4c^2}{3c^2 - 1}}$ stated in the theorem.
	
\subsection{Larger neighborhood in dense regime}\label{sec:larger_neighborhood}

As discussed in the main text, it could potentially be possible that taking $M = M(n) \gg 1$ improves the query time. The following theorem shows that this is not the case.
	
\begin{theorem}\label{thm:dense_growing}
Let $M = M(n) \gg 1$. Then, with probability $1-o(1)$, graph-based NNS finds the exact nearest neighbor in one iteration; time complexity is $\Omega\left(d^{1/2}\cdot 2^{\omega} \cdot M^{d-1} \right)$; space complexity is $\Theta\left(n\cdot d^{-1/2}\cdot M^d\cdot \log n\right)$.
\end{theorem}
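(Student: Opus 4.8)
The plan is to read off all three claims from the spherical-cap estimates of Appendix~\ref{app::sec:caps} together with the concentration bookkeeping already used for Theorem~1 (Lemmas~\ref{app::lem:one-step},~\ref{app::lem:k-steps},~\ref{app::lem:space}); the only genuinely new feature is that $M=M(n)$ diverges, while still staying $\le 2^{\omega}$ so that $\hat\alpha_M=M\delta\le 1$ is a legitimate cap radius.

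\emph{Space complexity.} I would dispatch this first, as it is immediate. Since $\hat\alpha_M=M\delta$, $\delta^{d}=1/n$ and $\alpha_M\to 1$, the factor $\min\{d^{1/2},1/\alpha_M\}$ in Lemma~\ref{app::lem:C} is $\Theta(1)$, so $C(\alpha_M)=\Theta(d^{-1/2})\hat\alpha_M^{d}=\Theta(d^{-1/2}M^{d}/n)$ and $f=(n-1)C(\alpha_M)=\Theta(d^{-1/2}M^{d})$. Lemma~\ref{app::lem:space} then gives $E(G)=\Theta(fn)=\Theta(d^{-1/2}M^{d}n)$ with probability $1-O(1/(fn))=1-o(1)$, and storing adjacency lists costs $\Theta(E(G)\log n)=\Theta(n\,d^{-1/2}M^{d}\log n)$.

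\emph{The search returns the exact nearest neighbour.} Let $\hat\alpha_r=r\delta$ be the scale with $C(\alpha_r)\sim 1/n$, so $r=1+o(1)$ and $\bar\x\in C_q(\alpha_1)\subseteq C_q(\alpha_r)$. The endgame is clean: once the search reaches a point $\x$ with $s\le M/2$, the triangle inequality on $\S^{d}$ together with convexity of $\arcsin$ on $[0,1]$ (which vanishes at $0$) gives $\rho(\x,q)+\arcsin(\hat\alpha_r)=\arcsin(\hat\alpha_s)+\arcsin(\hat\alpha_r)<\arcsin(\hat\alpha_M)$, i.e.\ $C_q(\alpha_r)\subseteq C_{\x}(\alpha_M)$ (the degenerate $\gamma>1$ situation noted just before Lemma~\ref{app::lem:W}); hence $\bar\x$ is a neighbour of $\x$ in $G(M)$, and being globally closest to $q$ it is the neighbour of $\x$ closest to $q$, so one greedy step moves $\x\mapsto\bar\x$ and the search halts. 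It remains to drive $s$ down to $M/2$. From a random start $\rho(\x_0,q)\to\pi/2$ with probability $1-o(1)$ (angles on $\S^{d}$ concentrate at $\pi/2$ as $d\to\infty$), i.e.\ $s_0\sim 2^{\omega}$. I would then prove a ``long-step'' analogue of Lemma~\ref{app::lem:varepsilon}: for $s\ge M$ the volume $W(\alpha_M,\alpha_s,\arcsin(\hat\alpha_s+c\,M\delta))$ is still $\ge\tfrac1n S^{d(1+o(1))}$ for some constant $S>1$ and small $c>0$ --- this follows from case~(3) of Lemma~\ref{app::lem:W}, the estimate of $\hat\gamma$ in Lemma~\ref{app::lem:dense_gen} (which here gives $\hat\gamma\sim M\delta\sqrt{1-c^{2}}$), and the $C_l$ prefactor computation in the proof of Lemma~\ref{app::lem:dense_s}, which again stays $\Theta(1)$. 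Hence each step shrinks $\hat\alpha_s$ by $\Theta(M\delta)$, so $\Theta(2^{\omega}/M)$ steps bring the search to $s\le M/2$; when $\hat\alpha_M=\Theta(1)$ this descent is $O(1)$ steps, which is the ``one iteration'' of the statement. The dependence between the $\Theta(2^{\omega}/M)$ consecutive steps is handled exactly as in Lemma~\ref{app::lem:dependence}: each step shrinks the radius of the cap about $q$ by $\Theta(M\delta)$, so a radius-$M\delta$ cap meets only $O(1)$ of the others and the cumulative conditioning perturbs each volume by a factor $1+o(1)$; a union bound over the steps, each succeeding with probability $1-e^{-S^{d(1+o(1))}}$, keeps the overall success probability $1-o(1)$ under $d\gg\log\log n$.

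\emph{Time complexity.} For the $\Omega$-bound I would union-bound two facts over the visited nodes. First, a greedy step displaces the current point by at most $\arcsin(\hat\alpha_M)=O(M\delta)$, while with probability $1-o(1)$ the start lies at distance $\Omega(1)$ from $q$; so at least $\Omega(1/(M\delta))=\Omega(2^{\omega}/M)$ steps occur. Second, by Lemma~\ref{app::lem:one-step} each visited node $v$ has $N(v)\ge f/2$ with probability $1-4/f$, and the greedy step evaluates $\langle q,u\rangle$ (cost $\Theta(d)$) for every neighbour $u$ of $v$, so each step costs $\Omega(fd)=\Omega(d^{1/2}M^{d})$. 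The union bound fails with probability $O\big(\tfrac{2^{\omega}}{Mf}\big)=O\big(\tfrac{2^{\omega}d^{1/2}}{M^{d+1}}\big)=o(1)$, and multiplying the two bounds gives total time $\Omega\big(\tfrac{2^{\omega}}{M}\cdot d^{1/2}M^{d}\big)=\Omega(d^{1/2}\,2^{\omega}\,M^{d-1})$.

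\emph{Main obstacle.} Everything above is essentially bookkeeping except the growing-parameter extension of the cap lemmas feeding the middle step: one must verify that when $\alpha_M,\alpha_s$ depend on $n$ with $M,s\to\infty$ the prefactors $C_{l,\cdot},C_{u,\cdot}$ (and their sum) in Lemma~\ref{app::lem:W} remain $\Theta$ of the geometric quantities they approximate --- so that $W$ is still governed by $\hat\gamma^{d}$ --- and re-run the dependence bound of Lemma~\ref{app::lem:dependence} over $\Theta(2^{\omega}/M)$ coupled steps rather than over $\Theta(1)$ of them.
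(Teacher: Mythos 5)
Your proposal takes essentially the same route as the paper: the space bound is read off from Lemmas~\ref{app::lem:C} and~\ref{app::lem:space} exactly as for constant $M$, correctness is inherited from the Theorem~1 machinery, and the time bound comes from multiplying the one-step cost $\Theta(d^{1/2}M^d)$ by the $\Omega(2^{\omega}/M)$ steps forced by the maximal step length $\arcsin(M\delta)$. The paper's own proof is only four sentences long and simply asserts that ``the previous reasoning'' carries over, so your write-up is considerably more explicit; the endgame containment $C_q(\alpha_r)\subseteq C_{\x}(\alpha_M)$ via superadditivity of $\arcsin$, and the long-step analogue of Lemma~\ref{app::lem:varepsilon} with $\hat\gamma^2\sim M^2\delta^2(1-\Theta(c^2))$, are sound and are exactly what ``the previous reasoning'' must mean.

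One quantitative slip is worth fixing: for the time lower bound you union-bound Lemma~\ref{app::lem:one-step} over the $k=\Theta(2^{\omega}/M)$ visited nodes and claim failure probability $O\left(2^{\omega}d^{1/2}M^{-d-1}\right)=o(1)$. This requires $\log M \gg \omega/d = \log n/d^2$, which fails when $d\ll\sqrt{\log n}$ and $M$ grows slowly (e.g.\ $d=\log^{0.4}n$, $M=\log\log n$, both admissible under the theorem's hypotheses). The paper's Lemma~\ref{app::lem:k-steps} avoids this by concentrating the aggregate count of examined nodes (dominated by $\mathrm{Bin}(k(n-1),C(\alpha_M))$) rather than each node separately, giving failure probability $O(1/(kf))=o(1)$; cite that instead. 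A cosmetic point: ``one iteration'' in the statement means a single run of greedy search without restarts, not $O(1)$ steps at the end of the descent.
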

	
As a result, when $M \to \infty$, both time and space complexities become larger compared with constant~$M$ 
%(see Theorem~\ref{thm:dense_main}).
(see Theorem~1 from the main text).

\begin{proof}
	
When $M$ grows with $n$, it follows from the previous reasoning that the algorithm succeeds with probability $1 - o(1)$. The analysis of the space complexity is the same as for constant $M$, so we get $\Theta\left(d^{-1/2} \cdot M^d\cdot n \cdot \log n\right)$. When analyzing the time complexity, we note that the one-step complexity is $\Theta\left(d^{-1/2} \cdot M^d\right)$. It is easy to see that we cannot make steps longer than $O\left(M\cdot 2^{-\omega}\right)$. This leads to the time complexity $\Omega\left(d^{1/2} \cdot M^{d-1} \cdot 2^{\omega}\right)$.
\end{proof}

\subsection{Proof of Theorem~2 (greedy search in sparse regime)}

%~\ref{thm:sparse_main}}
	
For sparse datasets, instead of radii, we operate with heights of spherical caps. In this case, we have $C(\alpha_1) \sim \frac 1 n$, i.e., $\alpha_1^2 \sim 1 - n^{-\frac{2}{d}} = 1 - 2^{-\frac{2}{\omega}} \sim \frac{2 \ln 2}{\omega}$. We further denote $\frac{2 \ln 2}{\omega}$ by  $\delta$. 
	
We construct $G(M)$ using spherical caps with height $\alpha_M$, $\alpha_M^2 = M \, \delta$, where $M$ is some constant.  Then, from Lemma~\ref{app::lem:C}, we get that the expected number of neighbors of a node is $f = \Theta\left( n\,d^{O(1)} ( 1 -M\delta)^{d/2}\right)  = n^{1-M+o(1)}$. From this and Section~\ref{app::sec:space} the stated space complexity follows. The one-step time complexity $n^{1-M+o(1)}$ follows from Section~\ref{app::sec:time}.
	
Our aim is to solve $c,R$-ANN with some $c>1$, $R>0$. If $R \ge \pi/2c$, then we can easily find the required near neighbor within a distance $\pi/2$, since we start $G(M)$-based NNS from such point. Let us consider any $R < \frac{\pi}{2c}$. It is clear that in this case we have to find the nearest neighbor itself, since $R\,c$ is smaller than the distance to the (non-planted) nearest neighbor with probability $1+o(1)$. Note that $\alpha_c = \cos\frac{\pi}{2c} < \alpha_R := \cos{R}$.

\begin{lemma}\label{app::lem:sparse_M_s}
Assume that $\alpha_M^2 = M \delta$, $\alpha_s^2 = s \delta$, $\alpha_\varepsilon^2 = \varepsilon \delta$, $M, s>0$ are constants, and $\varepsilon \ge 0$ is bounded by a constant. If $s + M < 1$, then  
\[
W(\alpha_M, \alpha_s, \arccos \alpha_\varepsilon) \ge \frac{1}{n} \cdot  n^{\Omega(1)}\,.
\] 	
\end{lemma}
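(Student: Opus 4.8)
The plan is to reduce the claim to the lower bound in case (3) of Lemma~\ref{app::lem:W} and to show that, under the assumption $s+M<1$, the quantity $\hat\gamma^d$ is of the form $n^{-c+o(1)}$ with $c<1$, while the prefactor $C_{l,\alpha}+C_{l,\beta}$ contributes only a subpolynomial factor. First I would check that we are indeed in case (3): since $\alpha_M^2 = M\delta$, $\alpha_s^2 = s\delta$, $\alpha_\varepsilon^2 = \varepsilon\delta$ with $\delta = \frac{2\ln 2}{\omega} \to 0$, all of $\alpha_M,\alpha_s,\alpha_\varepsilon$ tend to $0$, so $\alpha_M \le \alpha_s\cos\theta$ would require $\alpha_M \lesssim \alpha_s$ up to lower-order terms; one verifies that the genuine inequalities $\alpha_M > \alpha_s\cos\theta$ and $\alpha_s > \alpha_M\cos\theta$ hold asymptotically (here $\cos\theta = \alpha_\varepsilon$, which is itself $o(1)$), placing us in the regime of Figure~\ref{fig:1}. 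Also one should confirm $\gamma \le 1$, which again follows since $\gamma^2 = \frac{\alpha_M^2+\alpha_s^2-2\alpha_M\alpha_s\alpha_\varepsilon}{1-\alpha_\varepsilon^2} = (M+s)\delta(1+o(1)) \to 0$.

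Next I would compute $\hat\gamma$. The cleanest route is to avoid Lemma~\ref{app::lem:dense_gen} (which is tailored to the dense regime where radii, not heights, are $o(1)$) and instead work directly: from $\gamma^2 = (M+s)\delta(1+o(1))$ we get $\hat\gamma^2 = 1-\gamma^2 = 1-(M+s)\delta(1+o(1))$, hence
\[
\hat\gamma^{d} = \left(1-(M+s)\delta(1+o(1))\right)^{d/2}
= n^{-(M+s)+o(1)},
\]
using $d = \log n/\omega$ and $\delta = \frac{2\ln 2}{\omega}$, so that $\frac{d}{2}(M+s)\delta = (M+s)\frac{\log n}{\omega}\cdot\frac{\ln 2}{\omega}$; more carefully, $\left(1-(M+s)\delta\right)^{d/2} = \exp\!\left(-\tfrac{d}{2}(M+s)\delta(1+o(1))\right)$ and $\tfrac{d}{2}\delta = \tfrac{d\ln 2}{\omega} = \log_2 n \cdot \tfrac{\ln 2}{\omega}\cdot\tfrac{1}{\log_2 n}\cdot d$, i.e. it equals $(\ln n)\cdot\tfrac{1}{\omega^2}\cdot\log n$ — one should match constants exactly against the normalization $C(\alpha_1)\sim 1/n$, which is precisely the definition of $\delta$, so $\hat\gamma^d = n^{-(M+s)+o(1)}$. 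Since $s+M<1$, the exponent $M+s$ is strictly less than $1$, giving $\hat\gamma^d = n^{-(M+s)+o(1)}$ with $M+s<1$.

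Then I would dispatch the prefactor. By Lemma~\ref{app::lem:W}, $W(\alpha_M,\alpha_s,\arccos\alpha_\varepsilon) \ge (C_{l,\alpha}+C_{l,\beta})\,\Theta(d^{-1})\,\hat\gamma^d$, so it remains to check $C_{l,\alpha}+C_{l,\beta}$ is at worst $n^{-o(1)}$ (it is certainly $O(1)$ from above). Writing $C_{l,\alpha} = \frac{\alpha_M(\hat\alpha_M\sin\theta - |\alpha_s-\alpha_M\cos\theta|)}{\gamma\hat\gamma\sin\theta}$ and substituting $\alpha_M = \sqrt{M\delta}$, $\alpha_s=\sqrt{s\delta}$, $\cos\theta = \sqrt{\varepsilon\delta}$, $\sin\theta = \sqrt{1-\varepsilon\delta}\to 1$, $\gamma\sim\sqrt{(M+s)\delta}$, $\hat\gamma\to 1$, one sees every term in numerator and denominator scales like a power of $\sqrt\delta$, and the leading $\sqrt\delta$ powers cancel so that $C_{l,\alpha} = \Theta(1)$ (one must verify the leading cancellation does not vanish — this uses $s+M<1$, equivalently that we are strictly inside case (3), analogously to the $\Theta(1)$ computation of $C_l$ in the proof of Lemma~\ref{app::lem:dense_s}). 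Since $d^{-1} = (\log n/\omega)^{-1} = n^{-o(1)}$, we conclude $W(\alpha_M,\alpha_s,\arccos\alpha_\varepsilon) \ge n^{-(M+s)+o(1)} = \frac{1}{n}\,n^{1-(M+s)+o(1)} = \frac{1}{n}\,n^{\Omega(1)}$, as claimed.

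The main obstacle I anticipate is the bookkeeping of the $o(1)$ terms and making sure the constant $M+s$ in the exponent of $\hat\gamma^d$ is exactly right rather than off by a factor of $2$ or $\ln 2$; this hinges on pinning down $\delta$ through the normalization $C(\alpha_1)\sim 1/n$ (so $\alpha_1^2\sim\frac{2\ln 2}{\omega}$) and carefully tracking how $d = \log n/\omega$ enters $\hat\gamma^d = (1-\gamma^2)^{d/2}$. The secondary nuisance is confirming the $\Theta(1)$ behaviour of $C_{l,\alpha}+C_{l,\beta}$: one needs the leading-order $\sqrt\delta$ terms in the numerator to cancel against those in the denominator without producing an identically-zero leading coefficient, which is exactly where the hypothesis $s+M<1$ (keeping us genuinely in case (3), away from the degenerate boundary with cases (1)–(2)) is used.
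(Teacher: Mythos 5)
Your proposal is correct and follows essentially the same route as the paper's proof: reduce to case (3) of Lemma~\ref{app::lem:W}, show $\hat\gamma^d = n^{-(M+s)+o(1)}$ via $\gamma^2 = (M+s)\delta(1+o(1))$, and absorb the prefactors $C_{l,\alpha}+C_{l,\beta} = \Theta(1)$ and $\Theta(d^{-1}) = n^{-o(1)}$ into the $n^{\Omega(1)}$ factor. The only slips are that in the sparse regime $d = \omega\log n$ rather than $\log n/\omega$ (so that $\tfrac{d}{2}\delta = \ln n$ and the exponent $M+s$ falls out directly), and that the hypothesis $s+M<1$ is needed only to make the exponent $1-(M+s)$ positive rather than to keep $C_l$ nondegenerate --- but your fallback to the normalization $C(\alpha_1)\sim 1/n$ lands the computation in the right place regardless.
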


\begin{proof}

Asymptotically, we have $\alpha_M > \alpha_s \alpha_\varepsilon$ and $\alpha_s > \alpha_M \alpha_\varepsilon$, so we are under the condition (3) in Lemma~\ref{app::lem:W}.	
First, consider the main term of $W(\alpha_M,\alpha_s,\arccos\alpha_\varepsilon)$:
\[
\hat\gamma^d = \left(1 - \frac{M\delta + s \delta - 2 \sqrt{Ms\,\varepsilon\,\delta}\, \delta}{1 - \varepsilon \delta}\right)^{d/2} = e^{-\frac{d}{2}\delta \left(M + s + O\left(\sqrt{\delta}\right)  \right)} \\ = n^{-\left(M+s + O\left(\sqrt{\delta}\right)\right)}
= \frac{1}{n} \cdot  n^{\Omega(1)}.
\]
	
It remains to multiply this by $\Theta\left(d^{-1}\right)$ and $C_l = C_{l,\alpha} + C_{l,\beta}$ (see Lemma~\ref{app::lem:W}). It is easy to see that $C_l = \Omega(1)$ in this case, so both terms can be included to $n^{\Omega(1)}$, which concludes the proof.
\end{proof}

It follows from Lemma~\ref{app::lem:sparse_M_s} that if $M+s<1$, then we can reach a spherical cap with height $\alpha_s = \sqrt{s \delta}$ centered at $q$ in just one step (starting from a distance at most $\pi/2$). And we get $g(n) =  n^{\Omega(1)}.$ 

Recall that $M < \frac{\alpha_c^2}{\alpha_c^2 + 1}$ and let us take $s = \frac{1}{\alpha_c^2 + 1}$, then we have $M+s < 1$. The following lemma discusses the conditions for $M$ and $s$ such that at the next step of the algorithm we find $\bar \x$ with probability $1 - o(1)$.

\begin{lemma}\label{lem:sparse_find}
If for constant $M$ and $s$ we have $M < s \alpha_R^2$, then 
\[
C(\alpha_R) - W(\alpha_M, \alpha_R, \arccos \alpha_s) = C(\alpha_R) n^{-\Omega(1)}\,.
\]
\end{lemma}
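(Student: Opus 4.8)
The plan is to follow the same template used in Lemma~\ref{lem:dense_exact} and Lemma~\ref{app::lem:dense_s}, but now in the sparse regime where we track heights rather than radii. The quantity to control is $C(\alpha_R) - W(\alpha_M,\alpha_R,\arccos\alpha_s)$, which is the part of the cap $C_q(\alpha_R)$ around the true nearest neighbor that is \emph{not} covered by the intersection with the $G(M)$-neighborhood cap centered at the current point. Since we assume $M < s\,\alpha_R^2$, asymptotically $\alpha_M < \alpha_R\,\alpha_s$, so we are in case (1) of Lemma~\ref{app::lem:W}: the cap $C(\alpha_R)$ is more than half-covered and $C(\alpha_R) - W(\alpha_M,\alpha_R,\arccos\alpha_s)$ is bounded by $C_{u,\alpha_R}\,\Theta(d^{-1})\,\hat\gamma^{d}\min\{d^{1/2},1/\gamma\}$, where $\gamma$ is the parameter from Lemma~\ref{app::lem:W} for the triple $(\alpha_M,\alpha_R,\arccos\alpha_s)$.

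First I would compute $\hat\gamma^{2}$. Using $\hat\gamma^2 = 1 - \gamma^2 = 1 - \frac{\alpha_M^2 + \alpha_R^2 - 2\alpha_M\alpha_R\alpha_s}{1-\alpha_s^2}$ with $\alpha_M^2 = M\delta$, $\alpha_s^2 = s\delta$, and $\alpha_R^2 = \alpha_R^2$ fixed, I expand to first order in $\delta = \frac{2\ln 2}{\omega}$. Because $\alpha_R$ is a constant bounded away from $1$ while $\alpha_M,\alpha_s \to 0$, the dominant correction is $\hat\gamma^2 = (1-\alpha_R^2) - M\delta + \alpha_R^2 s\delta + O(\delta^{3/2}) = (1-\alpha_R^2)\bigl(1 - \tfrac{(M - \alpha_R^2 s)\delta}{1-\alpha_R^2}\bigr)(1+o(1))$. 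The condition $M < s\,\alpha_R^2$ makes the coefficient $M - \alpha_R^2 s$ negative, so $\hat\gamma^2 < (1-\alpha_R^2)\bigl(1 + c_0\delta\bigr)$ for some $c_0 > 0$; more usefully, since $C(\alpha_R) \asymp d^{-1/2}\hat\alpha_R^{\,d} = d^{-1/2}(1-\alpha_R^2)^{d/2}$ by Lemma~\ref{app::lem:C}, the ratio is $\hat\gamma^{d}/C(\alpha_R) \asymp d^{1/2}\bigl(1 + \tfrac{(\alpha_R^2 s - M)\delta}{1-\alpha_R^2}\bigr)^{d/2}$. Wait — I want this ratio to be $n^{-\Omega(1)}$, so I must be careful about the sign: I should instead directly bound $C(\alpha_R) - W$ and compare to $C(\alpha_R)$. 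The cleaner route is to redo the $W$-integral as in the proof of Lemma~\ref{app::lem:W}(1), where $g(r) = g_{\alpha_R}(r) - g_{\alpha_M}(r) \le g_{\alpha_M}(r)$ type bound is NOT what we want; rather the gap $C(\alpha_R) - W$ picks up only the ``sliver'' contribution governed by $g_{\alpha_M}$, i.e.\ it scales like $\hat\gamma_{\text{gap}}^{\,d}$ where $\hat\gamma_{\text{gap}}$ is the relevant crossing-chord distance, and this is strictly smaller than $\hat\alpha_R$ by a factor $1-\Omega(\delta)$, hence raised to the power $d \asymp \omega \asymp 1/\delta$ it is $e^{-\Omega(d\delta)} = n^{-\Omega(1)}$ relative to $C(\alpha_R)$.

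The key steps in order: (i) verify we are in case (1) of Lemma~\ref{app::lem:W} by checking $\alpha_M \le \alpha_R\cos\theta = \alpha_R\alpha_s$ asymptotically, which follows from $M < s\alpha_R^2$ and $\alpha_M^2/( \alpha_R^2\alpha_s^2) = M\delta/(\alpha_R^2 s\delta) \to M/(s\alpha_R^2) < 1$; (ii) apply the upper bound $C(\alpha_R) - W \le C_{u,\alpha_R}\,\Theta(d^{-1})\hat\gamma^{d}\min\{d^{1/2},1/\gamma\}$; (iii) estimate $\hat\gamma^2$ to first order in $\delta$ using Lemma~\ref{app::lem:dense_gen}-style expansion (or directly), obtaining $\hat\gamma^2 = (1-\alpha_R^2) - (M-\alpha_R^2 s)\delta + O(\delta^{3/2})$ — but then I should note that $1-\alpha_R^2$ itself is a constant, so the sign question is subtler than in the dense regime. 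Actually the correct comparison: $\hat\gamma^2 / \hat\alpha_R^2 = 1 + \frac{(\alpha_R^2 s - M)\delta + O(\delta^{3/2})}{1-\alpha_R^2}$. Hmm, this exceeds $1$, which would make $\hat\gamma^d / \hat\alpha_R^d$ \emph{large}, not small. The resolution must be that in case (1) the gap $C(\alpha_R) - W$ is governed not by $\hat\gamma$ as defined but by the smaller of two crossing distances, analogous to how in Lemma~\ref{app::lem:W}(1) one writes $g = g_\beta - g_\alpha$ and the gap $C(\beta) - W$ is controlled by $g_\alpha$ alone; the effective exponential base is then $\sqrt{1 - \alpha_M^2/(\text{something})}$-type, strictly below $\hat\alpha_R$. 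So step (iii) should be: re-run the $W$-integral with $g(r) = g_{\alpha_R}(r) - g_{\alpha_M}(r)$, use $C(\alpha_R) - W = \Theta(d)\hat\alpha_R^{d-1}\!\int_0^1 g_{\alpha_M}(\sqrt{1-\hat\alpha_R^2 t})\,t^{(d-3)/2}dt$, expand $g_{\alpha_M}$ as in the Lemma~\ref{app::lem:W} proof to get a factor $\hat\alpha_M \asymp \sqrt{M\delta}$ times a power of $t$, and observe the resulting integral is polynomially bounded, so $C(\alpha_R) - W = \Theta(\text{poly}(d))\,\hat\alpha_R^{d}\,\sqrt{\delta} \cdot (\text{correction})$. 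The genuinely small factor then comes from the ratio of heights $\alpha_M / \alpha_s$ entering the arccos arguments. The main obstacle is getting this ``sliver'' estimate precise enough to extract the $n^{-\Omega(1)}$ saving: I expect to need the expansion $\arccos(\alpha_M/r) - \arccos(\alpha_M/\gamma)$ to second order and to argue that, because $M < s\alpha_R^2$, the crossing point of the two chords lies at height $\gamma$ with $\hat\gamma$ bounded \emph{below} $\hat\alpha_R$ by a factor $1 - \Omega(\delta)$ when measured along the correct sliver — equivalently, $\hat\gamma / \hat\alpha_R \le 1 - \Omega(\delta)$ — after which $(\hat\gamma/\hat\alpha_R)^d \le e^{-\Omega(d\delta)} = e^{-\Omega(\omega \cdot 1/\omega \cdot \ln n \cdot \text{const})} = n^{-\Omega(1)}$, and the polynomial and $C_{u,\alpha_R}$ factors (which are $O(\text{poly}(d)) = n^{o(1)}$) are absorbed. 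Once the exponential gap is pinned down, combining with $C(\alpha_R) \ge \Theta(d^{-1/2})\hat\alpha_R^{d}$ from Lemma~\ref{app::lem:C} gives $C(\alpha_R) - W(\alpha_M,\alpha_R,\arccos\alpha_s) \le C(\alpha_R)\,n^{-\Omega(1)}$ as claimed.
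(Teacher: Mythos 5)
Your skeleton is the paper's: check that $M < s\,\alpha_R^2$ puts you in case (1) of Lemma~\ref{app::lem:W}, bound the gap by $C_{u,\alpha_R}\,\Theta(d^{-1})\,\hat\gamma^{d}\min\{d^{1/2},1/\gamma\}$, show $\hat\gamma^{d} = \hat\alpha_R^{d}\,n^{-\Omega(1)}$, and absorb the polynomial prefactors against the lower bound $C(\alpha_R)\ge\Theta(d^{-1/2})\hat\alpha_R^{d}$. But the central computation goes wrong and you never recover it. In expanding $\gamma^2 = \frac{\alpha_M^2+\alpha_R^2-2\alpha_M\alpha_R\alpha_s}{1-\alpha_s^2}$ you treat the cross term $2\alpha_M\alpha_R\alpha_s$ as $O(\delta^{3/2})$, but $\alpha_M\alpha_s=\sqrt{M\delta}\cdot\sqrt{s\delta}=\sqrt{Ms}\,\delta$, so this term is $2\sqrt{Ms}\,\alpha_R\,\delta$ --- the same order as $M\delta$ and $s\alpha_R^2\delta$. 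Keeping it, the first-order coefficient is $M-2\sqrt{Ms}\,\alpha_R+s\alpha_R^2=\bigl(\sqrt{M}-\sqrt{s}\,\alpha_R\bigr)^2$, a perfect square which is strictly positive precisely because $M\neq s\alpha_R^2$. Hence $\gamma^2=\alpha_R^2+\Theta(\delta)$ with no sign ambiguity, so $\hat\gamma^2=\hat\alpha_R^2\bigl(1-\Theta(\delta)\bigr)$ and $\hat\gamma^{d}=\hat\alpha_R^{d}e^{-\Theta(d\delta)}=\hat\alpha_R^{d}n^{-\Omega(1)}$, since $d\delta=\Theta(\log n)$ in the sparse regime. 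That is the whole proof.

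Because of the dropped term you conclude (wrongly) that $\hat\gamma>\hat\alpha_R$ and then invent a ``resolution'' --- that the gap in case (1) is governed by some crossing distance smaller than $\gamma$ --- which rests on a false premise: the two bounding chords meet at a single point at distance $\gamma$ from the origin, and the case (1) bound of Lemma~\ref{app::lem:W} already carries exactly this $\hat\gamma^{d}$ (the sliver is controlled by $g_{\alpha_M}$, but the exponential base is still $\hat\gamma$). The alternative route you sketch --- re-running the integral with $g=g_{\alpha_R}-g_{\alpha_M}$ and extracting the decay from a second-order expansion --- is never carried out, and the decisive inequality $\hat\gamma/\hat\alpha_R\le 1-\Omega(\delta)$ is only conjectured (``I expect to need\dots''), so as written the proof is incomplete. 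Your step (i) is correct, and your handling of the prefactors ($C_{u,\alpha_R}=\Theta(\delta^{-1/2})$ since $|\alpha_M-\alpha_R\alpha_s|=\Theta(\sqrt{\delta})$, times $\Theta(d^{-1})$, all swallowed by $n^{-\Omega(1)}$) matches the paper; only the $\gamma$ expansion needs to be redone as above.
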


\begin{proof}
	
First, recall the lower bound for $C(\alpha_R)$: $C(\alpha_R) \ge \Theta\left(d^{-1/2}\right) (1 - \alpha_R^2)^{d/2}$. 
	
Note that since we have $M < s \alpha_R^2$, then the cases (1)-(2) of Lemma~\ref{app::lem:W} should be applied (see Figure~\ref{fig:2}). Let us estimate $\gamma$:
\begin{multline*}
\gamma^2 = \frac{M \delta + \alpha_R^2 - 2 \sqrt{Ms}\alpha_R \delta}{1 - s \delta} 
= \alpha_R^2 + \delta\left(M - 2 \sqrt{Ms}\alpha_R + s \alpha_R^2 \right) + O\left(\delta^2\right) \\
= \alpha_R^2 + \delta \left(\sqrt{M} - \sqrt{s} \alpha_R \right)^2 + O\left(\delta^2\right) = \alpha_R^2 + \Theta(\delta)\,.
\end{multline*}
	
Therefore,
\[
\hat \gamma^d = \left(1-\alpha_R^2 \right)^{d/2}\left(1 - \Theta(\delta)\right)^{d/2} = \left(1 - \alpha_R^2\right)^{d/2} n^{-\Omega(1)}\,.
\]
	
It only remains to estimate the other terms in the upper bound from Lemma~\ref{app::lem:W}:
\[
\frac{\hat\gamma\,\alpha_R\,\hat\alpha_s}{\gamma |\alpha_M - \alpha_R\,\alpha_s |}
\Theta\left( d^{-1} \right) \min \left\{d^{1/2}, \frac{1}{\gamma} \right\} = O\left(\delta^{-1}d^{-1}\right)\,,
\]
from which the lemma follows.
\end{proof}
Note that in our case we have $M < \frac{\alpha_c^2}{\alpha_c^2 + 1} < \frac{\alpha_R^2}{\alpha_c^2 + 1} = s \alpha_R^2$. From this Theorem~2 follows.

\section{Long-range links}

\subsection{Random edges}\label{sec:random_edges}
	
The simplest way to obtain a graph with a small diameter from a given graph is to connect each node to a few random neighbors. This idea is proposed in~\cite{watts1998collective} and gives $O\left(\log n \right)$ diameter for the so-called ``small-world model''.  It was later confirmed that adding a little randomness to a connected graph makes the diameter small~\cite{bollobas1988diameter}. However, we emphasize that having a logarithmic diameter does not guarantee a logarithmic number of steps in graph-based NNS, since these steps, while being greedy in the underlying metric space, may not be optimal on a graph. 
	
To demonstrate the effect of long edges, assume that there is a graph $G'$, where each node is connected to several random neighbors by directed edges. For simplicity or reasonings, assume that we first perform NNS on $G'$ and then continue on the standard NN graph $G$. It is easy to see that during NNS on $G'$, the neighbors considered at each step are just randomly sampled nodes, we choose the one closest to $q$ and continue the process, and all such steps are independent. Therefore, the overall procedure is basically equivalent to a random sampling of a certain number of nodes and then choosing the one closest to $q$ (from which we then start the standard NNS on $G$). 
	
\begin{theorem}\label{thm:dense_random}
Under the conditions of  Theorem~1 in the main text, performing a random sampling of some number of nodes and choosing the one closest to $q$ as a starting point for graph-based NNS does not allow to get time complexity better than $\Omega\left(d^{1/2}\cdot e^{\omega(1+o(1))} \cdot M^{d}\right)$. 
\end{theorem}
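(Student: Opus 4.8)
The plan is to show that random sampling effectively requires sampling so many nodes that the time spent is already $\Omega\left(d^{1/2} e^{\omega(1+o(1))} M^d\right)$ before the standard NNS on $G$ even begins, and that doing fewer samples leaves us too far from $q$ to finish within the claimed budget. The first step is to quantify how close to $q$ a uniform random sample lands: if we draw $N$ points uniformly on $\S^d$, then the probability that a given one falls in the cap $C_q(\alpha_s)$ is $C(\alpha_s)$, so the closest of the $N$ samples lies within $\arccos\alpha_s$ of $q$ with probability roughly $1-(1-C(\alpha_s))^N \approx 1 - e^{-N C(\alpha_s)}$. Hence to reach (with probability $1-o(1)$) a point whose cap around $q$ has radius $\hat\alpha_s = s\delta$, we need $N = \Omega\left(1/C(\alpha_s)\right) = \Omega\left(d^{1/2} s^{-d}\delta^{-d}\right) = \Omega\left(d^{1/2} s^{-d} n\right)$ by Lemma~\ref{app::lem:C} and $\delta = 2^{-\omega}$, where $n = e^{\omega d}$ via $\omega = \log n / d$. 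Since each sampled node costs $\Theta(d)$ to evaluate against $q$, the sampling stage alone costs $\Omega\left(d^{3/2} s^{-d} n\right)$.

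Next I would match this against the requirement imposed by Theorem~1 for the subsequent greedy phase to succeed: from the analysis proven above (Lemma~\ref{app::lem:varepsilon} and Lemma~\ref{lem:dense_exact}), once we are at a point at distance $\arccos\alpha_s$ from $q$, the greedy walk on $G(M)$ completes with probability $1-o(1)$ only if $M$ and $s$ satisfy the conditions derived there — in particular either $M^2 > s^2 + r^2$ (to jump to $\bar\x$) or $M \ge \sqrt 2 s$ / $M^2 - M^4/(4s^2) > 1$ (to keep stepping), which for constant $M$ forces $s$ to be bounded below by a positive constant; pushing $s$ down to $o(1)$ would violate these and the walk stalls. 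So $s = \Theta(1)$, and then $s^{-d} = e^{\Theta(d)}$, giving sampling cost $\Omega\left(d^{3/2} e^{\Theta(d)} n\right)$. Comparing $e^{\Theta(d)} n$ with the target $e^{\omega(1+o(1))} M^d = 2^{\omega} M^{d(1+o(1))}$: since $n = 2^{\omega d}$ grows like $e^{\Theta(\omega d)}$ whereas $2^\omega M^d = e^{\Theta(\omega)} e^{\Theta(d)}$, and in the dense regime $d = \log n/\omega \ll \omega d$, the sampling cost dominates, and one obtains a lower bound of the stated form $\Omega\left(d^{1/2} e^{\omega(1+o(1))} M^d\right)$; I would be slightly more careful tracking the exponents to land exactly on that expression, using that $nC(\alpha_M) = f = \Theta(d^{-1/2}M^d)$ so $n = \Theta(d^{1/2} M^{-d} \delta^{-d})$ and hence the $n$ in the sampling cost rewrites as $d^{1/2} M^{-d} 2^{\omega d}$, whence $d^{3/2} s^{-d} n = d^2 (s^{-1})^d M^{-d} 2^{\omega d}$ — and I would argue this is $\Omega(d^{1/2} 2^\omega M^d)$ because $2^{\omega d}/M^{2d} = e^{\omega d - 2d\log M} \gg 2^\omega$ once $\omega \to \infty$.

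The alternative scenario to rule out is that one samples \emph{fewer} than $\Omega(1/C(\alpha_s))$ nodes and accepts a starting point farther from $q$, relying on the greedy phase to cover the gap; but by the monotonicity lemma (Section~\ref{app::sec:monotone}) and the structure of the conditions in Lemma~\ref{app::lem:varepsilon}, a starting point with $\hat\alpha_s \gg s_0\delta$ for the critical constant $s_0$ cannot be repaired — the greedy walk from there fails with probability $1-o(1)$ — so there is genuinely no shortcut, and the two cases together give the theorem. The main obstacle is the bookkeeping of exponents: making sure that the constant $s$ forced by Lemma~\ref{app::lem:varepsilon} is handled uniformly and that the resulting $e^{\Theta(d)}$ and $2^{\omega d}$ factors combine to exactly $e^{\omega(1+o(1))} M^d$ rather than something merely comparable; a secondary subtlety is justifying that the independence of the sampling steps (used informally in the paragraph preceding the theorem) makes the "closest of $N$ samples" calculation rigorous, which follows since the samples are i.i.d. uniform and the Poisson-process remark in Section~\ref{sec:general_idea} lets us treat their count as essentially deterministic.
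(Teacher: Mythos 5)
There is a genuine gap: your argument rests on the claim that a starting point farther than $s_0\delta$ from $q$ ``cannot be repaired'' and that the greedy walk from there fails with probability $1-o(1)$. This is false, and it contradicts Theorem~1 itself, whose greedy walk starts from an arbitrary point at constant distance from $q$ and succeeds. You also invoke the monotonicity lemma in the wrong direction: it states that $W(\alpha_M,\alpha_s,\arccos\alpha_s)$ is non-increasing in $\alpha_s$, i.e.\ the intersection volume is \emph{larger} (progress is easier) when the current point is \emph{farther} from $q$; a lower bound established at some $\alpha_s$ holds automatically at all larger distances. Consequently your whole framing --- that sampling must deliver a point within $O(\delta)$ of $q$, hence needs $\Omega(1/C(\alpha_s)) = \Omega(n)$ samples --- addresses a scenario nobody would use and does not rule out the interesting one: sample a moderate number of points to land at some \emph{intermediate} distance, then let the greedy walk cover the rest.

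The actual content of the theorem is a tradeoff that your proof never performs. Sampling $N = e^{l\omega}$ points (cost $d\,e^{l\omega}$) puts the best sample at distance $\Theta\bigl(e^{-l\omega/d}\bigr)$ from $q$; since each greedy step has length at most $\arcsin(M\delta) = \Theta(M\,2^{-\omega})$, the subsequent walk takes $\Theta\bigl(e^{-l\omega/d}\,2^{\omega}\bigr)$ steps, each costing $\Theta(d^{1/2}M^d)$, for a total of $\Theta\bigl(e^{-l\omega/d}\cdot d^{1/2}\cdot e^{\omega}\cdot M^d + d\,e^{l\omega}\bigr)$. If $l = \Omega(d)$ the sampling term $d\,e^{l\omega} = d\,e^{\Omega(d)\omega}$ already dominates $d^{1/2}e^{\omega}M^d$; if $l = o(d)$ then $e^{-l\omega/d} = e^{-o(\omega)}$ and the walk term is $d^{1/2}e^{\omega(1+o(1))}M^d$. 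Either way the stated lower bound holds. Your calculation of the closest-of-$N$-samples distance is fine as far as it goes, but without the step-count accounting and the case analysis over the number of samples, the proof does not establish the theorem.
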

	
\begin{proof}

Assume that we sample $e^{l\omega}$ nodes with an arbitrary $l = l(n)$. Then, with probability $1-o(1)$, the closest one among them lies at a distance $\Theta\left(e^{-\frac{l \omega}{d}}\right)$. As a result, the overall time complexity becomes $\Theta\left(e^{-\frac{l\omega}{d}}\cdot d^{1/2}\cdot e^{\omega} \cdot M^d + d \cdot e^{l\omega}\right)$. If $l = \Omega(d)$, then the term $d\cdot e^{l\omega}= d\cdot e^{\Omega(d)\omega}$ dominates $d^{1/2}\cdot e^{\omega} \cdot M^d$, otherwise we get $\Theta\left(d^{1/2}\cdot e^{\omega(1+o(1))} \cdot M^{d}\right)$, which proves Theorem~\ref{thm:dense_random}.
\end{proof}

\subsection{Proof of Theorem~3 (effect of proper long edges)}

%~\ref{thm:dense_kleinberg}}

Recall that we assume the following probability distribution:
\begin{equation}\label{eq:kleinberg_prob_2}
\P(\text{edge from $u$ to $v$}) = \frac{\rho(u,v)^{-d}}{\sum_{w \neq u} \rho(u, w)^{-d}}\,.
\end{equation}

First, we estimate the denominator. In the lemma below we consider only the elements $w$ with $\rho(u, w) > n^{-\frac{1}{d}}$.
However, it easily follows from the proof that adding only edges with $\rho(u, v) > n^{-\frac{1}{d}}$ does not affect the reasoning.

From Theorem~1 in the main text, we know that without long edges we need $O(n^{\frac{1}{d}})$ steps, which is less than $\log ^2 n$ for $d > \frac{\log n}{ 2 \log \log n}$. So, in this case Theorem~3 follows from Theorem~2. Hence, in the lemma below we can assume that $d < \frac{\log n}{ 2 \log \log n}$.

\begin{lemma}\label{lem:ro_expacted} 
If $d < \frac{\log n}{ 2 \log \log n}$, then
$$
\mathbb{E}\left( \rho(u,w)^{-d} \right) = \Theta\left(\frac{\log n}{\sqrt{d}}\right)  \,.
$$
\end{lemma}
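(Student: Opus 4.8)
The goal is to compute $\mathbb{E}\bigl(\rho(u,w)^{-d}\bigr)$ where $w$ is a uniformly random point on $\S^d$ (equivalently, a random element of the Poisson process, conditioned appropriately), $u$ is fixed, and we restrict to $\rho(u,w) > n^{-1/d}$. The plan is to write the expectation as an integral over the distance $\rho = \rho(u,w)$ against the density of the distance, which on $\S^d$ is proportional to $(\sin\rho)^{d-1}$ (this is exactly the Jacobian factor already appearing in the proof of Lemma~\ref{app::lem:C}, where the volume element near a cap boundary of height $\gamma=\cos\rho$ scales like $\hat\gamma^{d-1}\,d(\cdot)$). Concretely,
\[
\mathbb{E}\bigl(\rho(u,w)^{-d}\bigr) = \frac{\mu(S^{d-1})}{\mu(S^d)}\int_{n^{-1/d}}^{\pi} \rho^{-d}\,(\sin\rho)^{d-1}\,d\rho = \Theta\!\left(\sqrt{d}\right)\int_{n^{-1/d}}^{\pi} \left(\frac{\sin\rho}{\rho}\right)^{d-1}\rho^{-1}\,d\rho,
\]
using the standard estimate $\mu(S^{d-1})/\mu(S^d) = \Theta(\sqrt d)$.

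**Splitting the integral.** I would split the range of $\rho$ at a small constant, say $\rho = 1$. For $\rho \in [1,\pi]$ the integrand $(\sin\rho/\rho)^{d-1}\rho^{-1}$ is exponentially small in $d$ (since $\sin\rho/\rho$ is bounded away from $1$ there, except near $\rho \to 0$), so that part contributes $e^{-\Theta(d)} = o(1)$ and is negligible against the claimed $\Theta(\log n/\sqrt d)$, provided $d = o(\log n)$ — which holds under the hypothesis $d < \log n / (2\log\log n)$. For $\rho \in [n^{-1/d}, 1]$ I would use $\sin\rho = \rho(1 + O(\rho^2))$, so $(\sin\rho/\rho)^{d-1} = \exp\bigl((d-1)\log(1+O(\rho^2))\bigr) = 1 + O(d\rho^2)$. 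On this range $\rho \le 1$, so $d\rho^2 \le d$, but more usefully the correction factor stays bounded (in fact $=1+o(1)$ once $\rho = o(d^{-1/2})$, and $=\Theta(1)$ throughout $[n^{-1/d},1]$ since $\log(1+O(\rho^2)) \le O(1)$ there after absorbing). Thus the main integral is $\Theta(1)\int_{n^{-1/d}}^{1}\rho^{-1}\,d\rho = \Theta(1)\cdot\bigl(\tfrac1d\log n\bigr) = \Theta(\log n/d)$.

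**Combining.** Multiplying by the prefactor $\Theta(\sqrt d)$ gives $\Theta(\sqrt d)\cdot\Theta(\log n / d) = \Theta(\log n/\sqrt d)$, as claimed, and the $[1,\pi]$ tail is swallowed because $e^{-\Theta(d)}\cdot\sqrt d = o(\log n/\sqrt d)$ when $d = o(\log n)$. I would also note that dropping the restriction $\rho(u,w) > n^{-1/d}$ and instead integrating down to $0$ only adds $\Theta(\sqrt d)\int_0^{n^{-1/d}}\rho^{-1}(\sin\rho/\rho)^{d-1}d\rho$, which diverges — this is precisely why the cutoff at the nearest-neighbor scale is needed, and it confirms the remark in the excerpt that keeping only edges of length $> n^{-1/d}$ does not change the order of the sum.

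**Main obstacle.** The only genuinely delicate point is controlling the factor $(\sin\rho/\rho)^{d-1}$ uniformly on $[n^{-1/d},1]$: one must check that $(d-1)\cdot O(\rho^2)$ does not blow up, i.e. that $\log(1+ O(\rho^2))$ times $d$ stays $O(1)$. This is where the hypothesis $d < \log n/(2\log\log n)$ is used — it guarantees $n^{-1/d}$ is not too small relative to $d^{-1/2}$, keeping the lower end of the integral in the safe regime where the harmonic integral $\int \rho^{-1}d\rho$ genuinely dominates; a cleaner way is to bound $(\sin\rho/\rho)^{d-1} \le 1$ for the upper estimate (immediate, since $\sin\rho \le \rho$) and $(\sin\rho/\rho)^{d-1} \ge (\sin 1)^{d-1} \ge e^{-\Theta(d)}$ crudely, then observe $e^{-\Theta(d)} = n^{-o(1)}$ so it is absorbed into the $\Theta(\cdot)$ after the $\rho^{-1}$ integration produces the dominant $\tfrac1d\log n$. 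I would present the upper and lower bounds separately along these lines rather than chasing a single two-sided asymptotic.
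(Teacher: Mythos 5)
Your setup and your upper bound coincide with the paper's proof: the same polar-coordinate representation $\mathbb{E}\left(\rho(u,w)^{-d}\right)=\Theta(\sqrt d)\int_{n^{-1/d}}^{\pi}\rho^{-1}\left(\frac{\sin\rho}{\rho}\right)^{d-1}d\rho$, the same prefactor $\mu(S^{d-1})/\mu(S^d)=\Theta(\sqrt d)$, and the same bound $(\sin\rho/\rho)^{d-1}\le 1$ yielding the upper estimate $\Theta(\sqrt d)\cdot\Theta(\log n/d)=\Theta(\log n/\sqrt d)$. The gap is in your lower bound. The assertion that $(\sin\rho/\rho)^{d-1}=\Theta(1)$ throughout $[n^{-1/d},1]$ is false for growing $d$ (which is the relevant case, since the paper assumes $d\gg\log\log n$): at $\rho$ of constant order, $(d-1)\log(\sin\rho/\rho)=-\Theta(d)$, so the factor is $e^{-\Theta(d)}\to 0$. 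Your proposed ``cleaner way'' --- bounding the factor below by $(\sin 1)^{d-1}=e^{-\Theta(d)}$ and absorbing this into the $\Theta(\cdot)$ because it is $n^{-o(1)}$ --- does not work either: it yields only a lower bound of order $e^{-\Theta(d)}\log n/\sqrt d$, exponentially smaller than the claimed $\Theta(\log n/\sqrt d)$. A factor tending to zero cannot be absorbed into a two-sided $\Theta$ constant, whatever its relation to powers of $n$.

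The repair is exactly the step you mention only in a parenthesis and never carry out, and it is what the paper does: truncate the lower-bound integral to $[n^{-1/d},\,d^{-1/2}]$. On that window $d\rho^2\le 1$, so $(\sin\rho/\rho)^{d-1}\ge(1-\rho^2/6)^{d-1}=\Theta(1)$ genuinely holds, and $\int_{n^{-1/d}}^{d^{-1/2}}\rho^{-1}\,d\rho=\frac{\ln n}{d}-\frac12\ln d$. The hypothesis $d<\frac{\log n}{2\log\log n}$ is used precisely here: it gives $\frac{\log n}{d}>2\log\log n>2\log d$, so the subtracted term $\frac12\ln d$ is a bounded fraction of $\frac{\ln n}{d}$ and the truncated harmonic integral is still $\Theta(\log n/d)$. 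Note that this means $n^{-1/d}$ must be \emph{much smaller} than $d^{-1/2}$ (so that the safe window carries the full harmonic mass), which is the opposite of what you wrote in your final paragraph. Without this truncation-and-verification step, the matching lower bound is not established.
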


\begin{proof}
Note that $\mathbb{E} \left(\rho(u,w)^{-d} \right)  = \mathbb{E} \rho^{-d }(\textbf{1}, w)$, where $\textbf{1} = (1, 0, \ldots, 0)$. So, similarly to Lemma~\ref{app::lem:C},
$$ 
\mathbb{E}\left( \rho(\textbf{1},w)^{-d} \right)
= \frac{\mu(S^{d-1})}{\mu(S^d)}  \int\limits_{-1}^{\cos(n^{-\frac{1}{d}})} 
\left(1 - x^2 \right)^{\frac{d-2}{2}}  \; (\arccos{x})^{-d} dx \,.
$$
From Stirling's approximation, we have $\frac{\mu(S^{d-1})}{\mu(S^d)} = \Theta(\sqrt{d})$. After replacing $y = \arccos{x}$, the integral becomes
$$ \int\limits_{n^{-\frac{1}{d}}}^{\pi}  y^{-d} \sin^{d - 1}(y) dy 
=
\int\limits_{n^{-\frac{1}{d}}}^{\pi} \frac{1}{y}  \left(\frac{\sin(y)}{y}\right)^{d - 1} dy 
<
\int\limits_{n^{-\frac{1}{d}}}^{\pi} \frac{1}{y} dy
=
\Theta \left( \frac{\ln n}{d} \right)\,.
$$

On the other hand, for $d < \frac{\log n}{ 2 \log \log n}$:
$$ 
\mathbb{E}\left( \rho(\textbf{1},w)^{-d} \right) 
=
\Theta(\sqrt{d})
\int\limits_{n^{-\frac{1}{d}}}^{\pi} \frac{1}{y}  \left(\frac{\sin(y)}{y}\right)^{d - 1} dy 
>
\Theta(\sqrt{d})
\int\limits_{n^{-\frac{1}{d}}}^{\frac{1}{\sqrt{d}}} \frac{1}{y}  \left(\frac{\sin(y)}{y}\right)^{d - 1} dy .
$$

Since on this interval we have $ \left(\frac{\sin(y)}{y}\right)^{d - 1} = \Theta(1)$, we can continue:

$$
\mathbb{E}\left( \rho(\textbf{1},w)^{-d} \right)  > 
\Theta(\sqrt{d})
\int\limits_{n^{-\frac{1}{d}}}^{\frac{1}{\sqrt{d}}} \frac{1}{y} dy 
=
\Theta(\sqrt{d}) \left( \frac{\ln n}{d} - \frac{1}{2} \ln d \right)
=
 \Theta \left( \frac{\ln n}{\sqrt{d}} \right)\,.
$$
As a result, we get $\mathbb{E} \rho(\textbf{1},w)^{-d}  =  \Theta \left( \frac{\log n}{\sqrt{d}} \right)\,.$
\end{proof}

Also, from the proof above it follows that $\mathbb{E} \rho(u,w)^{-2d} < \Theta(\sqrt{d}) \int\limits_{n^{-\frac{1}{d}}}^{\pi} \frac{1}{y^{d+1}} dy = O(\frac{n}{\sqrt{d}})$ 
. Let 
\[
\mathrm{Den} = \sum_{w: \rho(u, w) > n^{-\frac{1}{d}}} \rho(u, w)^{-d}, \] so
$\mathbb{E}\, \mathrm{Den} = \Theta\left( \frac{n \; \log n}{\sqrt{d}}\right)$ and 
$$
\mathbb{E} \, \mathrm{Den}^2 = n \, \mathbb{E} \rho(u,w)^{-2d} + n(n-1) \left(\mathbb{E} \rho(u,w)^{-d} \right)^2 = O\left( \frac{n^2}{\sqrt{d}} + \frac{n^2 \ln^2 n}{d}  -\frac{n \ln^2 n}{d} \right). 
$$
Finally, from Chebyshev's inequality, we get
\[
\P\left(|\mathrm{Den} - \mathbb{E} \mathrm{Den}| > \frac{\mathbb{E} \mathrm{Den}}{2}\right) \le \frac{4 \, \Var (\mathrm{Den})}{(\mathbb{E} \mathrm{Den})^2} = O \left( \frac{ \sqrt{d} }{\log^2 n} \right) = o(1).
\]

So, we may further replace the denominator of Equation~\eqref{eq:kleinberg_prob_2}  by $O\left(\frac{n \; \log n}{ \sqrt{d}}\right)$.\footnote{More formally, our analysis below is conditioned on the fact that the denominator is less than $\frac{C\, n \log n }{\sqrt d}$ for some constant $C>0$. The probability that it does not hold is $o(1)$ and for such nodes we can just assume that we do not use long edges.}

We are ready to prove the theorem.
We split the search process on a sphere into $\log n$ phases and show that each phase requires $O(\log n)$ steps. 
Phase $j$ consists of the following nodes: $\{u:  t_{j+1} < \rho(u, q) \leq t_j \}$, where $t_j  
 = \frac{\pi}{2} \; t^j
 = \frac{\pi}{2} \; \left(1 - \frac{1}{d}\right)^j$.

We start at a distance at most $\frac{\pi}{2}$, this corresponds to $j = 0$. Recall that the nearest neighbor (in the dense regime) is at a distance about $2^{-\frac{\log n}{d}}$.
Then, the number of phases needed to reach the nearest neighbor is 
\[
k \sim - \frac{1}{\log\left(1 - \frac 1 d\right)} \cdot \frac{\log n}{d} \sim \log n \,.
\]

Suppose we are at some node belonging to a phase $j$. Let us prove the following inequality for the probability of making a step to a phase with a larger number: 
$$ \P(\text{make a step to a closer phase}) >
\frac{\Theta(1)}{ \log n}\,.
$$

In the polar coordinates, we can express this probability as

\begin{multline*}
\P(\text{make a step to a closer phase}) 
=
\frac{(d-1) \; \sqrt{d} }{2 \pi  \; \log n}
\int^1_{\cos t_{i+1}} \int^{\arccos \frac{\cos(t_{i+1})}{r}}_{-\arccos \frac{\cos(t_{i+1})}{r}} 
\left(\sqrt{1 - r^2}\right)^{d-3} \\ \cdot
(\arccos(\sin(t_i) r \sin (\phi)  + \cos(t_i) r \cos (\phi)))^{-d} r \; d \phi\; dr  \\
=
\frac{\Theta(d^{\frac{3}{2}})}{ \log n}
\int^1_{\cos t_{i+1}} \int^{\arccos \frac{\cos(t_{i+1})}{r}}_{-\arccos \frac{\cos(t_{i+1})}{r}} 
\left(\sqrt{1 - r^2}\right)^{d-3}
(\arccos(r  \cos(t_i- \phi)))^{-d} r \; d \phi\; dr .
\end{multline*}

Let $r = \cos(\psi)$ and $\phi = t_j - \phi$, then the integral becomes 
$$
\int\limits^{t_{j+1}}_0
\cos \psi \left(\sin \psi \right)^{d-2}
\int\limits_{t_{j} - \arccos \frac{\cos(t_{j+1})}{\cos \psi}}^{t_{j} + \arccos \frac{\cos(t_{j+1})}{\cos \psi}}
(\arccos(\cos \psi \cos (\phi))^{-d} d \phi d \psi\,.
$$

From convexity of $\log \cos \sqrt{x}$, it follows that $\forall \psi, \phi \in [0, \frac{\pi}{2}]$ we have 
 $$\arccos(\cos \psi \cos \phi) \leq \sqrt{\psi^2 + \phi^2}\,,$$
 $$\arccos \frac{\cos(t_{j+1})}{\cos \psi} \geq \sqrt{t_{j+1}^2 - \psi^2}\,,$$
 $$ \sin(\psi) \geq \psi - \frac{\psi^3}{6} \,.$$

We use these bounds since we need a lower bound for the integral. Also, we replace the upper limit of the inner integral with $t_{j}$ and consider $\psi = t_{j} \psi$ and $\phi = t_{j} \phi$:

$$
\int\limits^{t}_0
F(t_j, \psi) \psi^{d-2}
\int\limits^{1}_{1 -\sqrt{t^2 - \psi^2}} 
 (\sqrt{\psi^2 + \phi^2})^{-d} \;d \phi \;d \psi\,,
$$
where $ F(t_{j}, \psi) = \cos(t_{j} \; \psi) \left( 1 - \frac{t_{j}^2 \psi^2}{6} \right)^{d-2}$.

Consider the inner integral:
\begin{multline*}
\int\limits^{1}_{1 -\sqrt{t^2 - \psi^2}}
(\sqrt{\psi^2 + \phi^2})^{-d} \;
\frac{1}{2 \phi} d \phi^2
>
\frac{1}{2} \int\limits^{1}_{1 - 2 \sqrt{t^2 - \psi^2} + t^2 - \psi^2} 
(\psi^2 + x)^{-\frac{d}{2}} \;d x
\\ =
\frac{1}{d - 2} \left((1 - 2 \sqrt{t^2 - \psi^2} + t^2)^{-\frac{d - 2}{2}} -(1 + \psi^2)^{-\frac{d - 2}{2}} \right)\,.
\end{multline*}

Substitute the second term to the original integral and estimate it from above:
$$
\int^{t}_0
F(t_j, \psi) \psi^{d-2}
 (1 + \psi^2)^{-\frac{d - 2}{2}}\;d \psi
\leq
\int^{t}_0
\left(\frac{\psi^2}{1 + \psi^2} \right)^{\frac{d - 2}{2}} \; d \psi
= o \left( \frac{1}{\sqrt{d}} \right). 
$$

Now we estimate from below the second term
$\int\limits^{t}_0
\left( \frac{\psi^2}{1 - 2 \sqrt{t^2 - \psi^2} + t^2} \right)^{\frac{d - 2}{2}} \;d \psi$.

Note that if $\psi = \frac{2}{\sqrt{d}}$ and $t = 1 - \frac{1}{d}$, then 
\begin{multline*}
\left( \frac{\psi^2}{1 - 2 \sqrt{t^2 - \psi^2} + t^2} \right)^{\frac{d - 2}{2}} = 
\left( \frac{\frac{4}{d}}{1 - 2 \sqrt{1 - \frac{2}{d} + \frac{1}{d^2} - \frac{4}{d}} + 1 - \frac{2}{d} + \frac{1}{d^2}} \right)^{\frac{d - 2}{2}} \\ > 
\left( \frac{\frac{4}{d}}{\frac{4}{d} + \frac{11}{d^2} + \frac{1}{d^2}} \right)^{\frac{d - 2}{2}}
=
e^{-\frac{3}{2}} (1 + o(1))\,.
\end{multline*}
Similarly, it can be shown that if $\psi = \frac{3}{\sqrt{d}}$, then

$$
\left( \frac{\psi^2}{1 - 2 \sqrt{t^2 - \psi^2} + t^2} \right)^{\frac{d - 2}{2}} > 
\Theta(1).
$$

So, for $\psi \in [\frac{2}{\sqrt{d}}, \frac{3}{\sqrt{d}}]$ this fraction is greater than some constant (as well as $F(t_j, \psi)$)  and the derivative does not change the sign on this segment. 
As a result, 

$$
\int^{t}_0
F(t_j, \psi) \left( \frac{\psi^2}{1 - 2 \sqrt{t^2 - \psi^2} + t^2} \right)^{\frac{d - 2}{2}} d \psi >  \frac{\Theta(1)}{\sqrt{d}} \,.
$$

And finally, 
$$
\P(\text{make a step in to a closer phase}) > \frac{\Theta(1)}{ \log n}\,.
$$

To sum up, there are $O(\log n)$ phases and the number of steps in each phase is geometrically distributed with the expected value $O( \log n)$. From this the theorem follows.

\subsection{Proof of Corollary~2}
%\ref{cor:log_long}}
We have
\begin{equation}\label{sec:step_prob}
\P(\text{short-cut step within $\log n$ trying}) = 1 - (1 - P)^{\log n} = \left(1 - \frac{1}{e}\right)(1 - o(1)),
\end{equation}
where $P$ is the probability corresponding to one long-range edge, which is estimated in the proof above.

Also, since $d \gg \log \log n$, we have $\frac{M^d}{\sqrt{d}} > \log n$, so the step complexity is the same.

It is easy to see that increasing the number of shortcut edges does not improve the asymptotic complexity, since the probability in~\eqref{sec:step_prob} is already constant.

\subsection{Proof of Lemma~1 (effect of pre-sampling)}\label{suppl:fastsampling}

%\ref{lem:appr_kleinberg}

For convenience, in this proof we assume that the overall number of elements is $n+1$ instead of $n$ which does not affect the analysis.

Let $v$ be the $k$-th neighbor for the source node $u$. 
For the initial distribution we have:
$$
\P(\text{edge from $u$ to $v$}) \sim   \frac{1}{k \ln n}\,.
$$

By pre-sampling of $n^{\varphi}$ nodes, we modify this probability to
\begin{equation}\label{suppl:appr}
\P( \text{edge from $u$ to $v$} | \text{$v$ is sampled}) \cdot \P( \text{$v$ is sampled}).
\end{equation}

The second term above is equal to  $\frac{n^{\varphi}}{n}$. 
Assuming that $k = n^{\alpha} > n^{1 - \varphi}$, we can estimate the probability above. Below by $l$ we denote the rank of $v$ in the selected subset and obtain:

\begin{multline*}
\P( \text{edge from $u$ to $v$} | \text{$v$ is sampled}) \cdot \P( \text{$v$ is sampled}) \\
=\frac{n^{\varphi}}{n} \sum_{l=1}^{\min(n^{\alpha},  n^{\varphi})} \binom{n^{\varphi} - 1}{l-1} \left(\frac{n^{\alpha}-1}{n - 1}\right)^{l-1}  \left(\frac{n-n^{\alpha}}{n - 1}\right)^{n^{\varphi}-l} \frac{1}{l} \; \frac{1}{\ln n^{\varphi}} 
\\
=  \frac{1}{ \varphi \; n \; \ln n} \sum_{l=1}^{\min(n^{\alpha},  n^{\varphi})} \binom{n^{\varphi}}{l} \left(\frac{n^{\alpha}-1}{n - 1}\right)^{l-1} \; \left(\frac{n-n^{\alpha}}{n - 1}\right)^{n^{\varphi}-l} \\
= \frac{\Theta(1)}{\varphi \; n^{\alpha} \ln n} \sum_{l=1}^{\min(n^{\alpha},  n^{\varphi})} \binom{n^{\varphi}}{l} \left(\frac{n^{\alpha}-1}{n - 1}\right)^{l} \; \left(\frac{n-n^{\alpha}}{n - 1}\right)^{n^{\varphi}-l} \,.
\end{multline*}

Let us analyze the sum above. First, it is easy to see that it is less than 1. 
Second, if $n^{\varphi} \le n^{\alpha}$, then the sum is ``almost equal'' to 1 (without one term corresponding to $l=0$, which we analyze below). Otherwise, we know that for a binomial distribution its median cannot lie too far away from the mean (see, e.g.,~\cite{hamza1995smallest}). Since $ \alpha > 1 - \varphi $, we have

$$\min(n^{\alpha},  n^{\varphi}) \ge \frac{n^{\varphi} (n^{\alpha} - 1)}{n - 1} + 1 =
\mathbb{E}\, \mathrm{Bin}(n^{\varphi}, \frac{n^{\alpha} - 1}{n - 1}) + 1 > 
\mathrm{median}(\varphi, \alpha).
$$

Hence,
$$
\sum_{l=0}^{\min(n^{\alpha},  n^{\varphi})} \binom{n^{\varphi}}{l} \left(\frac{n^{\alpha}-1}{n - 1}\right)^{l} \; \left(\frac{n-n^{\alpha}}{n - 1}\right)^{n^{\varphi}-l} 
>
\frac{1}{2}.
$$

Note that we added one term corresponding to $l=0$, but it is easy to see that in the worst case it is about $\frac{1}{e}$. Namely, for $l=0$:

$$
 \binom{n^{\varphi}}{0} \left(\frac{n^{\alpha}-1}{n - 1}\right)^{0} \; \left(\frac{n-n^{\alpha}}{n - 1}\right)^{n^{\varphi}-0} 
=
\left(\frac{n-n^{\alpha}}{n - 1}\right)^{n^{\varphi}} 
<
\left(1 - \frac{n^{1 - \varphi} - 1}{n - 1}\right)^{n^{\varphi}} = \frac{1}{e} (1 + o(1)).
$$

\section{Proof of Theorem~4 (effect of beam search)}
%\section{Beam search (proof of Theorem~\ref{thm:beam_search_common})}

%\textit{We believe that the requirement $M > \sqrt{4/3}$ is redundant and can be removed from the statement of the theorem. Even without this condition, it is possible to prove that with probability $1-o(1)$ the subgraph induced by the elements of the $L$-neighborhood does not have any isolated nodes. 

Let us call a spherical cap of radius $L \delta$ centered at $x$ an $L$-neighborhood of $x$.
%The most important ingredient of the proof is to show that the subgraph of $G(M)$ induced by the $L$-neighborhood of the query is connected with high probability.
We first show that the subgraph of $G(M)$ induced by the $L$-neighborhood contains a path from a given element to the nearest neighbor of the query with high probability.
%of the query is connected with high probability.

For random geometric graphs in $d$-dimensional Euclidean space (for fixed $d$) it is known that the absence of isolated nodes implies connectivity~\cite{penrose1999k,penrose2016connectivity}. However, generalizing~\cite{penrose1999k,penrose2016connectivity} to our setting is non-trivial, especially taking into account that the dimension grows as the logarithm of the number of elements in the $L$-neighborhood. 
%Therefore, we add an additional assumption for $M$ to get a simpler proof.
%Therefore, we do not use this result directly and prove the connectivity for our case. 
In our case, it is easy to show that with high probability there are no isolated nodes. Moreover, the expected degree is about $S^d$ for some $S>1$.
Hence, it is possible to prove that the graph is connected.
However, for simplicity, we prove a weaker result: for two fixed points, there is a path between them with high probability.

Let us denote by $N$ the number of nodes in the $L$-neighborhood. According to Lemma~\ref{app::lem:one-step}, with high probability, this value is $\Theta\left(d^{-1/2}L^d\right)$. So, we further assume that there are $N = \Theta\left(d^{-1/2}L^d\right)$ points uniformly distributed within the $L$-neighborhood.

Let us make the following observation that simplifies the reasoning. Consider the $L$-neighborhood of $q$. Let us project all $N$ points to the boundary of a neighborhood (moving them along the rays starting at $q$) and construct a new graph on these elements using the same $M$-neighborhoods. It is easy to see that this operation may only remove some edges and never adds new ones. Therefore, it is sufficient to prove connectivity assuming that $N$ nodes are uniformly distributed on a boundary of the $L$-neighborhood. This allows us to avoid boundary effects and simplify reasoning.

Let $p_1$ be the probability that two random nodes are connected. This probability is the volume of the $M$-neighborhood of a node normalized by the volume of the boundary of the $L$-neighborhood. Under the conditions on $M$ and $L$, one can show that $p_1$ is at least $\left(\frac{S}{L}\right)^d$ for some constant $S>1$.

We fix any pair of nodes $u,v$ and estimate the probability that there is a path of length $k$ between them. We assume that $k\to \infty$ and $k = o(\sqrt{Np_1})$, which is possible to achieve since $p_1 N \to \infty$.
%To claim connectivity, it is sufficient to show that the probability of not having such a path is $o(N^{-1})$, since for a fixed point we want to show the existence of paths to all other nodes.
We show that the probability of not having such a path is $o(1)$.

Let us denote by $P_k(u,v)$ the number of paths of length $k$ between $u$ and $v$. Then, 
\[
\E P_k(u,v) \sim \binom{N-2}{k-1} (k-1)! p_1^k \gtrsim N^{k-1} \left(\frac{S}{L}\right)^{dk} %= L^{d(k-1)}\left(\frac{S}{L}\right)^{dk}
%= \frac{1}{L^d}S^{dk} = 
= \left(\frac{N}{L^d}\right)^{k} \frac{1}{N} \,S^{dk}\,.
\]
%Roughly speaking, the expectation grows as $S^{dk}$.
We have $\E P_k(u,v) \to \infty$ if $k \to \infty$. 

To claim concentration near the expectation, we estimate the variance. 
Note that $P_k(u,v) = \E(\sum_i I_i)$, where $I_i$ indicates the event that a particular path is present and $i$ indexes all possible paths of length $k$. Then, we can estimate
\begin{multline*}
\E P_k(u,v)^2 - (\E P_k(u,v))^2  = \E \bigg(\sum_i I_i \bigg)^2  - (\E P_k(u,v))^2 \\ 
=  \sum_i \P(I_i = 1) \sum_j \left(\P(I_j = 1|I_i = 1) - \P(I_j=1) \right)  \\
= \E P_k(u,v)  \sum_j \left(\P(I_j = 1|I_i = 1) - \P(I_j = 1)
\right)\,.
%\E P_k(u,v) + \binom{n-2}{k-1}(k-1)! \sum_{j:j \neq i} \P(I_i, I_j)\,.
\end{multline*}
It is easy to see that $\sum_j \left(\P(I_j = 1|I_i = 1) - \P(I_j = 1)
\right) = o\left(\E P_k(u,v)\right)$.
Indeed, for most pairs of paths we have $\P(I_j = 1|I_i = 1) \sim \P(I_j = 1) \sim p_1^k$ since they do not share any intermediate nodes. Let us show that the contribution of the remaining pairs is small. 
The fraction of pairs of paths sharing $k_0$ intermediate nodes is $O\left(\frac{k^{2k_0}}{N^{k_0}} \right)$. Then, $\P(I_j = 1|I_i = 1) \le \P(I_j = 1)/p_1^{k_0}$, since in the worst case the paths may share $k_0$ consecutive edges. Since $k^2 \ll Np_1$, the relative contribution is $\sum_{k_0 \ge 1} O\left( \left(\frac{k^2}{Np_1}\right)^{k_0}\right) = o(1)$. Therefore, we get $\textrm{Var}(P_k(u,v)) = o\left(\left(\E P_k(u,v)\right)^2\right)$

Finally, it remains to apply Chebyshev's inequality and get that $\P(P_k(u,v) < \E P_k(u,v)/2) = o(1)$, so at least one such path exists with high probability.

Now we are ready to prove the theorem. Let us prove that $G(M)$-based NNS succeeds with probability $1 - o(1)$. 
It follows from Lemma~\ref{app::lem:varepsilon} (and the discussion below it) that under the conditions on $M$ and $L$, greedy $G(M)$-based NNS reaches the $L$-neighborhood of the query with probability $1 - o(1)$.

Thus, with probability $1-o(1)$ we reach the $L$-neighborhood within which there is a path to the nearest neighbor. 
%since the neighborhood is connected. 
Recall that we assume beam search with $\frac{C L^d}{\sqrt{d}}$ candidates. Choosing large enough $C$, we can guarantee that the number of candidates is larger than the number of elements in the $L$-neighborhood. This implies that all reachable elements inside the $L$-neighborhood will finally be covered by the algorithm.

%Let us note that the fact that there is a monotone path from $\x_L$ to $\bar \x$ does not guarantee that this path will be followed by the algorithm. The reason is that the algorithm is greedy with respect to $q$ and not $\bar \x$. Therefore, in our complexity analysis below we upper bound the number of local steps (i.e., steps inside the $L$-neighborhood) by the total size of the $L$-neighborhood.

Finally, it remains to analyze the time complexity. To reach the $L$-neighborhood, we need $\Theta\left(d^{1/2}\cdot \log n \cdot M^d \right)$ operations (recall that the number of steps can be bounded by $\log n$ due to long edges). Then, to fully explore the $L$-neighborhood, we need $O \left( L^d \cdot M^d \right)$. 
For $d > \log \log n$ the first term is negligible compared to the second one, so the required complexity follows.

\section{Comparison with the results of~\cite{laarhoven2018graph}}\label{sec:comp_w_laarh}

Here we extend the related work from the main text and discuss in more detail how our research differs from the results of~\cite{laarhoven2018graph}.

Laarhoven~\cite{laarhoven2018graph} analyzes time and space complexity for graph-based NNS in sparse regime when $d \gg \log n$. He considers plain NN graphs and allows multiple restarts. In contrast, we consider both regimes and assume only one iteration of a graph-based search. We do not consider multiple restarts since it is non-trivial to rigorously prove that restarts can be assumed ``almost independent'' (see Section~A.3, proof of Lemma 15, \cite{laarhoven2018graph}). 
As a result, for sparse datasets, we consider a slightly weaker setting with only one iteration, but all results are formally proven. Our result for sparse regime (Theorem~2 in the main text) corresponds to the case $\rho_q = \rho_s$ from~\cite{laarhoven2018graph}.

Also, in Section~\ref{app::sec:caps}, we state new bounds for the volumes of spherical caps' intersections, which are needed for the rigorous analysis in both sparse and dense regimes. We could not use the results of~\cite{becker2016new} since parameters defining spherical caps are assumed to be constant there, while they can tend to 0 or 1 in dense and sparse regimes.

We also address the problem of possible dependence between consecutive steps of the algorithm (Lemma~\ref{app::lem:dependence}). While we prove that it can be neglected, it is important for rigorous analysis. 

Most importantly, we analyze the dense regime and additional techniques (shortcuts and beam search), which are essential for the effective graph-based search. 
Interestingly, shortcut edges are useful only in the dense regime.

\section{Additional experiments}\label{sec:experiments_app}

\iffalse
\subsection{Datasets}\label{ref:datasets}

The properties of the datsets used in the current research are summarized in Table~\ref{tab:datasets}.

\begin{table}
\caption{Real datasets and their properties}
\label{tab:datasets}
\vspace{4pt}
\centering
 \begin{tabular}{||c c c c c||} 
 \hline
 dataset & dim  & No. of base & No. of query & metric \\ [0.5ex] 
 \hline\hline
 SIFT & 128  & $10^6$ & $10^4$ & Euclidean \\ 
 \hline
 GIST & 960  & $10^6$ & $10^3$ & Euclidean\\ 
 \hline
 GloVe & 300  & $10^6$ & $10^4$ & Angular \\
 \hline
 DEEP & 96 & $10^6$ & $10^4$ & Euclidean \\
 \hline
\end{tabular}
\end{table}
\fi

\subsection{Dense vs sparse setting}

Let us discuss our intuition on why real datasets are ``more similar'' to dense rather than sparse synthetic ones. 

In the sparse regime, all elements are almost at the same distance from each other, and even in the moderate regime ($d \propto \log(n)$), the distance to the nearest neighbor must be close to a certain constant. %that is independent of the dimension and the number of elements). 
In contrast, the dense regime implies high proximity of the nearest objects. 
While real datasets are always finite and asymptotic properties cannot be formally verified, we still can compare the properties of real and synthetic datasets. We plotted the distribution of the distance to the nearest neighbor (see Figure~\ref{fig:dist_disrt_1}) and see that for the SIFT dataset, the obtained distribution is more similar to the ones in the dense regime. This is further supported by the literature which estimates the intrinsic dimension of real data. For example, for the SIFT dataset with 128-dimensional vectors, the estimated intrinsic dimension is 16~\cite{levina2005lid}. Thus, we conclude that the analysis of the dense regime is important.

\begin{figure}
\centering
\includegraphics[width=0.6\textwidth]{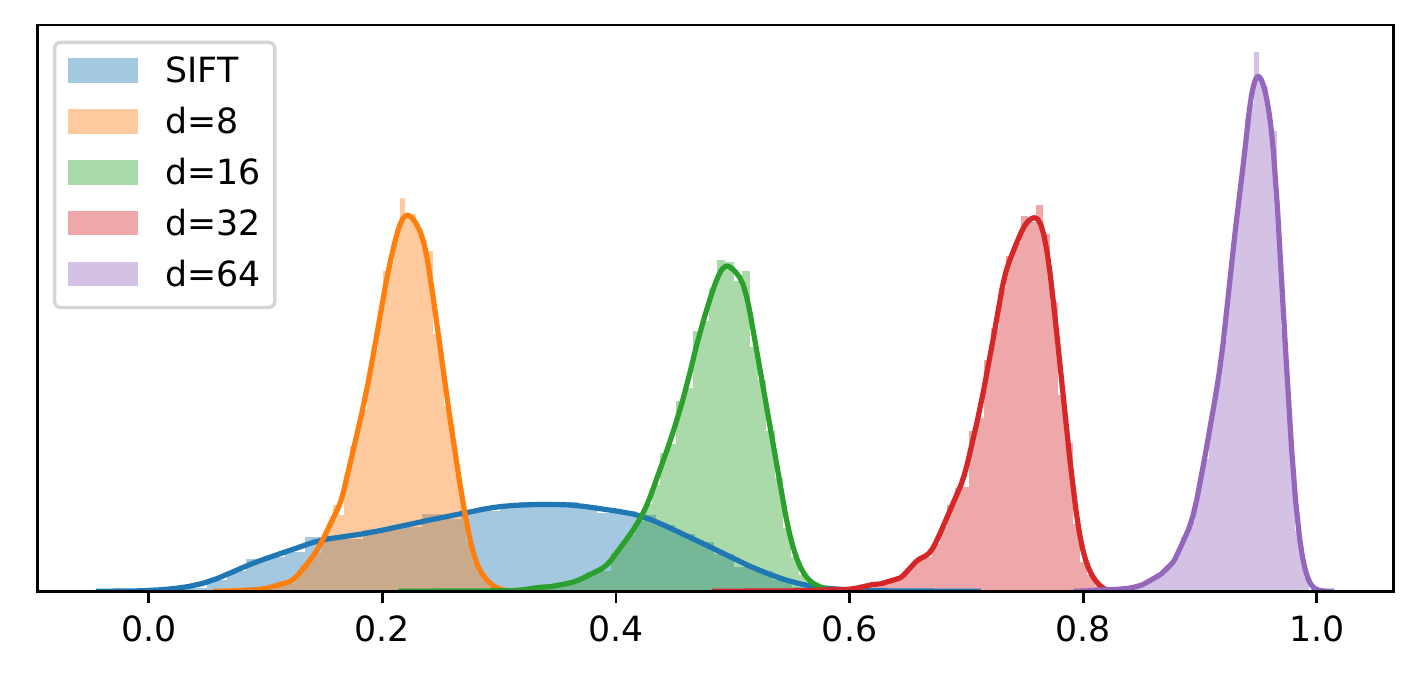}
\caption{The distribution of the distance to the nearest neighbor for the SIFT dataset and synthetic uniform data for different dimensions and the same size (1M)}
\label{fig:dist_disrt_1}
\end{figure}

\subsection{Parameters of algorithms}

In this section, we specify additional hyperparameters used in our experiments.

The number of edges used in \textsc{Kl}, when is not explicitly specified, is equal to 15, which is close to $\ln n$.

The number of edges in \textsc{kNN} graphs is dynamic when the beam search is not used. When the beam search is used, the number of edges for synthetic datasets is 8 for $d=2$, 10 for $d=4$, 16 for $d=8$, 20 for $d = 16$, and
 25 for all real datasets.

%beam  kl - we turn off kl-edges after 11, 7, 5, 4 steps respectively (and 7 for real datasets)
%hnsw=flat

The dimension we use for \textsc{dim-red} is 64 for GIST, 32 for SIFT, 48 for DEEP, 128 for GloVe. 

%gist: hnsw 18
%sift: hnsw 16
%deep: hnsw 16
%glove: hnsw 20, learned on efc=2000 (other 500)

\subsection{Additional experimental results}

In Figure~\ref{fig:optimal_appr}, we show that several approximations discussed in the main text do not affect the quality of graph-based NNS significantly (in the uniform case). Namely,
\begin{itemize}
    \item Connecting a node to other nodes at a distance smaller than some constant (\textsc{thrNN}) and to the fixed number of nearest neighbors (\textsc{kNN}) lead to graph-based algorithms with similar performance;
    \item Pre-sampling of $\sqrt{n}$ nodes when adding shortcut edges (\textsc{Sample}) lowers the quality, but not substantially;
    \item Rank-based probabilities for shortcut edges (\textsc{Kl-rank}) can lead to even better quality than distance-based (\textsc{Kl-dist}).
\end{itemize}

In Figure~\ref{fig:optimal_kl}, we illustrate how the number of long-range edges affects the quality of the algorithm. Let us note that $16$ is close to $\log n$ discussed in 
Corollary 2 of the main text.
%Corollary~\ref{cor:log_long}. 
Figure~\ref{fig:optimal_kl} shows that this value is indeed close to being optimal, especially for the high-accuracy regime, which is a focus of the current research. However, it also seems that the optimal number of long edges may depend on $d$: on Figure~\ref{fig:optimal_kl}, the relative performance of graphs with 32 long edges is improving as $d$ grows.

\begin{figure}
\centering
\includegraphics[width=\textwidth]{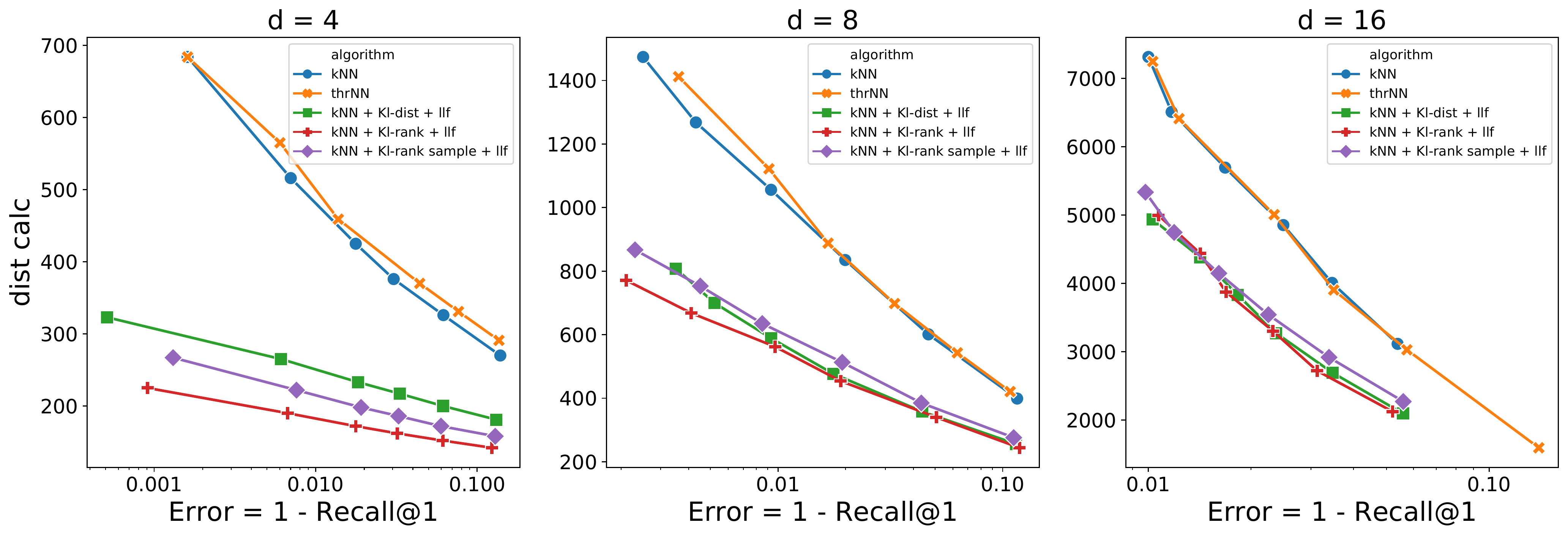}
\caption{The effect of \textsc{kNN} and \textsc{Kl} approximations} 
\label{fig:optimal_appr}
\end{figure}

\begin{figure}
\centering
\includegraphics[width=\textwidth]{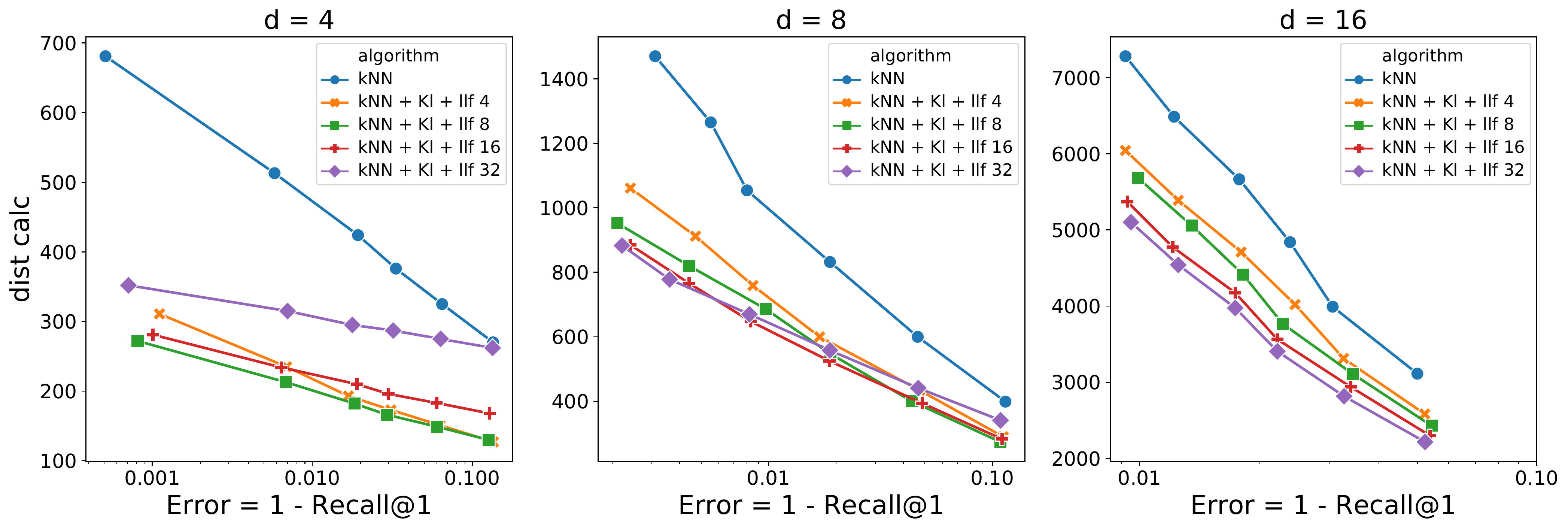}
\caption{The effect of the number of long-range edges}
\label{fig:optimal_kl}
\end{figure}

\bibliographystyle{abbrv}
\bibliography{neighbors}

\begin{thebibliography}{47}
\providecommand{\natexlab}[1]{#1}
\providecommand{\url}[1]{\texttt{#1}}
\expandafter\ifx\csname urlstyle\endcsname\relax
  \providecommand{\doi}[1]{doi: #1}\else
  \providecommand{\doi}{doi: \begingroup \urlstyle{rm}\Url}\fi

\bibitem[Andoni \& Indyk(2008)Andoni and Indyk]{andoni2008near}
Andoni, A. and Indyk, P.
\newblock Near-optimal hashing algorithms for near neighbor problem in high
  dimension.
\newblock \emph{Communications of the ACM}, 51\penalty0 (1):\penalty0 117--122,
  2008.

\bibitem[Andoni \& Razenshteyn(2015)Andoni and Razenshteyn]{andoni2015optimal}
Andoni, A. and Razenshteyn, I.
\newblock Optimal data-dependent hashing for approximate near neighbors.
\newblock In \emph{Proceedings of the forty-seventh annual ACM symposium on
  Theory of computing}, pp.\  793--801. ACM, 2015.

\bibitem[Andoni \& Razensteyn(2016)Andoni and Razensteyn]{andoni2015tight}
Andoni, A. and Razensteyn, I.
\newblock Tight lower bounds for data-dependent locality-sensitive hashing.
\newblock In \emph{32nd International Symposium on Computational Geometry (SoCG
  2016)}, 2016.

\bibitem[Andoni et~al.(2017)Andoni, Laarhoven, Razenshteyn, and
  Waingarten]{andoni2017optimal}
Andoni, A., Laarhoven, T., Razenshteyn, I., and Waingarten, E.
\newblock Optimal hashing-based time-space trade-offs for approximate near
  neighbors.
\newblock In \emph{Proceedings of the Twenty-Eighth Annual ACM-SIAM Symposium
  on Discrete Algorithms}, pp.\  47--66. Society for Industrial and Applied
  Mathematics, 2017.

\bibitem[Arya et~al.(1998)Arya, Mount, Netanyahu, Silverman, and
  Wu]{arya1998optimal}
Arya, S., Mount, D.~M., Netanyahu, N.~S., Silverman, R., and Wu, A.~Y.
\newblock An optimal algorithm for approximate nearest neighbor searching fixed
  dimensions.
\newblock \emph{Journal of the ACM (JACM)}, 45\penalty0 (6):\penalty0 891--923,
  1998.

\bibitem[Aum{\"u}ller et~al.(2019)Aum{\"u}ller, Bernhardsson, and
  Faithfull]{aumuller2019ann}
Aum{\"u}ller, M., Bernhardsson, E., and Faithfull, A.
\newblock {ANN}-benchmarks: a benchmarking tool for approximate nearest
  neighbor algorithms.
\newblock \emph{Information Systems}, 2019.

\bibitem[Babenko \& Lempitsky(2016)Babenko and Lempitsky]{Babenko_2016_CVPR}
Babenko, A. and Lempitsky, V.
\newblock Efficient indexing of billion-scale datasets of deep descriptors.
\newblock In \emph{The IEEE Conference on Computer Vision and Pattern
  Recognition (CVPR)}, 2016.

\bibitem[Baranchuk et~al.(2019)Baranchuk, Persiyanov, Sinitsin, and
  Babenko]{baranchuk2019learning}
Baranchuk, D., Persiyanov, D., Sinitsin, A., and Babenko, A.
\newblock Learning to route in similarity graphs.
\newblock In \emph{International Conference on Machine Learning}, pp.\
  475--484, 2019.

\bibitem[Barri{\`e}re et~al.(2001)Barri{\`e}re, Fraigniaud, Kranakis, and
  Krizanc]{barriere2001efficient}
Barri{\`e}re, L., Fraigniaud, P., Kranakis, E., and Krizanc, D.
\newblock Efficient routing in networks with long range contacts.
\newblock In \emph{International Symposium on Distributed Computing}, pp.\
  270--284. Springer, 2001.

\bibitem[Beaumont et~al.(2007)Beaumont, Kermarrec, and
  Rivi{\`e}re]{beaumont2007peer}
Beaumont, O., Kermarrec, A.-M., and Rivi{\`e}re, {\'E}.
\newblock Peer to peer multidimensional overlays: approximating complex
  structures.
\newblock In \emph{International Conference On Principles Of Distributed
  Systems}, pp.\  315--328. Springer, 2007.

\bibitem[Becker et~al.(2016)Becker, Ducas, Gama, and Laarhoven]{becker2016new}
Becker, A., Ducas, L., Gama, N., and Laarhoven, T.
\newblock New directions in nearest neighbor searching with applications to
  lattice sieving.
\newblock In \emph{Proceedings of the twenty-seventh annual ACM-SIAM symposium
  on Discrete algorithms}, pp.\  10--24, 2016.

\bibitem[Bentley(1975)]{bentley1975multidimensional}
Bentley, J.~L.
\newblock Multidimensional binary search trees used for associative searching.
\newblock \emph{Communications of the ACM}, 18\penalty0 (9):\penalty0 509--517,
  1975.

\bibitem[Beygelzimer et~al.(2006)Beygelzimer, Kakade, and
  Langford]{beygelzimer2006cover}
Beygelzimer, A., Kakade, S., and Langford, J.
\newblock Cover trees for nearest neighbor.
\newblock In \emph{Proceedings of the 23rd international conference on Machine
  learning}, pp.\  97--104, 2006.

\bibitem[Bishop(2006)]{bishop2006pattern}
Bishop, C.~M.
\newblock \emph{Pattern recognition and machine learning}.
\newblock Springer, 2006.

\bibitem[Bonnet et~al.(2007)Bonnet, Kermarrec, and Raynal]{bonnet2007small}
Bonnet, F., Kermarrec, A.-M., and Raynal, M.
\newblock Small-world networks: is there a mismatch between theory and
  practice?
\newblock 2007.

\bibitem[Chen \& Shah(2018)Chen and Shah]{chen2018explaining}
Chen, G.~H. and Shah, D.
\newblock Explaining the success of nearest neighbor methods in prediction.
\newblock \emph{Foundations and Trends{\textregistered} in Machine Learning},
  10\penalty0 (5-6):\penalty0 337--588, 2018.

\bibitem[Dasgupta \& Freund(2008)Dasgupta and Freund]{dasgupta2008random}
Dasgupta, S. and Freund, Y.
\newblock Random projection trees and low dimensional manifolds.
\newblock In \emph{STOC}, volume~8, pp.\  537--546. Citeseer, 2008.

\bibitem[Dasgupta \& Sinha(2015)Dasgupta and Sinha]{dasgupta2015randomized}
Dasgupta, S. and Sinha, K.
\newblock Randomized partition trees for nearest neighbor search.
\newblock \emph{Algorithmica}, 72\penalty0 (1):\penalty0 237--263, 2015.

\bibitem[Datar et~al.(2004)Datar, Immorlica, Indyk, and
  Mirrokni]{datar2004locality}
Datar, M., Immorlica, N., Indyk, P., and Mirrokni, V.~S.
\newblock Locality-sensitive hashing scheme based on p-stable distributions.
\newblock In \emph{Proceedings of the twentieth annual symposium on
  Computational geometry}, pp.\  253--262. ACM, 2004.

\bibitem[Dong et~al.(2011)Dong, Moses, and Li]{dong2011efficient}
Dong, W., Moses, C., and Li, K.
\newblock Efficient k-nearest neighbor graph construction for generic
  similarity measures.
\newblock In \emph{Proceedings of the 20th international conference on World
  wide web}, pp.\  577--586. ACM, 2011.

\bibitem[Duchon et~al.(2006)Duchon, Hanusse, Lebhar, and
  Schabanel]{duchon2006could}
Duchon, P., Hanusse, N., Lebhar, E., and Schabanel, N.
\newblock Could any graph be turned into a small-world?
\newblock \emph{Theoretical Computer Science}, 355\penalty0 (1):\penalty0
  96--103, 2006.

\bibitem[Fu et~al.(2019)Fu, Xiang, Wang, and Cai]{fu2019fast}
Fu, C., Xiang, C., Wang, C., and Cai, D.
\newblock Fast approximate nearest neighbor search with the navigating
  spreading-out graph.
\newblock \emph{Proceedings of the VLDB Endowment}, 12\penalty0 (5):\penalty0
  461--474, 2019.

\bibitem[Hajebi et~al.(2011)Hajebi, Abbasi-Yadkori, Shahbazi, and
  Zhang]{hajebi2011fast}
Hajebi, K., Abbasi-Yadkori, Y., Shahbazi, H., and Zhang, H.
\newblock Fast approximate nearest-neighbor search with k-nearest neighbor
  graph.
\newblock In \emph{Twenty-Second International Joint Conference on Artificial
  Intelligence}, 2011.

\bibitem[Har-Peled et~al.(2012)Har-Peled, Indyk, and
  Motwani]{har2012approximate}
Har-Peled, S., Indyk, P., and Motwani, R.
\newblock Approximate nearest neighbor: Towards removing the curse of
  dimensionality.
\newblock \emph{Theory of computing}, 8\penalty0 (1):\penalty0 321--350, 2012.

\bibitem[Harwood \& Drummond(2016)Harwood and Drummond]{harwood2016fanng}
Harwood, B. and Drummond, T.
\newblock Fanng: Fast approximate nearest neighbour graphs.
\newblock In \emph{Proceedings of the IEEE Conference on Computer Vision and
  Pattern Recognition}, pp.\  5713--5722, 2016.

\bibitem[Indyk \& Motwani(1998)Indyk and Motwani]{indyk1998approximate}
Indyk, P. and Motwani, R.
\newblock Approximate nearest neighbors: towards removing the curse of
  dimensionality.
\newblock In \emph{Proceedings of the thirtieth annual ACM symposium on Theory
  of computing}, pp.\  604--613. ACM, 1998.

\bibitem[Iwasaki \& Miyazaki(2018)Iwasaki and
  Miyazaki]{iwasaki2018optimization}
Iwasaki, M. and Miyazaki, D.
\newblock Optimization of indexing based on k-nearest neighbor graph for
  proximity search in high-dimensional data.
\newblock \emph{arXiv preprint arXiv:1810.07355}, 2018.

\bibitem[Jegou et~al.(2010)Jegou, Douze, and Schmid]{sift_gist}
Jegou, H., Douze, M., and Schmid, C.
\newblock Product quantization for nearest neighbor search.
\newblock \emph{IEEE transactions on pattern analysis and machine
  intelligence}, 33\penalty0 (1):\penalty0 117--128, 2010.

\bibitem[Karbasi et~al.(2015)Karbasi, Ioannidis, and
  Massouli{\'e}]{karbasi2015small}
Karbasi, A., Ioannidis, S., and Massouli{\'e}, L.
\newblock From small-world networks to comparison-based search.
\newblock \emph{IEEE Transactions on Information Theory}, 61\penalty0
  (6):\penalty0 3056--3074, 2015.

\bibitem[Keivani \& Sinha(2018)Keivani and Sinha]{keivani2018improved}
Keivani, O. and Sinha, K.
\newblock Improved nearest neighbor search using auxiliary information and
  priority functions.
\newblock In \emph{International Conference on Machine Learning}, pp.\
  2578--2586, 2018.

\bibitem[Kleinberg(2000)]{kleinberg2000small}
Kleinberg, J.
\newblock The small-world phenomenon: an algorithmic perspective.
\newblock In \emph{Proceedings of the thirty-second annual ACM symposium on
  Theory of computing}, pp.\  163--170, 2000.

\bibitem[Laarhoven(2018)]{laarhoven2018graph}
Laarhoven, T.
\newblock Graph-based time-space trade-offs for approximate near neighbors.
\newblock In \emph{34th International Symposium on Computational Geometry (SoCG
  2018)}, 2018.

\bibitem[Lin \& Zhao(2019)Lin and Zhao]{lin2019graph}
Lin, P.-C. and Zhao, W.-L.
\newblock Graph based nearest neighbor search: promises and failures.
\newblock \emph{arXiv. org}, 2019.

\bibitem[Malkov et~al.(2014)Malkov, Ponomarenko, Logvinov, and
  Krylov]{malkov2014approximate}
Malkov, Y., Ponomarenko, A., Logvinov, A., and Krylov, V.
\newblock Approximate nearest neighbor algorithm based on navigable small world
  graphs.
\newblock \emph{Information Systems}, 45:\penalty0 61--68, 2014.

\bibitem[Malkov \& Yashunin(2018)Malkov and Yashunin]{malkov2018efficient}
Malkov, Y.~A. and Yashunin, D.~A.
\newblock Efficient and robust approximate nearest neighbor search using
  hierarchical navigable small world graphs.
\newblock \emph{IEEE Transactions on Pattern Analysis and Machine
  Intelligence}, 2018.

\bibitem[May \& Ozerov(2015)May and Ozerov]{may2015computing}
May, A. and Ozerov, I.
\newblock On computing nearest neighbors with applications to decoding of
  binary linear codes.
\newblock In \emph{Annual International Conference on the Theory and
  Applications of Cryptographic Techniques}, pp.\  203--228. Springer, 2015.

\bibitem[Meiser(1993)]{meiser1993point}
Meiser, S.
\newblock Point location in arrangements of hyperplanes.
\newblock \emph{Information and Computation}, 106\penalty0 (2):\penalty0
  286--303, 1993.

\bibitem[Motwani et~al.(2007)Motwani, Naor, and Panigrahy]{motwani2007lower}
Motwani, R., Naor, A., and Panigrahy, R.
\newblock Lower bounds on locality sensitive hashing.
\newblock \emph{SIAM Journal on Discrete Mathematics}, 21\penalty0
  (4):\penalty0 930, 2007.

\bibitem[Navarro(2002)]{navarro2002searching}
Navarro, G.
\newblock Searching in metric spaces by spatial approximation.
\newblock \emph{The VLDB Journal}, 11\penalty0 (1):\penalty0 28--46, 2002.

\bibitem[O'Donnell et~al.(2014)O'Donnell, Wu, and Zhou]{o2014optimal}
O'Donnell, R., Wu, Y., and Zhou, Y.
\newblock Optimal lower bounds for locality-sensitive hashing (except when q is
  tiny).
\newblock \emph{ACM Transactions on Computation Theory (TOCT)}, 6\penalty0
  (1):\penalty0 5, 2014.

\bibitem[Pennington et~al.(2014)Pennington, Socher, and Manning]{glove}
Pennington, J., Socher, R., and Manning, C.~D.
\newblock {GloVe}: global vectors for word representation.
\newblock In \emph{Proceedings of the 2014 conference on empirical methods in
  natural language processing (EMNLP)}, pp.\  1532--1543, 2014.

\bibitem[Sablayrolles et~al.(2018)Sablayrolles, Douze, Schmid, and
  J{\'e}gou]{sablayrolles2018spreading}
Sablayrolles, A., Douze, M., Schmid, C., and J{\'e}gou, H.
\newblock Spreading vectors for similarity search.
\newblock In \emph{International Conference on Learning Representations}, 2018.

\bibitem[Shakhnarovich et~al.(2006)Shakhnarovich, Darrell, and
  Indyk]{shakhnarovich2006nearest}
Shakhnarovich, G., Darrell, T., and Indyk, P.
\newblock \emph{Nearest-neighbor methods in learning and vision: theory and
  practice (neural information processing)}.
\newblock The MIT press, 2006.

\bibitem[Valiant(2015)]{valiant2015finding}
Valiant, G.
\newblock Finding correlations in subquadratic time, with applications to
  learning parities and the closest pair problem.
\newblock \emph{Journal of the ACM (JACM)}, 62\penalty0 (2):\penalty0 13, 2015.

\bibitem[Wang et~al.(2012)Wang, Wang, Zeng, Tu, Gan, and Li]{wang2012scalable}
Wang, J., Wang, J., Zeng, G., Tu, Z., Gan, R., and Li, S.
\newblock Scalable {$k$-NN} graph construction for visual descriptors.
\newblock In \emph{2012 IEEE Conference on Computer Vision and Pattern
  Recognition}, pp.\  1106--1113, 2012.

\bibitem[Watts \& Strogatz(1998)Watts and Strogatz]{watts1998collective}
Watts, D.~J. and Strogatz, S.~H.
\newblock Collective dynamics of ‘small-world’networks.
\newblock \emph{Nature}, 393:\penalty0 440–442, 1998.

\bibitem[Wu et~al.(2008)Wu, Kumar, Quinlan, Ghosh, Yang, Motoda, McLachlan, Ng,
  Liu, Philip, et~al.]{wu2008top}
Wu, X., Kumar, V., Quinlan, J.~R., Ghosh, J., Yang, Q., Motoda, H., McLachlan,
  G.~J., Ng, A., Liu, B., Philip, S.~Y., et~al.
\newblock Top 10 algorithms in data mining.
\newblock \emph{Knowledge and information systems}, 14\penalty0 (1):\penalty0
  1--37, 2008.

\end{thebibliography}


\begin{thebibliography}{1}

\bibitem{becker2016new}
A.~Becker, L.~Ducas, N.~Gama, and T.~Laarhoven.
\newblock New directions in nearest neighbor searching with applications to
  lattice sieving.
\newblock In {\em Proceedings of the twenty-seventh annual ACM-SIAM symposium
  on Discrete algorithms}, pages 10--24, 2016.

\bibitem{bollobas1988diameter}
B.~Bollob{\'a}s and F.~R.~K. Chung.
\newblock The diameter of a cycle plus a random matching.
\newblock {\em SIAM Journal on discrete mathematics}, 1(3):328--333, 1988.

\bibitem{hamza1995smallest}
K.~Hamza.
\newblock The smallest uniform upper bound on the distance between the mean and
  the median of the binomial and poisson distributions.
\newblock {\em Statistics \& Probability Letters}, 23(1):21--25, 1995.

\bibitem{laarhoven2018graph}
T.~Laarhoven.
\newblock Graph-based time-space trade-offs for approximate near neighbors.
\newblock In {\em 34th International Symposium on Computational Geometry (SoCG
  2018)}, 2018.

\bibitem{levina2005lid}
E.~Levina and P.~J. Bickel.
\newblock Maximum likelihood estimation of intrinsic dimension.
\newblock In {\em Advances in neural information processing systems}, pages
  777--784, 2005.

\bibitem{penrose1999k}
M.~D. Penrose.
\newblock On k-connectivity for a geometric random graph.
\newblock {\em Random Structures \& Algorithms}, 15(2):145--164, 1999.

\bibitem{penrose2016connectivity}
M.~D. Penrose et~al.
\newblock Connectivity of soft random geometric graphs.
\newblock {\em The Annals of Applied Probability}, 26(2):986--1028, 2016.

\bibitem{watts1998collective}
D.~J. Watts and S.~H. Strogatz.
\newblock Collective dynamics of ‘small-world’networks.
\newblock {\em Nature}, 393:440–442, 1998.

\end{thebibliography}

\end{document}